\pgfplotsset{compat=newest}
\newcommand{\blind}{1}
\newcommand{\fm}{\mathfrak{m}}
\newcommand{\Tr}{\operatorname{Tr}}
\let\svthefootnote\thefootnote
\newcommand\freefootnote[1]{%
  \let\thefootnote\relax%
  \footnotetext{#1}%
  \let\thefootnote\svthefootnote%
}
\newtheorem{theorem}{Theorem}[section]
\newtheorem{remark}{Remark}[section]
\newtheorem{lemma}{Lemma}[section]
\newtheorem{assumption}{Assumption}[section]
\newtheorem{definition}{Definition}[section]
\newtheorem{corol}{Corollary}[section]
\numberwithin{equation}{section}
\numberwithin{algorithm}{section}
\def\bx{{\bf x}}
\def\by{{\bf y}}
\def\bz{{\bf z}}
\date{}
\begin{document}

	\def\spacingset#1{\renewcommand{\baselinestretch}%
		{#1}\small\normalsize} \spacingset{1}

	%%%%%%%%%%%%%%%%%%%%%%%%%%%%%%%%%%%%%%%%%%%%%%%%%%%%%%%%%%%%%%%%%%%%%%%%%%%%%%
	
	\if1\blind
	{
		\title{\bf Two sample test for covariance matrices in ultra-high dimension}
		\author{Xiucai Ding\thanks{Email: xcading@ucdavis.edu. XCD is an assistant professor at UC Davis. The work is partially supported by NSF DMS-2113489 and UC Davis Office of Research via a grant from COVID-19 Research Accelerator Funding Track.}\hspace{.2cm}, Yichen Hu \thanks{Email: ethhu@ucdavis.edu. YCH is a PhD student at UC Davis under the supervision of XCD.}, and
			Zhenggang Wang \thanks{Email: zggwang@ucdavis.edu. ZGW is  a postdoc at UC Davis under the support of XCD (via NSF DMS-2113489 and UC Davis Office of Research via a grant from COVID-19 Research Accelerator Funding Track).} \\
			Department of Statistics, UC Davis \\}
			%\freefootnote{The authors are sorted alphabetically. XCD proposed the project, wrote the paper, and conducted the theoretical analysis. YCH conducted the literature review,  finished the numerical simulations, real data analysis and analyzed the power of the method under the supervision of XCD. ZGW proofread the paper and provided some helpful discussions and ideas on Algorithm 3 and Theorem 3.1. }
	
		\maketitle
	} \fi
	
	\if0\blind
	{
		\bigskip
		\bigskip
		\bigskip
		\begin{center}
			{\LARGE\bf Title}
		\end{center}
		\medskip
	} \fi
	
	\bigskip
	\begin{abstract}
		In this paper, we propose a new test for testing the equality of two population covariance matrices in the ultra-high dimensional setting that the dimension is much larger than the sizes of both of the two samples. Our proposed methodology relies on a data splitting procedure and a comparison of a set of well selected eigenvalues of the sample covariance matrices on the split data sets. Compared to the existing methods, our methodology is adaptive in the sense that (i). it does not require specific assumption (e.g., comparable or balancing, etc.) on the sizes of two samples; (ii). it does not need quantitative or structural assumptions of the population covariance matrices; (iii). it does not need the parametric distributions or the detailed knowledge of the moments of the two populations. Theoretically, we establish the asymptotic distributions of the statistics used in our method and conduct the power analysis. We justify that our method is powerful under very weak alternatives. We conduct extensive numerical simulations and show that our method  significantly outperforms the existing ones both in terms of size and power. Analysis of two real data sets is also carried out to demonstrate the usefulness and superior performance of our proposed methodology. An $\texttt{R}$ package $\texttt{UHDtst}$ is developed for easy implementation of our proposed methodology.	   
	\end{abstract}
	
	\noindent%
	{\it Keywords:}  Two sample test, Covariance matrices, Ultra-high dimension, Random matrix theory
%	\vfill
	
%	\newpage
	\spacingset{1.3} 
	% DON'T change the spacing 1.9!
	
	\section{Introduction}

	Testing the equality of two population covariance matrices is of fundamental importance in  statistical analysis. For two samples $\mathcal{X}:=\left\{\bx_i \in \mathbb{R}^p, 1 \leq i \leq n_1 \right\}$ and $\mathcal{Y}:=\left\{\by_j \in \mathbb{R}^p, 1 \leq j \leq n_2 \right\}$ with population covariance matrices $\Sigma_1$ and $\Sigma_2$, respectively, researchers are interested in testing 
	\begin{equation}\label{nullhypothesis}
		\mathbf{H}_0: \Sigma_1=\Sigma_2.
	\end{equation}	
Testing (\ref{nullhypothesis}) is crucial in multivariate analysis and high dimensional statistics. 
First, the applications of many commonly used multivariate tests or techniques rely on whether the population covariance matrices are identical. For example, if two population covariance matrices are the same, the asymptotic distribution of  Hotelling's $T^2$ statistic, which is used to test the equality of means, will be much easier to compute \cite{MR1990662,yao2015large}. For another example, for classification, in order to correctly apply the tool of linear discriminant analysis (LDA), people need to check the equality of population covariance matrices \cite{MR1990662}.  Second, in many gene expression data analysis,  the two sample test for covariance matrices serves an important role in  understanding, classifying and selecting gene associations across different phenotypes, for example, see \cite{hu2010new, dudoit2002comparison}.

In this paper, we will propose a novel method to test (\ref{nullhypothesis}) in the ultra-high dimensional regime that 	
	\begin{equation}\label{eq_dimension}
		p \asymp n_1^{\alpha_1} \ \text{and} \  p \asymp n_2^{\alpha_2}, \ \text{for some constants} \ \alpha_1, \alpha_2>1 .
	\end{equation} 
	 We first summarize some related results and methodologies in Section \ref{sec_somerelatedresults} and then provide an overview of our approach in Section \ref{sec_summaryofourresults}.

		\subsection{Some related results}\label{sec_somerelatedresults}
In the literature of multivariate statistics, testing (\ref{nullhypothesis}) has been well studied in the low dimensional regime when the dimension $p$ is fixed and the sample sizes go to infinity.  For example, see \cite{MR1990662,oldold,Nagao,manly1987comparison, o1992robust,gupta1984distribution,perlman1980unbiasedness}. A common feature of these statistics is that they are based on likelihood ratios and utilize the eigenvalues of some sample covariance matrices.

On the other hand, the aforementioned methods may lose their validity as the dimension $p$ diverges with the sample sizes.  The main reason is that the likelihood ratio which is essentially a function of the eigenvalues of certain sample covariance matrices,  is no more consistent due to the bias caused by the inconsistency of sample covariance matrices \cite{10.1214/09-AOS694}. Motivated by this and based on results in random matrix theory, in the last decades, various modified or new statistical methodologies have been proposed in the high dimensional and comparable regime that $\alpha_1=\alpha_2=1$ in (\ref{eq_dimension}). We now list but a few in the following. In \cite{schott2007test}, under the assumption that both samples are Gaussian, the author proposed a statistic based on the Frobenius norm of the difference of the sample covariance matrices of the two samples. In \cite{10.1214/09-AOS694,zheng2012central,ZBY,ZLST}, assuming that $p<n_1$ or $p<n_2$ so that the precision matrices exist at least for one sample, the authors proposed several tests based on the eigenvalues of the $F$ matrices. The condition $p<n_1$ or $p<n_2$ is weakened to some extent later in \cite{zhang2017optimal}. We emphasize again that the aforementioned methods are all developed in the comparable regime that $\alpha_1=\alpha_2=1$ in (\ref{eq_dimension}). In fact, they usually assume that $p/n_1 \rightarrow y_1$ and $p/n_2 \rightarrow y_2$ and both $y_1$ and $y_2$ will appear in the asymptotic distributions of their proposed statistics. Moreover, to correctly implement their methodologies, people either need to assume the samples are Gaussian or have prior knowledge of the moments of the entries of the random vectors.

However, much less is touched in the ultra-high dimensional regime (\ref{eq_dimension}) except for a few ones under various additional assumptions. In \cite{srivastava2010testing}, under the assumption that both samples are Gaussian and $n_1 \asymp n_2$ (i.e., $\alpha_1=\alpha_2$ in (\ref{eq_dimension})), the authors proposed a test based some normalized traces of the sample covariance matrices. In \cite{LC}, assuming that $n_1 \asymp n_2$ (i.e., $\alpha_1=\alpha_2$ in (\ref{eq_dimension})), under certain regularity conditions on $\Sigma_1$ and $\Sigma_2,$   the authors proposed a test based on some $U$-statistics which is an unbiased estimator of the Frobenius norm of $\Sigma_1-\Sigma_2.$ Later on, with additional assumption that both $\Sigma_1$ and $\Sigma_2$ are banded, \cite{he2018high} proposed another  $U$-statistics based test but only targeting on the super-diagonal elements of the covariance matrices. 
Finally, in \cite{cai2013two}, under the assumption that $n_1 \asymp n_2$ (i.e., $\alpha_1=\alpha_2$ in (\ref{eq_dimension})), some moment assumptions on the random vectors and certain sparsity assumptions on $\Sigma_1$ and $\Sigma_2$, the authors introduced a test based on the maximum standardized element-wise differences between the sample covariance matrices which can be computationally very expensive. In summary, all the existing methods concerning the ultra-high dimensional setting (\ref{eq_dimension}) require that the sample sizes are comparably large $n_1 \asymp n_2$ (i.e., $\alpha_1=\alpha_2$ in (\ref{eq_dimension})). Moreover, they need to impose some quantitative or structural assumptions on $\Sigma_1$ and $\Sigma_2$ and distributional assumptions on the random vectors.  

Motivated by the above issues, in the current paper, we propose a novel methodology to test (\ref{nullhypothesis}) in the ultra-high dimensional setting (\ref{eq_dimension}). Our approach  does not need  assumptions on the sample sizes $n_1, n_2,$ or  quantitative or structural assumptions on the population covariance matrices. Moreover, we do not require the random vectors to have specific distributions like Gaussian. An overview of our method will be given in Section \ref{sec_summaryofourresults}.

\subsection{An overview of our method}\label{sec_summaryofourresults}

In contrast to the methods developed in \cite{srivastava2010testing,LC, he2018high,cai2013two}, which all directly compare the entries of the sample covariance matrices, our proposed approach utilizes the eigenvalues of the sample covariance matrices. However, if we directly compare all the eigenvalues, it can result in several issues. First, since the values of the sample sizes $n_1$ and $n_2$ are different in general, a direct comparison  can lead to bias especially when their orders are different. For example, if $n_1 \gg n_2,$ in our regime (\ref{eq_dimension}),  the sample $\mathcal{Y}$ will have much fewer nonzero eigenvalues to be considered. Second, and most importantly, as has been demonstrated in \cite{10.1214/09-AOS694,DingWang2023,yao2015large,ZBY}, the distribution of the statistics that utilize all the eigenvalues usually involves more unknown quantities like the first four moments of the random samples and the detailed information of $\Sigma_1$ and $\Sigma_2.$ 

To address the above issues, inspired by the recent developments in random matrix theory \cite{10.1214/20-AIHP1086, DingWang2023}, we only compare  a subset of the eigenvalues of the two sample covariance matrices. This resolves the issue of using too many eigenvalues of one sample covariance matrix. Moreover, as will be seen in Corollary \ref{cor_twosample}, under the null hypothesis, the asymptotic distribution of the statistic is very universal in the sense that it does not require the knowledge of any particular information of the population covariance matrices and the moments of the random vectors.  In fact, as can be seen in Theorem \ref{thm_onesample}, regardless of whether (\ref{nullhypothesis}) holds, only the mean parts encode the information of the population covariance matrices. This further makes it easier to study the power of the statistics which shows that our method can reject the null hypothesis under very weak alternative.   

Our proposed methodology (c.f. Algorithm \ref{alg:bootstrapping}) will be presented in Section \ref{sec_methodology}. It is comprised of three important components. The first one is a data splitting procedure (c.f. Algorithm \ref{alg0}) which divides the data in $\mathcal{X} \cup \mathcal{Y}$ into three parts, denoted as $\mathcal{X}^s, \mathcal{Y}^s$ and $\mathcal{Z}^s$ with the same  size $n$ satisfying (\ref{eq_datasizecontrol}). $\mathcal{X}^s$ and $\mathcal{Y}^s$ are the testing beds and   $\mathcal{Z}^s$ is used to generate some useful quantities for us to choose a subset of the eigenvalues of the sample covariance matrices associated with $\mathcal{X}^s$ and $\mathcal{Y}^s.$ The selection of the subset of the eigenvalues is done via the choice of a location parameter $\gamma$ (c.f. (\ref{eq_median})) and a tuning bandwidth $\eta_0$ (c.f. Algorithm \ref{algo2}). Both parameters can be chosen automatically and we basically only compare the eigenvalues lying within the interval $[\gamma-\eta_0, \gamma+\eta_0]$ which leads to our statistic in (\ref{eq_realstatistic}).  The construction of the above statistic only uses one split data set so that some samples may be omitted. In order to use as much information as possible and stabilize our procedure, our second component of the methodology is to repeat the  the splitting procedure  multiple times. Then instead of using the statistic in (\ref{eq_realstatistic}) once, we generate a sequence of such statistics and construct   a summary statistic called \emph{decision ratio} (c.f. (\ref{eq_decisionrationdefinition})). Such a procedure will reduce the variability of testing (\ref{nullhypothesis}) compared with only one data splitting. The last component of our methodology is to provide a critical value $\delta$ for the decision ratio to suggest whether we should accept or reject the null hypothesis. This will be done by a calibration  procedure (c.f. Algorithm \ref{algo3}) which utilizes the very universal properties of our statistics under the null hypothesis (\ref{nullhypothesis}). 

On the theoretical side, we establish the asymptotic distributions for our statistics and decision ratio in Section \ref{sec_normality}. Moreover, we conduct detailed power analysis for our methodology in Section \ref{sec_poweranalysis} which shows that our proposed method will be powerful under weak alternatives. We test our proposed methodology and compare it with the state-of-the-art methods  \cite{srivastava2010testing,LC, he2018high,cai2013two} on both simulated and two real data sets. The numerical results show that our proposed method outperforms the existing ones.

			\subsection{Organization of the paper}\label{sec_org}
				    
	    The rest  of the paper is organized as follows. In Section \ref{sec_methodology}, we introduce our test procedure.  In Section \ref{sec_theoretical}, we provide the theoretical guarantees for our procedure by establishing the asymptotic distributions of the test statistics and conducting power analysis. In Section \ref{sec_numerical}, we  compare our proposed methodology with some of the existing methods via  Monte Carlo simulations and two real data analysis.  An online supplementary file is enclosed to provide the technical proofs in Section \ref{sec_techinicalproof} and the arguments of tuning parameter selection in Section \ref{suppl_tuningparameterselection}.  An $\texttt{R}$ package $\texttt{UHDtst}$ is developed for easy implementation \footnote{\url{https://github.com/xcding1212/UHDtst}}.    
	    %\vspace*{0.1in}

	\section{Methodology}\label{sec_methodology}

%	\subsection{Our proposed test procedure}
	
In this section, we introduce our proposed methodology. 	

\subsection{Construction of test statistics}
Our first step is the data splitting procedure via random sampling. For some integer $n$ satisfying that
\begin{equation}\label{eq_datasizecontrol}
n<\mathsf{N}, \ \text{where} \ \mathsf{N}:=\min \left\{ \frac{\max\{n_1, n_2\}}{2}, n_1, n_2 \right\},
\end{equation}
we follow Algorithm \ref{alg0} to split the data.

	\begin{algorithm}[!ht]
		\caption{\bf Data splitting}
		\label{alg0}
		\normalsize
		\begin{flushleft}
			\vspace{0pt}
			\noindent{\bf Inputs:} $n,$ the data sets $\mathcal{X}$ and $\mathcal{Y}.$   		
			
			\noindent{\bf Step one:} Randomly sample $n$ data points from $\mathcal{X}$ and $\mathcal{Y},$ denoted as $\mathcal{X}^s$ and $\mathcal{Y}^s,$ respectively.   
			
			\noindent{\bf Step two:} For the data set with more samples, say $\mathcal{X}$ (i.e., $n_1 \geq n_2$), we randomly sample $n$ data points from $\mathcal{X} \backslash \mathcal{X}^s$, denoted as $\mathcal{Z}^s.$

			\noindent{\bf Output:} The split data sets $\mathcal{X}^s, \mathcal{Y}^s$ and $\mathcal{Z}^s.$ 
		\end{flushleft}
	\end{algorithm}
	
Algorithm \ref{alg0} will generate three independent data sets with the same sample size $n$. $\mathcal{X}^s$ and $\mathcal{Y}^s$ are the test beds for (\ref{nullhypothesis}) and $\mathcal{Z}^s$ is the reference data set to generate $\gamma$ as discussed in Section \ref{sec_summaryofourresults}. Based on the split data sets, we now construct our test statistics. Let the sample covariance matrices associated with $\mathcal{X}^s, \mathcal{Y}^s$ and $\mathcal{Z}^s$ be
	\begin{equation*}
	\mathcal{Q}_x=\frac{1}{\sqrt{pn}} \sum_{\bx_i \in \mathcal{X}^s} (\bx_i-\bar{\bx}) (\bx_i-\bar{\bx})^\top, \ \mathcal{Q}_y=\frac{1}{\sqrt{pn}} \sum_{\by_i \in \mathcal{Y}^s} (\by_i-\bar{\by}) (\by_i-\bar{\by})^\top,  
\end{equation*}	 
and 
\begin{equation*}
\mathcal{Q}_z=\frac{1}{\sqrt{pn}} \sum_{\bz_i \in \mathcal{Z}^s} (\bz_i-\bar{\bz}) (\bz_i-\bar{\bz})^\top,
\end{equation*}
where $\bar{\bx}, \bar{\by}$ and $\bar{\bz}$ are their sample means, respectively. Note that the scaling $(pn)^{-1/2}$ is different from the typical $n^{-1}.$ As will be seen in Section \ref{sec_theoretical}, it is employed to address the ultra-high dimensionality (\ref{eq_dimension}).

	 In what follows, we always denote the eigenvalue sequences associated with the sample covariance matrix of $\mathcal{X}^s$ as $\{\lambda_j\}$, those associated with $\mathcal{Y}^s$ as $\{\mu_j\}$ and those associated with $\mathcal{Z}^s$ as $\{\gamma_j\}.$ We also assume that they are in the decreasing order. Denote
	\begin{equation}\label{eq_median}
		\gamma:=\text{Median}\{\gamma_j\}. 
	\end{equation}
For the mollifier 
	\begin{equation}\label{eq_mathcalK}
		\mathcal{K}(x):= \begin{cases}0 & |x| \geq 1.05 \\ 1 & |x| \leq 1 \\ \exp \left(\frac{1}{(0.05)^2}-\frac{1}{(0.05)^2-(x+1)^2}\right) & -1.05<x<-1 \\ \exp \left(\frac{1}{(0.05)^2}-\frac{1}{(0.05)^2-(x-1)^2}\right) & 1<x<1.05\end{cases}
	\end{equation}
	and some  $\eta_0 \equiv \eta_0(n) \ll 1$ chosen from Algorithm \ref{algo2}, we will use the following statistic
	\begin{equation}\label{eq_realstatistic}
		\mathbb{T}:=\mathbb{T}_x-\mathbb{T}_y,
	\end{equation}
	where 
	\begin{equation}\label{eq_2sampleindividualstatdef}
		\mathbb{T}_{x}:=\sum_{j=1}^n \left( \frac{\lambda_j-\gamma}{\eta_0} \right) \mathcal{K} \left( \frac{\lambda_j-\gamma}{\eta_0} \right), \ \  
		\mathbb{T}_{y}:=\sum_{j=1}^n \left( \frac{\mu_j-\gamma}{\eta_0} \right) \mathcal{K} \left( \frac{\mu_j-\gamma}{\eta_0} \right).  
	\end{equation}

We provide a few remarks. First, $\mathcal{K}(x)$ is essentially a smooth version of the indicator function that $\mathcal{I}(x)=1_{\{|x| \leq 1\}}.$ The reasoning behind is mainly technical due to the use of Helffer-Sj{\" o}strand formula; see the discussion around (\ref{eq_T1integral}). Second, as shown in Figure \ref{fig_highlevelglobalandlocal}, according to the definition in (\ref{eq_mathcalK}), heuristically, instead of using all the eigenvalues, the statistics in (\ref{eq_2sampleindividualstatdef}) basically only sum up the eigenvalues inside the small neighborhood $[\gamma-\eta_0, \gamma+\eta_0],$ after being properly scaled and shifted.

	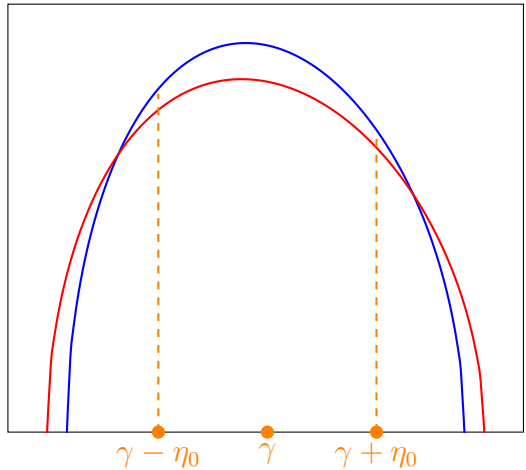
\begin{figure}[htp]
		\centering
		\begin{tikzpicture}
			\begin{axis}[
				title={},
				grid=none,
				ticks=none,
				ymin=0.01
				]
				
				% Define parameters
				\def\aOne{1} % You can modify this value
				\def\bOne{1} % You can modify this value
				\def\r{20} % You can modify this value
				\def\aTwo{1} % You can modify this value
				\def\bTwo{1.1} % You can modify this value
				
				\addplot[blue,	domain=\aOne * \r - 2 * \bOne:\aOne * \r + 2 * \bOne,
				samples=100,  thick,
				smooth] 
				{(\aOne * \r * sqrt((x - (\aOne * \r - 2 * \bOne)) * (\aOne * \r + 2 * \bOne - x))) / (2 * \bOne^2 * pi * x)};
				\addplot[red,	domain=\aTwo * \r - 2 * \bTwo:\aTwo *\r + 2 * \bTwo,
				samples=100,   thick,
				smooth] 
				{(\aTwo * \r * sqrt((x - (\aTwo * \r - 2 * \bTwo)) * (\aTwo * \r + 2 * \bTwo - x))) / (2 * \bTwo^2 * pi * x)};
			\end{axis}
			\fill[orange] (3.45,0) circle (2.5pt) node[align=center, below]{$\gamma$} ; 
			\fill[orange] (2,0) circle (2.5pt) node[align=center, below]{$\gamma-\eta_0$} ; 
			\fill[orange] (4.9,0) circle (2.5pt) node[align=center, below]{$\gamma+\eta_0$} ; 
			\draw[orange,dashed,thick] (2,0) -- (2,4.5);
			\draw[orange,dashed,thick] (4.9,0) -- (4.9,4);
		\end{tikzpicture}
		\caption{Illustration of the statistics $\mathbb{T}_x$ and $\mathbb{T}_y$.}\label{fig_highlevelglobalandlocal}
	\end{figure}

Denote 
	\begin{equation}\label{eq_variancedefinition}
		\mathsf{v}= \frac{1}{2 \pi^2} \int_{\mathbb{R}} \int_{\mathbb{R}} \frac{\left(\mathcal{K}\left(x_1\right)-\mathcal{K}\left(x_2\right)\right)^2}{\left(x_1-x_2\right)^2} \mathrm{~d} x_1 \mathrm{d} x_2. 
	\end{equation}
	We will see later from Section \ref{sec_normality} that under the null hypothesis (\ref{nullhypothesis}), $\mathbb{T}$ will be asymptotically $\mathcal{N}(0, 2 \mathsf{v}).$ Therefore, under the nominal level $\alpha,$ we should reject the null hypothesis if $|\mathbb{T}|>z_{1-\alpha/2},$ where $z_{1-\alpha/2}$ is the $(1-\alpha/2)\%$ quantile of the $\mathcal{N}(0,1)$ random variable.
	
\subsection{Test procedure}	
The construction of the statistics in (\ref{eq_realstatistic}) only uses one split data set so that some samples may be omitted. In order to use as much information as possible and stabilize our procedure, we can repeat the data splitting Algorithm \ref{alg0} and construct the statistics (\ref{eq_realstatistic})  multiple times. Roughly speaking, instead of using (\ref{eq_realstatistic}) once, we use Algorithm \ref{alg0} to generate a sequence of such statistics and construct   a summary statistic called \emph{decision ratio} (c.f. (\ref{eq_decisionrationdefinition})). Such a procedure will reduce the variability of testing (\ref{nullhypothesis}) with only one data splitting. 

 Before stating our inferential procedure, we define the \emph{efficient data splitting} regime.  
\begin{definition}\label{defn_inefficient} We call $\mathcal{X}^s, \mathcal{Y}^s, \mathcal{Z}^s$  from Algorithm \ref{alg0} is an $\epsilon$-efficient data splitting if 
\begin{equation*}
\max\{|\gamma-\mu_1|, |\gamma-\mu_n| \}\leq \operatorname{Range}(\mathcal{Y}^s)-\epsilon \ \text{and} \ \max\{|\gamma-\lambda_1|, |\gamma-\lambda_n|\} \leq \operatorname{Range}(\mathcal{X}^s)-\epsilon, 
\end{equation*}
where $\operatorname{Range}(\cdot)$ is the range of the eigenvalues of the sample covariance matrices of some data set. 
\end{definition}

\begin{remark}\label{rmk_efficientsampling}
We provide a remark on the necessity of introducing Definition \ref{defn_inefficient}. It essentially states that the ranges of the eigenvalues of $\mathcal{Q}_x$ and $\mathcal{Q}_y$ must overlap and $\gamma$ computed from $\mathcal{Q}_z$ should lie in the overlapped interval. The main reason of introducing such a definition is to boost the power of our testing procedure. To ease our explanation, we use Figure \ref{fig_jillustrationofefficientsampling} below for illustrations. In Figure \ref{fig_jillustrationofefficientsampling}, we list three subfigures with some examples for various cases of split data sets.  The red and blue curves represent 
the possible limiting distributions of the eigenvalues of $\mathcal{X}^s$ and $\mathcal{Y}^s$ and the dots show the possible five different locations (in green, black, orange, magenta and cyan) of $\gamma$ computed from $\mathcal{Z}^s.$ 

First, in Figure \ref{subfigurea}, regardless of the locations of $\gamma,$ it does not satisfy Definition \ref{defn_inefficient}. In fact, as can be concluded from Lemma \ref{lem_rigidity}, in this setting, we should reject the null hypothesis in (\ref{nullhypothesis}) even without   doing any hypothesis testing. However, if we directly use our statistic (\ref{eq_realstatistic}) and $\gamma$ happens to be in the orange position, since a small neighborhood of $\gamma$ contains no eigenvalues of $\mathcal{Q}_x$ and $\mathcal{Q}_y$ so that both $\mathbb{T}_x$ and $\mathbb{T}_y$ are small, we may fail to reject (\ref{nullhypothesis}). To address this issue and boost the power, as can be seen from our proposed Algorithm \ref{alg:bootstrapping} (c.f. (\ref{eq_enhancepower})), we will directly reject $\mathbf{H}_0$ if a data splitting like Figure \ref{subfigurea} happens. Second, the data splitting will be efficient if $\gamma$ is in the orange spot in Figure \ref{subfigureb} or black or orange or magenta spot in Figure \ref{subfigurec}. To boost the power, in our Algorithm \ref{alg:bootstrapping}, for cases in Figures \ref{subfigureb} and \ref{subfigurec}, we will only consider efficient splitting (c.f. (\ref{eq_datasplittingalgorithminside})).  

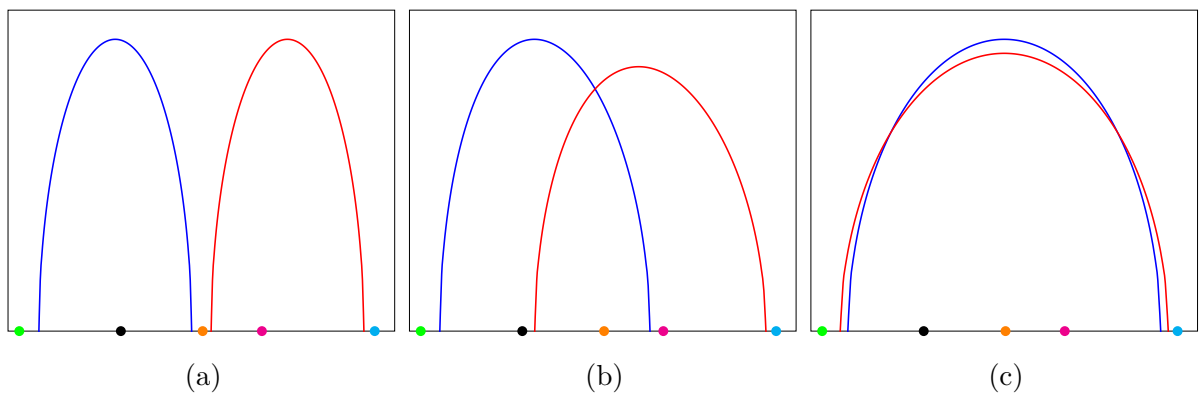
\begin{figure}[hpt]
\begin{subfigure}{0.32\textwidth}
	\begin{tikzpicture}[scale=0.75]
	\begin{axis}[
	title={},
	grid=none,
	ticks=none,
	ymin=0.01
	]
			
			% Define parameters
			\def\aOne{1}
			\def\aTwo{1}
			\def\bOne{1.45}
			\def\bTwo{1}
			\def\r{10}						\addplot[domain=\aOne*\r-2*sqrt(\aTwo):\aOne*\r+2*sqrt(\aTwo), samples=100, smooth, thick, blue] 
			{sqrt(4 * \aTwo - (x - \aOne * \r)^2) / (2 * pi * \aTwo)};
			\addplot[domain=\bOne*\r-2*sqrt(\bTwo):\bOne*\r+2*sqrt(\bTwo), samples=100, smooth, thick,  red] 
			{sqrt(4 * \bTwo - ((x - \bOne * \r)^2)) / (2 * pi * \bTwo)};
		\end{axis}
		\fill[green] (0.2,0) circle (2.5pt);
		\fill[black] (2,0) circle (2.5pt);
		\fill[orange] (3.45,0) circle (2.5pt); 
		\fill[magenta] (4.5,0) circle (2.5pt);
		\fill[cyan] (6.5,0) circle (2.5pt);
	\end{tikzpicture}
	\caption{}\label{subfigurea}
\end{subfigure}
\begin{subfigure}{0.32\textwidth}

	\begin{tikzpicture}[scale=0.75]
	\begin{axis}[
	title={},
	grid=none,
	ticks=none,
	ymin=0.01
	]
			
			% Define parameters
			\def\aOne{1} % You can modify this value
			\def\bOne{1} % You can modify this value
			\def\r{20} % You can modify this value
			\def\aTwo{1.1} % You can modify this value
			\def\bTwo{1.1} % You can modify this value
			
			\addplot[blue,	domain=\aOne * \r - 2 * \bOne:\aOne * \r + 2 * \bOne,
			samples=100,  thick,
			smooth] 
			{(\aOne * \r * sqrt((x - (\aOne * \r - 2 * \bOne)) * (\aOne * \r + 2 * \bOne - x))) / (2 * \bOne^2 * pi * x)};
			\addplot[red,	domain=\aTwo * \r - 2 * \bTwo:\aTwo *\r + 2 * \bTwo,
			samples=100,   thick,
			smooth] 
			{(\aTwo * \r * sqrt((x - (\aTwo * \r - 2 * \bTwo)) * (\aTwo * \r + 2 * \bTwo - x))) / (2 * \bTwo^2 * pi * x)};
		\end{axis}
				\fill[green] (0.2,0) circle (2.5pt);
		\fill[black] (2,0) circle (2.5pt);
		\fill[orange] (3.45,0) circle (2.5pt); 
		\fill[magenta] (4.5,0) circle (2.5pt);
		\fill[cyan] (6.5,0) circle (2.5pt);
	\end{tikzpicture}
		\caption{}\label{subfigureb}	
\end{subfigure}
\begin{subfigure}{0.32\textwidth}
	\begin{tikzpicture}[scale=0.75]
	\begin{axis}[
	title={},
	grid=none,
	ticks=none,
	ymin=0.01
	]
			
			% Define parameters
			\def\aOne{1}
			\def\aTwo{1}
			\def\bOne{1}
			\def\bTwo{1.1}
			\def\r{10}			\addplot[domain=\aOne*\r-2*sqrt(\aTwo):\aOne*\r+2*sqrt(\aTwo), samples=100, smooth, thick, blue] 
			{sqrt(4 * \aTwo - (x - \aOne * \r)^2) / (2 * pi * \aTwo)};
			\addplot[domain=\bOne*\r-2*sqrt(\bTwo):\bOne*\r+2*sqrt(\bTwo), samples=100, smooth, thick,  red] 
			{sqrt(4 * \bTwo - ((x - \bOne * \r)^2)) / (2 * pi * \bTwo)};
		\end{axis}
			\fill[green] (0.2,0) circle (2.5pt);
			\fill[black] (2,0) circle (2.5pt);
		\fill[orange] (3.45,0) circle (2.5pt); 
		\fill[magenta] (4.5,0) circle (2.5pt);
		\fill[cyan] (6.5,0) circle (2.5pt);
	\end{tikzpicture}
		\caption{}\label{subfigurec}
\end{subfigure}
\caption{Possible examples of the LSDs of split data sets $\mathcal{X}^s$ and $\mathcal{Y}^s$ and locations of $\gamma$ from $\mathcal{Z}^s.$}
\label{fig_jillustrationofefficientsampling}
\end{figure}  
\end{remark}

We now propose our two sample test procedure in Algorithm \ref{alg:bootstrapping}. The algorithm can be implemented automatically using our $\texttt{R}$ package $\texttt{UHDtst}.$

%\clearpage	
	\begin{algorithm}[htp]
		\caption{\bf Two sample test procedure}
		\label{alg:bootstrapping}
		
		\normalsize
		\begin{flushleft}
			\vspace{0pt}
			
			\noindent{\bf Inputs:} $n,$ type I error $\alpha,$ and the data sets  $\mathcal{X}$ and $\mathcal{Y}.$
			
			\noindent{\bf Step one:} Run Algorithm \ref{alg0} $\mathsf{K}$ times (say $\mathsf{K}=1,000$) and record the split data sets as $(\mathcal{X}^s_i, \mathcal{Y}_i^s, \mathcal{Z}_i^s),$  whose associated eigenvalues as $(\{\lambda_j^i\}, \{\mu_j^i\}, \{\gamma_j^i\}), 1\leq j\leq n, 1 \leq i \leq \mathsf{K}.$ For $1 \leq i \leq \mathsf{K},$ let the ranges of $\{\lambda_j^i\}, \{\mu_j^i\}$ be as $\mathsf{R}_i(x), \mathsf{R}_i(y),$ respectively.  For $1 \leq i \leq \mathsf{K}$ and a given small positive value $\epsilon_1$ (say $\epsilon_1=0.05$), if
\begin{equation}\label{eq_enhancepower} 
\max\{|\lambda^i_1-\mu^i_n|, |\mu^i_1-\lambda^i_n|\}>\mathsf{R}_i(x)+\mathsf{R}_i(y)+\epsilon_1,
\end{equation}
we record that $\mathsf{c}_i=1$ and denote $\mathcal{S}_0:=\{1 \leq i \leq \mathsf{K}| \ (\ref{eq_enhancepower}) \ \text{is satisfied}\}.$

			\noindent{\bf Step two:}  Compute the median values for $\{\gamma_j^i\}$  as in (\ref{eq_median}) and denote them as $\{\gamma^i\}.$ For some small value $\epsilon$ (say $\epsilon=0.05$), denote the set 
\begin{equation}\label{eq_datasplittingalgorithminside}			
			\mathcal{S}_1:=\{ \{1,2,\cdots, \mathsf{K}\} \backslash \mathcal{S}_0| \text{Definition} \ \ref{defn_inefficient} \ \text{is satisfied} \}. 
\end{equation}			
If $\mathcal{S}_0 \cup \mathcal{S}_1=\emptyset,$ redo Steps one and two until $\mathcal{S}_0 \cup \mathcal{S}_1 \neq \emptyset.$ 					
			
			\noindent{\bf Step three:} For $i \in  \mathcal{S}_1,$  together with $(\{\lambda_j^i\}, \{\mu_j^i\}),$ run Algorithm \ref{algo2} from our supplement to choose a sequence of tuning parameters $\{\eta_0^i \}$. Using the above quantities,  construct a sequence of statistics $\mathbb{T}_i$ following (\ref{eq_realstatistic}) and (\ref{eq_2sampleindividualstatdef}).

			\noindent{\bf Step four:} For $i \in  \mathcal{S}_1$ and the given type one error $\alpha,$  we record 
			\begin{equation*}
				\mathsf{c}_i= \mathbf{1}(|\mathbb{T}_i| \geq z_{1-\alpha/2}\sqrt{2 \mathsf{v}}),  
			\end{equation*}
			where $z_{1-\alpha/2}$ is the $(1-\alpha/2)\%$ quantile of the $\mathcal{N}(0,1)$ random variable.

			\noindent{\bf Step five:} Calculate the decision ratio (DR) 
			\begin{equation}\label{eq_decisionrationdefinition}
				\mathsf{DR}=\frac{1}{|\mathcal{S}_0\cup \mathcal{S}_1|} \sum_{i=1}^\mathsf{|\mathcal{S}_0 \cup \mathsf{S}_1|} \mathsf{c}_i.
			\end{equation}  
			
			\noindent{\bf Output:} Reject the null hypothesis (\ref{nullhypothesis}) if $\mathsf{DR}>\delta.$ Here $\delta>\alpha$ is some control threshold which can be tuned using Algorithm \ref{algo3} from our supplement. 
		\end{flushleft}
	\end{algorithm} 
	
%\clearpage

\begin{remark}\label{rmk_goodgoodgood}
Several remarks are in order on Algorithm \ref{alg:bootstrapping}. First, as discussed in Remark \ref{rmk_efficientsampling}, Steps one and two are mainly employed to increase the power under the alternative. In fact, under the null hypothesis when (\ref{nullhypothesis}) holds, as can be seen from the proof of Corollary \ref{coro_bionomial}, with high probability, (\ref{eq_setdiscuss}) holds.
In other words, (\ref{eq_enhancepower}) and (\ref{eq_datasplittingalgorithminside}) will be skipped and all the split data sets will be used. Second, as will be seen in Corollary \ref{coro_bionomial}, when the null hypothesis (\ref{nullhypothesis}) holds, $\{\mathsf{c}_i\}$ can be asymptotically regarded as a sequence of i.i.d. Bernoulli random variables with probability $\mathsf{p}=\alpha.$ Consequently, when $n$ is sufficiently large, asymptotically, it suffices to check $\mathbf{H}_0:  \mathsf{p}=\alpha \ \operatorname{Vs} \ \mathbf{H}_a: \mathsf{p}>\alpha $ which can be done using the Binomial test or its Gaussian approximation when $\mathsf{K}$ is large. That is, under the nominal level $\alpha$, we need to reject the null hypothesis if  
\begin{equation}\label{eq_theoreticalanalysisofDR}
\mathsf{DR}>\frac{1}{\mathsf{K}} \mathcal{B}_{\mathsf{K}, \alpha}(1-\alpha), \ \ \text{or} \  \ \mathsf{DR}>\alpha+z_{1-\alpha/2} \frac{\sqrt{\alpha(1-\alpha)}}{\sqrt{\mathsf{K}}}, 
\end{equation}
where $ \mathcal{B}_{\mathsf{K}, \alpha}(1-\alpha)$ is the $(1-\alpha)\%$ quantile of a binomial distribution with parameters $\mathsf{K}$ and $\mathsf{\alpha};$ see Corollary \ref{coro_bionomial} for more discussions.  
\end{remark}

\begin{remark}\label{rmk_tuning}
We point out that in order to implement Algorithm \ref{alg:bootstrapping}, several parameters need to be chosen. The first one is the split sample size $n$ which should satisfy (\ref{eq_datasizecontrol}).    In both theoretical and numerical study, we find that the accuracy and powerfulness of Algorithm \ref{alg:bootstrapping} is robust against the choice of $n.$ In our $\mathtt{R}$ package $\texttt{UHDtst}$, for $\mathsf{N}$ in (\ref{eq_datasizecontrol}), we set the default value that $n=\mathsf{N}-5.$ This not only allows us to choose $n$ to be as close as possible to $\mathsf{N}$ but also enables us to choose a reasonably large value of $\mathsf{K}$ in Step one of Algorithm \ref{alg:bootstrapping}. 

The second one is the window size in Step three (c.f. essentially $\eta_0$ in (\ref{eq_2sampleindividualstatdef})). As mentioned before, $\eta_0$ can be regarded as some bandwidth which controls the number of eigenvalues used in the test. Inspired by this aspect, in Algorithm \ref{algo2} of our supplement, we provide some smoothing based approach to choose $\eta_0;$ see Section \ref{suppl_tuningparameterselection} of our supplement for more details. 

The last parameter is the threshold $\delta.$ As discussed in Remark \ref{rmk_goodgoodgood}, according to (\ref{eq_theoreticalanalysisofDR}), our theory has provided some theoretically justified values for $\delta$ when $n$ is sufficiently large. For finite $n$, in order to improve the accuracy and power, in Algorithm \ref{algo3} of our supplement, we provide a calibration procedure  to choose $\delta.$ The motivation is inspired by the results in Corollary \ref{cor_twosample} that when the null hypothesis (\ref{nullhypothesis}) holds, the distribution of (\ref{eq_realstatistic}) only relies on (\ref{eq_variancedefinition}) which is irrelevant of  the matrix $\Sigma$ that $\Sigma_1 = \Sigma_2=\Sigma$. Therefore, we can calibrate a $\delta$ for any combinations of $(n_1, n_2, p, \mathsf{K})$ simply using i.i.d. multivariate Gaussian samples for both $\mathcal{X}$ and $\mathcal{Y};$ see Section \ref{suppl_tuningparameterselection} for more details.       
\end{remark}

%	\subsection{Tuning parameter selection}

	\section{Theoretical guarantees}\label{sec_theoretical}

In this section, we analyze the proposed Algorithm \ref{alg:bootstrapping}. Especially, we
study the asymptotic distributions of our proposed statistics (\ref{eq_realstatistic}) and (\ref{eq_2sampleindividualstatdef}). We will also examine the power of the statistic (\ref{eq_realstatistic}).  For a $k \times k$ symmetric matrix $H$, its empirical spectral distribution (ESD) is defined as $\mu_H:=\frac{1}{k} \sum_{i=1}^k \delta_{\lambda_i(H)}$, where $\delta$ is the Dirac's delta function and $\left\{\lambda_i(H)\right\}$ are the eigenvalues of $H$. For any probability measure $\nu$ defined on $\mathbb{R}$, its Stieltjes transform is defined as
	\begin{equation}\label{eq_definitionStieltjestransform}
	m_\nu(z)=\int \frac{1}{x-z} \mathrm{~d} \nu(x),
	\end{equation}
	where $z \in \mathbb{C}_{+}:=\{E+\mathrm{i} \eta: E \in \mathbb{R}, \eta>0\}$.   For two sequences of deterministic positive values $\{a_n\}$ and $\{b_n\},$ we write $a_n=\mathrm{O}(b_n)$ if $a_n \leq C b_n$ for some positive constant $C>0.$ In addition, if both $a_n=\mathrm{O}(b_n)$ and $b_n=\mathrm{O}(a_n),$ we write $a_n \asymp b_n.$ Moreover, we write $a_n=\mathrm{o}(b_n)$ if $a_n \leq c_n b_n$ for some positive sequence $c_n \downarrow 0.$ Moreover, for a sequence of random variables $\{x_n\}$ and positive real values $\{a_n\},$ we use $x_n=\mathrm{O}_{\mathbb{P}}(a_n)$ to state that $x_n/a_n$ is stochastically bounded. Similarly, we use $x_n=\mathrm{o}_{\mathbb{P}}(a_n)$ to say that $x_n/a_n$ converges to zero in probability.

	\subsection{Some background in random matrix theory}
	
In this section, we provide some background and preliminary results.  Throughout this section, we will need the following mild assumptions. 		
	\begin{assumption}\label{assum_basic}
		We assume that the following conditions are satisfied:
		\begin{enumerate}
			\item For dimensionality, we assume (\ref{eq_dimension}) holds.  
			\item We assume that the two samples are i.i.d. generated according to $\mathbf{x}_i=\Sigma_1^{1/2} \bm{x}_{i} , 1 \leq i \leq n_1,$ and $\mathbf{y}_j=\Sigma_2^{1/2} \bm{y}_{j} \in \mathbb{R}^p, 1 \leq j \leq n_2,$ where the entries of $\bm{x}_i=(x_{is})$ and $\bm{y}_j=(y_{jt}), 1 \leq i \leq n_1, \ 1 \leq j \leq n_2,$ are independent and satisfy that for some positive sequence $(C_k)_{k\in\mathbb{N}}$ that for all $k \in \mathbb{N}$
\begin{equation}\label{eq_onecondition}				
				\mathbb{E} x_{is}=0, \quad \mathbb{E} x_{i s}^2=1, \quad \mathbb{E}\left| x_{i s}\right|^k \leq C_k,
\end{equation}
\begin{equation}\label{eq_twocondition}				
				\mathbb{E} y_{jt}=0, \quad \mathbb{E} y_{j t}^2=1, \quad \mathbb{E}\left|y_{jt}\right|^k \leq C_k. 
\end{equation}							
				\item For $\Sigma_1$ and $\Sigma_2,$ we assume that all of their eigenvalues are bounded from above and below away from zero. 
			\end{enumerate}		
		\end{assumption}
		
\begin{remark}
Several remarks on Assumption \ref{assum_basic} are in order. First, condition 1 specifies the ultra-high dimensional regime. Second, condition 2 provides a data generating model which has been frequently used in high dimensional data analysis, for example, see \cite{chen2010tests,dobriban2019deterministic,he2018high,ke2023estimation,ZBY}. Moreover, we assume the data is centered just for the ease of statements. Our results can be easily generalized to the non-zero mean setting; see Theorem 2.23 of \cite{bloemendal2016principal} for more detailed discussions. In addition, the moment assumptions in (\ref{eq_onecondition}) and (\ref{eq_twocondition}) can be weakened with additional technical efforts as in \cite{ding2018necessary,FYEJP}.  Finally, condition 3 imposes a very mild assumption on the population covariance matrices. 
\end{remark}

For the split data sets $\mathcal{X}^s$ and $\mathcal{Y}^s$ from our Algorithm \ref{alg0}, under Assumption \ref{assum_basic}, we can write the sample covariance matrix as follows	
					\begin{equation}\label{eq_samplecovaraincematrixdefinition}
				\mathcal{Q}_x=\frac{1}{\sqrt{pn}}\Sigma_1^{1/2} XX^\top \Sigma_1^{1/2}, \ \mathcal{Q}_y=\frac{1}{\sqrt{pn}}\Sigma_2^{1/2} YY^\top \Sigma_2^{1/2},
			\end{equation} 
where $X$ contains the samples from $\mathcal{X}^s$ and $Y$ contains that from $\mathcal{Y}^s.$ Since the matrices
		\begin{equation*}
			Q_x=\frac{1}{\sqrt{pn}}X^\top \Sigma_1 X, \ Q_y=\frac{1}{\sqrt{pn}} Y^\top \Sigma_2 Y,
		\end{equation*}  
		have the same non-zero eigenvalues with $\mathcal{Q}_x$ and $\mathcal{Q}_y,$	it is sufficient to work with $Q_x$ and $Q_y.$ It is well-known that the limiting spectral distributions (LSD) of $Q_x$ and $Q_y$ can be best described by their Stileltjes transforms, denoted as $m_1(z)$ and $m_2(z).$ The following lemma characterizes $m_1(z)$ and $m_2(z).$ 
		
		\begin{lemma}\label{lem_density}
			Suppose  $\{\sigma_j^{(i)}\}_{j=1}^p$ is the sequence of the eigenvalues of $\Sigma_i$ and let $\phi:=\frac{p}{n}$. Then for each $z \in \mathbb{C}_{+}$, there exists a unique $m_i \equiv m_i(z) \in \mathbb{C}_{+}, \ i=1,2,$ satisfying
			$$
			\frac{1}{m_i}=-z+ \frac{1}{p}\sum_{j=1}^p\frac{\phi}{\phi^{1 / 2} (\sigma_j^{(i)})^{-1}+m_i}.
			$$
		\end{lemma}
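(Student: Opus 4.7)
The plan is to deduce the lemma from the classical Marchenko--Pastur--Silverstein equation via a simple rescaling. Recall that the sample covariance matrix in the excerpt uses the non-standard normalization $(pn)^{-1/2}$ rather than $n^{-1}$. Since $(pn)^{-1/2} = n^{-1}\phi^{-1/2}$ with $\phi = p/n$, we have
\[
Q_i \;=\; \phi^{-1/2}\,\widetilde Q_i,\qquad \widetilde Q_i \;:=\; \tfrac{1}{n}\,X^\top \Sigma_i X.
\]
Consequently the empirical spectral distribution of $Q_i$ is the pushforward of that of $\widetilde Q_i$ under $\lambda\mapsto \phi^{-1/2}\lambda$, and the corresponding Stieltjes transforms are related by $m_i(z) = \phi^{1/2}\,\widetilde m_i(\phi^{1/2}z)$ for $z\in\mathbb{C}_+$.

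I would then invoke the classical Marchenko--Pastur--Silverstein theorem (see e.g.\ Bai--Silverstein) for the conventionally normalized companion matrix $\widetilde Q_i$: for every $\zeta \in \mathbb{C}_+$ there is a unique $\widetilde m_i(\zeta)\in\mathbb{C}_+$ satisfying
\[
\frac{1}{\widetilde m_i(\zeta)} \;=\; -\zeta \;+\; \frac{1}{n}\sum_{j=1}^{p}\frac{\sigma_j^{(i)}}{1+\sigma_j^{(i)}\,\widetilde m_i(\zeta)}.
\]
Setting $\zeta = \phi^{1/2}z$ and $\widetilde m_i(\zeta) = \phi^{-1/2} m_i(z)$, multiplying the identity through by $\phi^{1/2}$, and multiplying numerator and denominator of each summand by $\phi^{1/2}/\sigma_j^{(i)}$ gives
\[
\frac{1}{m_i(z)} \;=\; -z \;+\; \frac{1}{n}\sum_{j=1}^{p}\frac{1}{\phi^{1/2}(\sigma_j^{(i)})^{-1}+m_i(z)},
\]
which, using $1/n = \phi/p$, is precisely the identity claimed in the lemma.

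Existence of a solution $m_i(z)\in\mathbb{C}_+$ is immediate from $m_i(z) = \phi^{1/2}\,\widetilde m_i(\phi^{1/2}z)$. For uniqueness, any hypothetical second solution $m\in\mathbb{C}_+$ of the claimed identity yields, via the same change of variables, $\widetilde m := \phi^{-1/2} m \in \mathbb{C}_+$ solving the Silverstein equation at $\zeta = \phi^{1/2} z$; by the classical uniqueness statement, $\widetilde m = \widetilde m_i(\zeta)$, hence $m = m_i(z)$. The only non-trivial piece is the algebraic bookkeeping in Step~3, which is routine; all the random-matrix substance is inherited from the classical Silverstein theorem, so there is no essential obstacle.
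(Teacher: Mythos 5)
Your proof is correct, and it is genuinely different in spirit from the paper's, which disposes of the lemma simply by citing Lemma~2.3 of \cite{DingWang2023}. You instead derive the equation from the classical Marchenko--Pastur--Silverstein self-consistent equation for the conventionally normalized companion matrix $\widetilde Q_i=\tfrac1n X^\top\Sigma_iX$, observing that the nonstandard $(pn)^{-1/2}$ normalization amounts to the dilation $Q_i=\phi^{-1/2}\widetilde Q_i$, so the Stieltjes transforms are related by $m_i(z)=\phi^{1/2}\widetilde m_i(\phi^{1/2}z)$, and then pushing the classical equation through this change of variables. I checked the algebra: multiplying the $j$th summand of
\[
\frac{1}{\widetilde m_i(\zeta)}=-\zeta+\frac{1}{n}\sum_{j=1}^p\frac{\sigma_j^{(i)}}{1+\sigma_j^{(i)}\widetilde m_i(\zeta)}
\]
top and bottom by $\phi^{1/2}/\sigma_j^{(i)}$ and using $1/n=\phi/p$ does produce the claimed identity, and the map $m\mapsto\phi^{1/2}m$ is a self-bijection of $\mathbb{C}_+$, so existence and uniqueness transfer cleanly. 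The one point worth tightening is the citation: the Bai--Silverstein MP theorem is an asymptotic law, whereas the lemma here (like your reduction target) asserts existence and uniqueness of the \emph{finite-$n$} self-consistent equation with the empirical spectral distribution of $\Sigma_i$ playing the role of $H$; for that you should cite a finite-$n$ version of the Silverstein uniqueness argument (e.g.\ Silverstein--Choi or the deterministic-equivalent literature) rather than the limiting theorem. With that caveat, your route has the advantage of being self-contained and making transparent exactly how the $(pn)^{-1/2}$ scaling enters, which the paper's one-line citation does not.
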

		\begin{proof}
				See Lemma 2.3 of \cite{DingWang2023}. 
		\end{proof}

It is well-known that given $m_i(z),$ people can obtain its associated density function in the sense of (\ref{eq_definitionStieltjestransform}) using the inverse formula \cite{MR2567175} (also see (\ref{eq_inverseformula})). Let $\varrho_i$ be the asymptotic density associated with $m_i$ in Lemma \ref{lem_density}, $i=1,2$. In \cite[Lemma 2.5]{DingWang2023}, it has been also proved that $\varrho_i, i=1,2,$ are both supported on some single intervals that for some constants $\gamma_{\pm}^i$  
 \begin{equation}\label{eq_intervaldefinition}
 \operatorname{supp} \varrho_i \ \cap \ (0, \infty)=[\gamma_-^i, \gamma_+^i], 
\end{equation}  		
where $\gamma_+^{i}-\gamma_-^i=\mathrm{O}(1)$ and $\gamma_-^i, \gamma_+^i \asymp (p/n)^{1/2}$ for $i=1,2.$		
\subsection{Asymptotic distributions of the statistics}\label{sec_normality}

		In this section, we establish the CLTs for the statistics (\ref{eq_realstatistic}) an (\ref{eq_2sampleindividualstatdef}). For $\gamma$ in (\ref{eq_median}),  we define
		\begin{equation}\label{mean}
			\mathsf{M}_x:=n \int_{\mathbb{R}}\frac{t-\gamma}{\eta_0}\mathcal{K}\left(\frac{t-\gamma}{\eta_0}\right)\mathrm{d}\varrho_1(t), \ \ \mathsf{M}_y:=n\int_{\mathbb{R}}\frac{t-\gamma}{\eta_0}\mathcal{K}\left(\frac{t-\gamma}{\eta_0}\right)\mathrm{d}\varrho_2(t).
		\end{equation}
		
		\begin{theorem} \label{thm_onesample}
			Suppose Assumption \ref{assum_basic} holds and $\max\{\gamma_-^1, \gamma_-^2\}<\gamma<\min\{\gamma_+^1, \gamma_+^2\}$. For some small constants $\tau_1, \tau_2>0$ that $n^{-1+\tau_2}<\eta_0 \leq n^{-\tau_1}$, we have that for $\mathbb{T}_x$ and $\mathbb{T}_y$ in (\ref{eq_2sampleindividualstatdef}) and $\mathsf{v}$ in (\ref{eq_variancedefinition}) 
			\begin{equation*}
				\mathbb{T}_x-\mathsf{M}_x \Rightarrow \mathcal{N}(0,\mathsf{v}), \ 		\mathbb{T}_y-\mathsf{M}_y \Rightarrow \mathcal{N}(0,\mathsf{v}). 
			\end{equation*}
			
		\end{theorem}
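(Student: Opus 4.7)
The plan is to use the Helffer--Sj\"ostrand (HS) functional calculus to rewrite the linear spectral statistic $\mathbb{T}_x$ as a contour integral of the resolvent trace, and then to apply a mesoscopic bulk local law together with a cumulant (or Stein-type) expansion for the empirical Stieltjes transform at spectral scales $\eta \asymp \eta_0$. Because $\gamma$ is built from $\mathcal{Z}^s$, which is independent of both $\mathcal{X}^s$ and $\mathcal{Y}^s$, eigenvalue rigidity for $\mathcal{Q}_z$ places $\gamma$ in the overlap $(\gamma_-^1,\gamma_+^1) \cap (\gamma_-^2,\gamma_+^2)$ with overwhelming probability; I would therefore condition on $\mathcal{Z}^s$ throughout and treat $\gamma$ as a deterministic bulk point. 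The proofs for $\mathbb{T}_x$ and $\mathbb{T}_y$ are structurally identical, so I only describe the former.

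First, construct an almost-analytic extension $\tilde g$ of $g(x) := \frac{x-\gamma}{\eta_0}\mathcal{K}\!\left(\frac{x-\gamma}{\eta_0}\right)$, supported in a thin rectangle of width $\asymp \eta_0$ around $\gamma$. The HS formula then gives
\begin{equation*}
\mathbb{T}_x \;=\; \frac{1}{\pi}\int_{\mathbb{C}} \partial_{\bar z}\tilde g(z)\, \operatorname{Tr}(\mathcal{Q}_x - z)^{-1}\, dA(z),
\end{equation*}
as foreshadowed around (\ref{eq_T1integral}). Next, invoke the mesoscopic bulk local law for $\mathcal{Q}_x$ (via its companion $Q_x$) developed in \cite{DingWang2023}, which approximates $\operatorname{Tr}(\mathcal{Q}_x - z)^{-1}$ by $n m_1(z)$ with error of order $(n\operatorname{Im} z)^{-1}$ up to sub-polynomial factors, uniformly for $\operatorname{Im} z \gtrsim \eta_0$ in the bulk. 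Substituting this approximation back in and sending the contour down to the real axis recovers the deterministic mean $\mathbb{E}[\mathbb{T}_x] = \mathsf{M}_x + o(1)$.

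Finally, I would establish a Gaussian limit for the centred resolvent field $z \mapsto n(m_n^x(z) - m_1(z))$ along the HS contour via cumulant expansion, i.e., Stein-type integration by parts on each entry of $X$. The first two cumulants produce a self-consistent equation whose solution is a centred Gaussian field with covariance depending only on $m_1$; the higher-order cumulants are negligible under the moment bounds (\ref{eq_onecondition}) because $\eta_0 \gg n^{-1+\tau_2}$. Substituting this covariance into the double HS representation of $\operatorname{Var}(\mathbb{T}_x)$ and changing variables $z_i = \gamma + \eta_0 w_i$, the smoothness of $m_1$ at the bulk point $\gamma$ makes all $m_1$-dependent factors degenerate into a universal constant, and the resulting double contour integral reduces by a standard contour manipulation to the Sobolev-type expression (\ref{eq_variancedefinition}); asymptotic normality then follows from closure of the characteristic-function expansion. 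The principal obstacle is carrying out this mesoscopic CLT in the ultra-high dimensional regime $p \gg n$: classical mesoscopic CLTs (Bai--Silverstein, Landon--Sosoe, Bao--Erd\H{o}s--Schnelli) are formulated for $p/n \to y \in (0,\infty)$, whereas here the scaling $(pn)^{-1/2}$ keeps $\operatorname{supp}\varrho_1$ of width $O(1)$ and requires uniform control of the Marchenko--Pastur equation in Lemma \ref{lem_density} as $\phi \to \infty$, for which I would lean directly on the local-law and resolvent CLT machinery of \cite{DingWang2023}.
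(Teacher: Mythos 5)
Your proposal follows essentially the same route as the paper's proof: decompose $\mathbb{T}_x$ into a fluctuation part $\Tr f(Q)-\mathbb{E}\Tr f(Q)$ and a bias part $\mathbb{E}\Tr f(Q)-\mathsf{M}_x$, pass to the Helffer--Sj\"ostrand representation, invoke the local laws of \cite{DingWang2023} at mesoscopic scales $\eta\gtrsim\eta_0$, run a cumulant expansion for the resolvent trace, and then collapse the resulting double integral by the change of variable $z=\gamma+\eta_0 w$ and bulk smoothness of $m_1$ to recover the Sobolev-type variance (\ref{eq_variancedefinition}). The paper establishes asymptotic Gaussianity through the moment recursion $\mathbb{E}\mathcal{T}_1^l=(l-1)\mathsf{v}\,\mathbb{E}\mathcal{T}_1^{l-2}+\mathrm{o}_\prec(1)$ via Wick pairing of the $\mathcal{Y}(z_i)$'s (Lemma~\ref{lem_momentresults}), whereas you phrase the closure as a characteristic-function expansion; these are interchangeable. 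You correctly single out the ultra-high-dimensional scaling $\phi\to\infty$ with the $(pn)^{-1/2}$ normalization as the main technical wrinkle. The one piece of genuine work that your proposal does not surface is that the resolvent CLT in \cite{DingWang2023} is stated for diagonal $\Sigma$ (equivalently, $O=I$ in (\ref{eq_othernotations})); a substantial portion of the paper's proof of Lemmas~\ref{lem_momentresults} and \ref{lem_meancontrol} is devoted to rerunning the cumulant expansion with a non-diagonal $\Sigma$ (the extra factor $L=\Lambda^{1/2}O$ enters every entrywise derivative), and to importing the Gaussian-comparison step from \cite{FYEJP} to extend the local law beyond the diagonal case. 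Flagging that you cannot simply ``lean directly'' on \cite{DingWang2023} there would make the outline complete.
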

		\begin{proof}
			See Section \ref{suppl_thmonesampleproof}. 
		\end{proof}
\begin{remark}				
Theorem \ref{thm_onesample} establishes the asymptotic normality for the statistics $\mathbb{T}_x$ and $\mathbb{T}_y.$ We emphasize that the variances of the asymptotic distributions of $\mathbb{T}_x$ and $\mathbb{T}_y$ are identical regardless of whether $\mathbf{H}_0$ in (\ref{nullhypothesis}) holds. The possible differences lie in the mean parts $\mathsf{M}_x$ and $\mathsf{M}_y$ in (\ref{mean}). On the one hand, when (\ref{nullhypothesis}) holds, we have that $\varrho_1 \equiv \varrho_2$ so that $\mathsf{M}_x$ will be sufficiently close to $\mathsf{M}_y$ (c.f. $\mathbb{T}$ will be asymptotically Gaussian with mean zero). On the other hand, when (\ref{nullhypothesis}) fails, $\varrho_i, i=1,2,$ which encode the information of $\Sigma_i$ via Lemma \ref{lem_density} will be different so that $\mathbb{T}_x$ will be different from $\mathbb{T}_y.$ 				
\end{remark}				

For the distribution of $\mathbb{T}$ under the null hypothesis (\ref{nullhypothesis}), due to the independent splitting in Algorithm \ref{alg0}, Theorem \ref{thm_onesample} immediately yields the following result. 
		\begin{corol}\label{cor_twosample}
			Suppose Assumption \ref{assum_basic} holds. Then  under the null hypothesis $\mathbf{H}_0$ in (\ref{nullhypothesis}), for some small constants $\tau_1, \tau_2>0$ that $n^{-1+\tau_2}<\eta_0 \leq n^{-\tau_1}$, we have $$\mathbb{T}\Rightarrow \mathcal{N}(0,2\mathsf{v}).$$
		\end{corol}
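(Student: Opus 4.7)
The approach is to decompose $\mathbb{T}$ and reduce to Theorem \ref{thm_onesample} applied to each of the two independent subsamples. First, I would show that the two deterministic centerings in (\ref{mean}) coincide under $\mathbf{H}_0$. Since $\Sigma_1 = \Sigma_2$, Lemma \ref{lem_density} yields $m_1(z) \equiv m_2(z)$ on $\mathbb{C}_+$, and hence by the inverse Stieltjes transform the limiting densities satisfy $\varrho_1 \equiv \varrho_2$. The definitions in (\ref{mean}) differ only by the measure of integration while sharing the same $\gamma$ and $\eta_0$, so $\mathsf{M}_x = \mathsf{M}_y$ identically. Consequently,
\[
\mathbb{T} \;=\; \bigl(\mathbb{T}_x - \mathsf{M}_x\bigr) \;-\; \bigl(\mathbb{T}_y - \mathsf{M}_y\bigr).
\]

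Second, I would establish joint weak convergence with independent limits. By construction in Algorithm \ref{alg0}, the three subsamples $\mathcal{X}^s, \mathcal{Y}^s, \mathcal{Z}^s$ are drawn from disjoint index sets, so the eigenvalue families $\{\lambda_j\}, \{\mu_j\}, \{\gamma_j\}$ are mutually independent. The parameters $\gamma$ and $\eta_0$ are measurable with respect to the $\sigma$-algebra generated by $\mathcal{Z}^s$ (and possibly $\mathcal{X}^s, \mathcal{Y}^s$ via Algorithm \ref{algo2}); conditioning on this information, it suffices to verify the hypotheses of Theorem \ref{thm_onesample} with high probability. Under $\mathbf{H}_0$ the three subsamples share a common limiting spectrum $\varrho$, so $\gamma$ concentrates near the median of $\varrho$, which lies strictly inside the common support, guaranteeing $\max\{\gamma_-^1,\gamma_-^2\} < \gamma < \min\{\gamma_+^1,\gamma_+^2\}$ with probability tending to one; the bandwidth $\eta_0$ stays in the admissible window $[n^{-1+\tau_2}, n^{-\tau_1}]$ by construction of Algorithm \ref{algo2}. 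Theorem \ref{thm_onesample} then supplies the two conditional weak limits $\mathbb{T}_x - \mathsf{M}_x \Rightarrow \mathcal{N}(0,\mathsf{v})$ and $\mathbb{T}_y - \mathsf{M}_y \Rightarrow \mathcal{N}(0,\mathsf{v})$, which are conditionally independent by the independence of $\mathcal{X}^s$ and $\mathcal{Y}^s$.

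Finally, since the difference of two independent $\mathcal{N}(0,\mathsf{v})$ random variables is $\mathcal{N}(0,2\mathsf{v})$, and this limit does not depend on the conditioning variables (the variance $\mathsf{v}$ in (\ref{eq_variancedefinition}) is a universal constant), dominated convergence applied to characteristic functions converts the conditional weak convergence into unconditional weak convergence, yielding $\mathbb{T} \Rightarrow \mathcal{N}(0, 2\mathsf{v})$. The main technical obstacle is not the identification of the limit itself but the clean promotion from conditional to unconditional convergence, together with verifying that the exceptional event on which $\gamma$ escapes the common overlap interval has vanishing probability; both of these are handled by the rigidity estimates that already underlie Theorem \ref{thm_onesample}, so no fresh random matrix theoretic input is needed beyond what was already invoked there.
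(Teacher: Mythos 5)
Your proposal is correct and follows essentially the same route as the paper's proof: under $\mathbf{H}_0$, Lemma \ref{lem_density} forces $\varrho_1 \equiv \varrho_2$ so that $\mathsf{M}_x = \mathsf{M}_y$, the rigidity estimates (Lemma \ref{lem_rigidity} and Remark \ref{rem_rigidity}) guarantee $\gamma$ lies strictly inside the common bulk with high probability, and then Theorem \ref{thm_onesample} together with the independence of $\mathcal{X}^s$ and $\mathcal{Y}^s$ gives $\mathbb{T} = (\mathbb{T}_x - \mathsf{M}_x) - (\mathbb{T}_y - \mathsf{M}_y) \Rightarrow \mathcal{N}(0, 2\mathsf{v})$. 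The only difference is stylistic: you spell out the conditioning on $\sigma(\mathcal{Z}^s)$ and the promotion from conditional to unconditional convergence via dominated convergence on characteristic functions, steps which the paper leaves implicit in the phrase ``the independence splitting in Algorithm \ref{alg0}''; this is a reasonable and slightly more careful presentation of the same argument, not a different one.
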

		\begin{proof}
			Under these assumptions, it is clear from Lemma \ref{lem_density} and (\ref{eq_intervaldefinition}) that $\varrho_1=\varrho_2 \equiv \varrho, \gamma_-^1=\gamma_-^2 \equiv \gamma_-$ and $\gamma_+^1=\gamma_+^2 \equiv \gamma_+.$ Moreover, by Lemma \ref{lem_rigidity} and the discussion in Remark \ref{rem_rigidity} below, we see that with high probability, $\gamma_-<\gamma<\gamma_+.$ These also imply that $\mathsf{M}_x=\mathsf{M}_y.$ The proof then follows from Theorem \ref{thm_onesample}, the independence splitting in Algorithm \ref{alg0} and $\mathbb{T}=(\mathbb{T}_x-\mathsf{M}_x)-(\mathbb{T}_y-\mathsf{M}_y).$ 
		\end{proof}
		
As a consequence of Corollary \ref{cor_twosample}, we immediately obtain the asymptotic properties of our Algorithm \ref{alg:bootstrapping} in terms of the decision ratio (DR) in (\ref{eq_decisionrationdefinition}).		

\begin{corol}\label{coro_bionomial} Suppose the assumptions of Corollary \ref{cor_twosample} hold. Then when $n$ is sufficiently large, for $\mathsf{DR}$ in (\ref{eq_decisionrationdefinition}) of our Algorithm \ref{alg:bootstrapping}, when the null hypothesis $\mathbf{H}_0$ in (\ref{nullhypothesis}) holds, we have that conditional on the data sets $(\mathcal{X}, \mathcal{Y})$ 
\begin{equation*}
\mathsf{K} \times \mathsf{DR} \Rightarrow \mathcal{B}_{\mathsf{K}, \alpha},
\end{equation*}
where $\mathcal{B}_{\mathsf{K}, \alpha}$ is a Binomial random variable with size $\mathsf{K}$ and probability $\alpha.$
\end{corol}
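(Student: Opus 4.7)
The plan is to reduce the conditional distribution of $\mathsf{K} \times \mathsf{DR}$ to a sum of conditionally i.i.d. Bernoulli indicators whose success probability converges to $\alpha$. Since the $\mathsf{K}$ runs of Algorithm \ref{alg0} are performed with fresh independent randomness, conditional on $(\mathcal{X}, \mathcal{Y})$ the split triples $(\mathcal{X}_i^s, \mathcal{Y}_i^s, \mathcal{Z}_i^s)$, $1 \leq i \leq \mathsf{K}$, are i.i.d., so the corresponding statistics $\mathbb{T}_i$ and indicators $\mathsf{c}_i = \mathbf{1}(|\mathbb{T}_i| \geq z_{1-\alpha/2}\sqrt{2\mathsf{v}})$ are conditionally i.i.d. Bernoulli with common success probability $\mathsf{p}_n(\mathcal{X}, \mathcal{Y}) := \mathbb{P}(\mathsf{c}_1 = 1 \mid \mathcal{X}, \mathcal{Y})$. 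The two substantive tasks are (a) showing that the filtering in Steps one and two of Algorithm \ref{alg:bootstrapping} keeps every split with probability tending to one, and (b) showing $\mathsf{p}_n(\mathcal{X}, \mathcal{Y}) \to \alpha$ in a conditional sense.

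For task (a), under $\mathbf{H}_0$ all three subsamples share the same population covariance $\Sigma$, so by Lemma \ref{lem_density} and (\ref{eq_intervaldefinition}) the matrices $\mathcal{Q}_x, \mathcal{Q}_y, \mathcal{Q}_z$ admit a common LSD $\varrho$ with support $[\gamma_-, \gamma_+]$. Invoking the eigenvalue rigidity (Lemma \ref{lem_rigidity} and Remark \ref{rem_rigidity}) we obtain $\lambda^i_1, \mu^i_1 \to \gamma_+$, $\lambda^i_n, \mu^i_n \to \gamma_-$ and $\gamma^i \to \operatorname{Median}(\varrho) \in (\gamma_-, \gamma_+)$ with overwhelming probability. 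Consequently $\max\{|\lambda^i_1 - \mu^i_n|, |\mu^i_1 - \lambda^i_n|\} \to \gamma_+ - \gamma_-$ while $\mathsf{R}_i(x) + \mathsf{R}_i(y) + \epsilon_1 \to 2(\gamma_+ - \gamma_-) + \epsilon_1$, so (\ref{eq_enhancepower}) fails and $\mathcal{S}_0 = \emptyset$; simultaneously Definition \ref{defn_inefficient} is satisfied, giving $\mathcal{S}_1 = \{1, \ldots, \mathsf{K}\}$. A union bound over the fixed number $\mathsf{K}$ of splits yields $\mathbb{P}(|\mathcal{S}_0 \cup \mathcal{S}_1| = \mathsf{K}) \to 1$, which is the event referenced in Remark \ref{rmk_goodgoodgood}.

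For task (b), Corollary \ref{cor_twosample} gives the unconditional limit $\mathbb{T}_1 \Rightarrow \mathcal{N}(0, 2\mathsf{v})$, so by continuity of the normal c.d.f. at $\pm z_{1-\alpha/2}\sqrt{2\mathsf{v}}$ we have $\mathbb{E}[\mathsf{p}_n(\mathcal{X}, \mathcal{Y})] \to \alpha$. To upgrade this to convergence in probability of the conditional probability, I would revisit the proof of Theorem \ref{thm_onesample}: the asymptotic normality is proved through the Helffer--Sj\"ostrand representation combined with local laws, and the Gaussian fluctuation arises from a linear spectral statistic supported in a microscopic window $[\gamma - \eta_0, \gamma + \eta_0]$ whose randomness is, by local universality, asymptotically independent of the macroscopic configuration of $(\mathcal{X}, \mathcal{Y})$. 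In other words the weak convergence is stable in the sense of R\'enyi, which yields $\mathsf{p}_n(\mathcal{X}, \mathcal{Y}) \to \alpha$ in probability.

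Combining (a) and (b), on the high-probability event $\{|\mathcal{S}_0 \cup \mathcal{S}_1| = \mathsf{K}\}$ the quantity $\mathsf{K} \times \mathsf{DR} = \sum_{i=1}^{\mathsf{K}} \mathsf{c}_i$ is exactly $\mathrm{Bin}(\mathsf{K}, \mathsf{p}_n(\mathcal{X}, \mathcal{Y}))$ conditional on $(\mathcal{X}, \mathcal{Y})$, and since $\mathsf{K}$ is fixed while $\mathsf{p}_n \to \alpha$ in probability, this conditional law converges weakly to $\mathcal{B}_{\mathsf{K}, \alpha}$, completing the proof. The main obstacle is clearly the conditional/stable version of the CLT in task (b): without it one only controls the average of $\mathsf{p}_n$ rather than $\mathsf{p}_n$ itself, and the Binomial limit would fail to hold conditionally. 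Extracting stability from the Helffer--Sj\"ostrand analysis of Section \ref{suppl_thmonesampleproof}, rather than re-proving the CLT from scratch, is the natural route.
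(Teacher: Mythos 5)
Your proposal follows the same two-step route as the paper's own proof: use eigenvalue rigidity (Lemma~\ref{lem_rigidity}, Remark~\ref{rem_rigidity}) to show that under $\mathbf{H}_0$ we have $\mathcal{S}_0=\emptyset$ and $\mathcal{S}_1=\{1,\dots,\mathsf{K}\}$ with high probability, then reduce $\mathsf{K}\times\mathsf{DR}$ to a sum of conditionally i.i.d.\ Bernoulli indicators and invoke Corollary~\ref{cor_twosample} to identify the success probability as $\alpha$. Your task (a) matches the paper essentially verbatim.

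Where you go further is in task (b). The paper's proof simply asserts ``according to Corollary~\ref{cor_twosample}, when conditional on the data sets, we see that $\{\mathsf{c}_i\}$ are asymptotically i.i.d.\ Bernoulli random variables with probability $\alpha$,'' but Corollary~\ref{cor_twosample} is an \emph{unconditional} CLT: it averages over both the data-generating randomness and the split randomness. The corollary being proved is a statement about the \emph{conditional} law given $(\mathcal{X},\mathcal{Y})$, where the only randomness left is the random subset selection. You are right that passing from the unconditional to the conditional limit is not automatic: a priori, Corollary~\ref{cor_twosample} controls only $\mathbb{E}[\mathsf{p}_n(\mathcal{X},\mathcal{Y})]\to\alpha$, not $\mathsf{p}_n(\mathcal{X},\mathcal{Y})\to\alpha$ in probability, and the Binomial limit genuinely needs the latter. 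Your observation that this amounts to stable (R\'enyi-type) convergence of $\mathbb{T}$, which would have to be extracted from the Helffer--Sj\"ostrand analysis rather than from the moment computation as written, correctly locates a subtlety the paper passes over silently. Note, however, that you have \emph{identified} the missing ingredient rather than supplied it: the claim that the linear spectral statistic in the microscopic window is ``by local universality, asymptotically independent of the macroscopic configuration'' is a plausible heuristic but would itself require proof, and neither you nor the paper provide it. Your write-up is the more honest accounting of what the argument actually needs.
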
		
\begin{proof}
Analogous to the proof of Corollary \ref{cor_twosample}, by Lemma \ref{lem_rigidity} and the discussion in Remark \ref{rem_rigidity} below, we see that with high probability, for all $1 \leq i \leq \mathsf{K},$
\begin{equation*}
\lambda_1^i=\gamma_++\mathrm{o}(1), \ \mu_1^i=\gamma_++\mathrm{o}(1), \ \lambda_n^i=\gamma_-+\mathrm{o}(1), \ \mu_n^i=\gamma_-+\mathrm{o}(1), 
\end{equation*} 
and 
\begin{equation*}
\gamma_-<\gamma^i<\gamma_+.
\end{equation*}
In view of (\ref{eq_enhancepower}) and Definition \ref{defn_inefficient}, we conclude that with high probability 
\begin{equation}\label{eq_setdiscuss}
\mathcal{S}_0=\emptyset, \ \mathcal{S}_1=\{1,2,\cdots, \mathsf{K}\}.
\end{equation} 
Moreover, according to Corollary \ref{cor_twosample}, when conditional on the data sets, we see that $\{\mathsf{c}_i\}$ are asymptotically i.i.d. Bernoulli random variables with probability $\alpha.$ This immediately completes the proof. 
\end{proof}		

Corollary \ref{coro_bionomial} states that when $n$ is large, we can essentially characterize the asymptotic distribution of $\mathsf{DR}.$  Therefore, as discussed in Remark \ref{rmk_goodgoodgood}, we can  use it as the statistic to test (\ref{nullhypothesis}) (c.f. (\ref{eq_theoreticalanalysisofDR})).

		\subsection{Power analysis}\label{sec_poweranalysis}
		
In this section, we study the power of the statistic $\mathbb{T}$ in (\ref{eq_realstatistic}) under the alternative that 
\begin{equation}\label{hypothesis_alternative}
\mathbf{H}_a: \Sigma_1 \neq \Sigma_2. 
\end{equation}	
For notional simplicity, denote the ESDs of $\Sigma_1$ and $\Sigma_2$ as $\pi_1$ and $\pi_2$ and their associated $k$th moments as 
\begin{equation}\label{eq_emperical}
\mathfrak{m}_k(\Sigma_i)=\int x^k \pi_i(\mathrm{d} x), \ i=1,2.
\end{equation}

We point out that in our Algorithm \ref{alg:bootstrapping}, the statistics (\ref{eq_realstatistic}) will only be used for efficient sampling.  Therefore, we first study the power of the statistics $\mathbb{T}$	for the efficient split data sets.

		\begin{theorem}\label{thm_poweranalysis} Suppose Assumption \ref{assum_basic} holds. Moreover, for some small $\epsilon>0,$ we assume that $(\mathcal{X}^s, \mathcal{Y}^s, \mathcal{Z}^s)$ generated from Algorithm \ref{alg0} is an $\epsilon$-efficient data splitting satisfying Definition \ref{defn_inefficient}.  For some small constants $\tau_1, \tau_2>0,$ we assume that  $n^{-1+\tau_2}<\eta_0 \leq n^{-\tau_1}$. Moreover, suppose that for sufficiently large $n$ and any  constant $\mathsf{c}=\mathrm{O}(n^{-1}+\phi^{-1/2}),$ we have that
		\begin{equation}\label{eq_condition}
		\phi^{1/2}|\mathfrak{m}_1(\Sigma_1)-\mathfrak{m}_1(\Sigma_2)|+\mathsf{c}|\mathfrak{m}_2(\Sigma_1)-\mathfrak{m}_2(\Sigma_2)| \neq 0. 
		\end{equation}
Then given some type I error rate $\alpha,$ suppose the alternative (\ref{hypothesis_alternative}) holds in the sense that 
		\begin{equation}\label{eq_localalternative}
		\phi^{1/2}|\mathfrak{m}_1(\Sigma_1)-\mathfrak{m}_1(\Sigma_2)|+\phi^{-1/2} |\mathfrak{m}_2(\Sigma_1)-\mathfrak{m}_2(\Sigma_2)|>C_{\alpha} \eta_0^{-2} n^{-1} ,
		\end{equation}
where the constant $C_{\alpha} \equiv C_{\alpha}(n) \uparrow \infty$ as $n \rightarrow \infty$, we have that 
\begin{equation}\label{eq_Tpower}
\mathbb{P} \left(|\mathbb{T}|>\sqrt{2 \mathsf{v}} z_{1-\alpha/2} \right)=1.  
\end{equation} 		
		\end{theorem}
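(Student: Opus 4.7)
The plan is to decompose $\mathbb{T}$ into a deterministic mean gap plus a bounded fluctuation term, and to show that the gap diverges under the local alternative (\ref{eq_localalternative}). Writing $\mathbb{T} = (\mathbb{T}_x - \mathsf{M}_x) - (\mathbb{T}_y - \mathsf{M}_y) + (\mathsf{M}_x - \mathsf{M}_y)$ and using that Algorithm \ref{alg0} produces independent subsamples $\mathcal{X}^s$ and $\mathcal{Y}^s$, Theorem \ref{thm_onesample} implies that $(\mathbb{T}_x - \mathsf{M}_x)$ and $(\mathbb{T}_y - \mathsf{M}_y)$ are asymptotically independent $\mathcal{N}(0,\mathsf{v})$ variables, so their difference is $\mathrm{O}_{\mathbb{P}}(1)$. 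Consequently, (\ref{eq_Tpower}) reduces to showing that $|\mathsf{M}_x - \mathsf{M}_y| \gtrsim C_\alpha \to \infty$ under (\ref{eq_localalternative}), after which the triangle inequality forces $|\mathbb{T}|$ to exceed the fixed threshold $\sqrt{2\mathsf{v}}\,z_{1-\alpha/2}$ with probability tending to $1$.

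To bound the mean gap, change variables $u = (t-\gamma)/\eta_0$ in (\ref{mean}) to get
\[
\mathsf{M}_x - \mathsf{M}_y = n\eta_0 \int u\,\mathcal{K}(u)\bigl[\varrho_1(\gamma+\eta_0 u) - \varrho_2(\gamma+\eta_0 u)\bigr]\,du.
\]
The $\epsilon$-efficient splitting hypothesis (Definition \ref{defn_inefficient}) places the integration window $[\gamma-1.05\eta_0,\gamma+1.05\eta_0]$ strictly inside the common bulk supports of both $\varrho_i$, so each density is smooth there and admits a Taylor expansion around $\gamma$. The mollifier (\ref{eq_mathcalK}) is even, hence $u\mathcal{K}(u)$ is odd and the $\varrho_i(\gamma)$ contribution vanishes, leaving
\[
\mathsf{M}_x - \mathsf{M}_y = n\eta_0^{\,2}\bigl[\varrho_1'(\gamma)-\varrho_2'(\gamma)\bigr]\int u^2\mathcal{K}(u)\,du + \mathrm{O}(n\eta_0^{\,3}),
\]
where $\int u^2\mathcal{K}(u)\,du$ is a strictly positive absolute constant and the remainder is negligible since $\eta_0 \ll 1$.

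The main technical step is to convert $\varrho_1'(\gamma)-\varrho_2'(\gamma)$ into an explicit combination of low-order moments of $\Sigma_i$. I would work with the Stieltjes transforms $m_i$ of Lemma \ref{lem_density}, rewriting the fixed-point equation as
\[
\frac{1}{m_i(z)} = -z + \phi^{1/2}\int \frac{t\,d\pi_i(t)}{1+\phi^{-1/2} t\,m_i(z)}
\]
and expanding the kernel geometrically to obtain $1/m_i = -z + \phi^{1/2}\mathfrak{m}_1(\Sigma_i) - m_i\mathfrak{m}_2(\Sigma_i) + \phi^{-1/2} m_i^{\,2}\mathfrak{m}_3(\Sigma_i) - \cdots$, valid in the bulk since $\gamma \asymp \phi^{1/2}$ by (\ref{eq_intervaldefinition}). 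Differentiating in $z$, subtracting the two equations, and taking imaginary parts at $z = \gamma + \mathrm{i}\,0^+$ via Stieltjes inversion yields
\[
|\varrho_1'(\gamma) - \varrho_2'(\gamma)| \gtrsim \phi^{1/2}|\mathfrak{m}_1(\Sigma_1)-\mathfrak{m}_1(\Sigma_2)| + \phi^{-1/2}|\mathfrak{m}_2(\Sigma_1)-\mathfrak{m}_2(\Sigma_2)|,
\]
with all higher-moment residuals absorbed into a constant $\mathsf{c} = \mathrm{O}(n^{-1}+\phi^{-1/2})$ — exactly the situation excluded by the non-degeneracy condition (\ref{eq_condition}). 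Combining with the previous display and invoking (\ref{eq_localalternative}) gives $|\mathsf{M}_x - \mathsf{M}_y| \gtrsim n\eta_0^{\,2}\bigl[\phi^{1/2}|\mathfrak{m}_1(\Sigma_1)-\mathfrak{m}_1(\Sigma_2)| + \phi^{-1/2}|\mathfrak{m}_2(\Sigma_1)-\mathfrak{m}_2(\Sigma_2)|\bigr] > C_\alpha$, completing the argument.

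The principal obstacle lies in the third paragraph: carefully carrying out the implicit perturbation of Lemma \ref{lem_density} with respect to the spectral measure $\pi_i$, tracking the exact powers of $\phi$ attached to each moment difference, and verifying that the truncated tail of the geometric expansion falls inside the $\mathsf{c}$-window of (\ref{eq_condition}). This requires a quantitative use of the bulk scaling $\gamma \asymp \phi^{1/2}$ together with control of $m_i$ and $m_i'$ on the real axis, and is where the random-matrix content of the argument concentrates.
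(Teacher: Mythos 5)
Your overall strategy — decompose $\mathbb{T}$ into the asymptotically Gaussian fluctuation from Theorem \ref{thm_onesample} plus the deterministic mean gap $\mathsf{M}_x-\mathsf{M}_y$, then lower-bound the gap by expanding the density difference around $\gamma$ — is exactly the paper's route, including the parity argument that kills the even-order Taylor contributions and the reduction of density derivatives to low-order moments via the semicircle form of $\varrho_i$ (which the paper packages in advance as Lemma \ref{lem_densityestimatation}; your inline geometric expansion of the fixed-point equation is the same computation).

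The gap is in your treatment of the Taylor remainder. After noting that only odd orders survive, the expansion is
\[
\mathsf{M}_x-\mathsf{M}_y = n\eta_0^2\bigl[\varrho_1'(\gamma)-\varrho_2'(\gamma)\bigr]\!\int u^2\mathcal{K}(u)\,du \;+\; n\eta_0^4\bigl[\varrho_1'''(\gamma)-\varrho_2'''(\gamma)\bigr]\,c_2 \;+\;\cdots,
\]
so the first correction is not $\mathrm{O}(n\eta_0^3)$ but $\mathrm{O}\!\left(n\eta_0^4\,|\varrho_1'''(\gamma)-\varrho_2'''(\gamma)|\right)$, and ``negligible since $\eta_0\ll 1$'' does not suffice: the target under the local alternative is only that $\mathsf{M}_x-\mathsf{M}_y$ exceeds a \emph{slowly} diverging $C_\alpha$, so the leading term itself may be of order $C_\alpha$ and you need the remainder to be \emph{smaller than the leading term}, not merely $o(n\eta_0^2)$. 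The paper closes this by expanding to a fixed order $2k$ with $k$ chosen so that $n\eta_0^{2k+1}=\mathrm{o}(1)$, and — this is the substantive missing step in your writeup — proving that each odd-order derivative difference $d_1^{(2j-1)}(\gamma)-d_2^{(2j-1)}(\gamma)$ has the \emph{same} structural size $\asymp \mathsf{c}(\mathfrak{a}_2-\mathfrak{a}_1)+(\mathfrak{b}_2-\mathfrak{b}_1)$ as the first derivative, so the later terms in the series are suppressed by $\eta_0^{2j-2}$ relative to the $j=1$ term. Without establishing this comparability (or an equivalent bound on the higher derivatives), the higher-order corrections could a priori swamp the leading term, and your estimate $\mathsf{M}_x-\mathsf{M}_y\gtrsim n\eta_0^2\bigl(\phi^{1/2}|\Delta\mathfrak{m}_1|+\phi^{-1/2}|\Delta\mathfrak{m}_2|\bigr)$ does not follow.

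A secondary point: you assert $|\varrho_1'(\gamma)-\varrho_2'(\gamma)|\gtrsim\phi^{1/2}|\Delta\mathfrak{m}_1|+\phi^{-1/2}|\Delta\mathfrak{m}_2|$, but the coefficient of $|\Delta\mathfrak{m}_2|$ actually comes out as the small quantity $\mathsf{c}\asymp|\gamma-\mathfrak{b}_1|=\mathrm{O}(n^{-1}+\phi^{-1/2})$ (using the rigidity estimate $\gamma=\mathfrak{m}_1(\Sigma_1)\phi^{1/2}+\mathrm{O}_{\mathbb{P}}(n^{-1}+\phi^{-1/2})$), not a clean $\phi^{-1/2}$; it is precisely at this point that the non-degeneracy condition (\ref{eq_condition}) is needed to rule out cancellation, a role your sketch states but does not quite pin down.
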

		\begin{proof}
		See Section \ref{sec_32proofdetails}. 
		\end{proof}
		
\begin{remark}
A few remarks are in order. First, (\ref{eq_condition})  is a mild condition and can be easily satisfied. In fact, we can actually remove this condition when $\phi^{-1/2} \ll n^{-1},$ or equivalently, $p \gg n^3.$ In such a setting, since $\mathfrak{m}_2(\Sigma_i), i=1,2,$ are bounded from above, we have that $\mathsf{c}|\mathfrak{m}_2(\Sigma_2)-\mathfrak{m}_2(\Sigma_1)|=\mathrm{O}(n^{-1}).$ Consequently, (\ref{eq_localalternative}) implies (\ref{eq_condition}) so that we can remove (\ref{eq_condition}). Second, (\ref{eq_localalternative}) is generally a weak alternative. It suggests that we should use a relative larger $\eta_0$ in order to increase the power. Finally, the condition (\ref{eq_localalternative}) also
relies on the ratio $\phi=p/n.$ It demonstrates that when the ratio $\phi$ increases, we may need weaker alternatives. For example, if $p \gg n^3,$ (\ref{eq_localalternative}) reads as 
\begin{equation*}
|\mathfrak{m}_1(\Sigma_1)-\mathfrak{m}_1(\Sigma_2)|>C_{\alpha} \phi^{-1/2} \eta_0^{-2} n^{-1},
\end{equation*}
which can be much weaker than those used in \cite{cai2013two,LC}. 
\end{remark}

Theorem \ref{thm_poweranalysis}  also yields the results of the power analysis of our Algorithm \ref{alg:bootstrapping} in terms of the decision ratio in  (\ref{eq_decisionrationdefinition}). 		

\begin{corol}\label{coro_convergence}
Suppose Assumption \ref{assum_basic} and (\ref{eq_localalternative}) hold. For  given type I error rate $\alpha,$ when $n$ is sufficiently large, we have that conditional on the data sets $(\mathcal{X}, \mathcal{Y})$ 
\begin{equation*}
\mathsf{DR}=1+\mathrm{o}_{\mathbb{P}}(1). 
\end{equation*}
\end{corol}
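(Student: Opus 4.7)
The plan is to show, index by index, that each summand in the decision ratio (\ref{eq_decisionrationdefinition}) equals $1$ with probability tending to one, and then to combine the at-most-$\mathsf{K}$ summands by a union bound to conclude $\mathsf{DR} \to 1$ in probability.

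First I would split the index set $\mathcal{S}_0 \cup \mathcal{S}_1$ into the two cases baked into Algorithm \ref{alg:bootstrapping}. For every $i \in \mathcal{S}_0$, Step one of the algorithm sets $\mathsf{c}_i = 1$ by construction, so these indices contribute $|\mathcal{S}_0|$ exactly to the numerator. For every $i \in \mathcal{S}_1$, the triple $(\mathcal{X}^s_i, \mathcal{Y}^s_i, \mathcal{Z}^s_i)$ is $\epsilon$-efficient in the sense of Definition \ref{defn_inefficient} by the very definition (\ref{eq_datasplittingalgorithminside}) of $\mathcal{S}_1$. Since Assumption \ref{assum_basic} is a condition on the underlying populations and the alternative condition (\ref{eq_localalternative}) depends only on the population moments $\mathfrak{m}_k(\Sigma_j)$, both conditions are preserved across all $\mathsf{K}$ resamples. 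Theorem \ref{thm_poweranalysis} therefore applies to each such split and yields
$$\mathbb{P}\!\left(|\mathbb{T}_i| > \sqrt{2\mathsf{v}}\, z_{1-\alpha/2}\,\Big|\,\mathcal{X},\mathcal{Y}\right) = 1 - \mathrm{o}(1),$$
so that $\mathsf{c}_i = 1$ with conditional probability $1 - \mathrm{o}(1)$ for every $i \in \mathcal{S}_1$.

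Second, since $\mathsf{K}$ is a user-specified fixed constant and $|\mathcal{S}_1| \leq \mathsf{K}$, a union bound over the at-most-$\mathsf{K}$ indices in $\mathcal{S}_1$ gives
$$\mathbb{P}\!\left(\mathsf{c}_i = 1 \ \text{for all}\ i \in \mathcal{S}_0 \cup \mathcal{S}_1 \,\Big|\, \mathcal{X},\mathcal{Y}\right) \geq 1 - \mathsf{K}\cdot\mathrm{o}(1) = 1 - \mathrm{o}(1).$$
On this event $\mathsf{DR} = 1$, and Step two of Algorithm \ref{alg:bootstrapping} ensures that $\mathcal{S}_0 \cup \mathcal{S}_1$ is non-empty so that the denominator in (\ref{eq_decisionrationdefinition}) is well defined. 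Combined with the trivial a.s.\ bound $\mathsf{DR} \leq 1$, this delivers $\mathsf{DR} = 1 + \mathrm{o}_{\mathbb{P}}(1)$ conditional on $(\mathcal{X},\mathcal{Y})$.

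The main subtlety is simply bookkeeping about probability spaces: Theorem \ref{thm_poweranalysis} is stated for one fixed $\epsilon$-efficient split, whereas here the splits are random subsamples drawn from $(\mathcal{X}, \mathcal{Y})$. This is not a genuine obstacle because the hypotheses of the theorem depend on the split only through the $\epsilon$-efficiency condition (which is enforced on $\mathcal{S}_1$) and on the population only through Assumption \ref{assum_basic} and (\ref{eq_localalternative}) (which are unaffected by resampling). The same conditional-probability bound therefore applies uniformly to each of the $\mathsf{K}$ efficient splits, which is precisely what the union bound requires.
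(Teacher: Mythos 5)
Your proposal is correct and takes essentially the same approach as the paper: handle $\mathcal{S}_0$ by construction of Step one (where $\mathsf{c}_i = 1$ is set directly), and handle $\mathcal{S}_1$ by invoking Theorem~\ref{thm_poweranalysis} (i.e.\ (\ref{eq_Tpower})) for each $\epsilon$-efficient split. The paper's proof is merely terser, organizing the same argument by the dichotomy $\mathcal{S}_1 = \emptyset$ vs.\ $\mathcal{S}_1 \neq \emptyset$ rather than index-by-index, and your union-bound bookkeeping and explicit remark on the conditioning are harmless elaborations of what the paper leaves implicit.
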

\begin{proof}
If $\mathcal{S}_1=\emptyset,$ then by Step one of our Algorithm \ref{alg:bootstrapping}, we have that $\mathsf{DR}=1.$ Otherwise, together with Step four of our Algorithm \ref{alg:bootstrapping} and (\ref{eq_Tpower}), we can see that $\mathsf{DR}=1+\mathrm{o}_{\mathbb{P}}(1).$ This completes our proof. 
\end{proof}
	
Corollary \ref{coro_convergence}	implies that when $n$ is large and the weak local alternative (\ref{eq_localalternative}) holds, $\mathsf{DR}$ will converge to 1 with high probability. Consequently, our Algorithm \ref{alg:bootstrapping} will be able to reject the null hypothesis under the weak alternative as in (\ref{eq_localalternative}) for any threshold $\delta<1$. 
		
		\section{Numerical results}\label{sec_numerical}
In this section, we conduct extensive Monte-Carlo simulations and two real data analysis to show the accuracy and powerfulness of our proposed test procedure Algorithm \ref{alg:bootstrapping}. For better illustrations, we also compare our Algorithm \ref{alg:bootstrapping} (denoted as Proposed) with   four other state-of-the-art methods as in \cite{cai2013two} (denoted as CLX2013), \cite{LC} (denoted as LC2012),\cite{srivastava2010testing} (denoted as SY2010) and \cite{he2018high} (denoted as HC2018). For the user's convenience, all these methods can be implemented using our $\texttt{R}$ package $\texttt{UHDtst}.$  For reproducibility, we also provide an online demo. \footnote{ \url{https://xcding1212.github.io/UHDtst.html}.}   
				
\subsection{Numerical simulations}
In this section, we check and compare the accuracy and power using Monte-Carlo simulations.  		
\subsubsection{Simulation setup}\label{sec_simulationsetup}
As in the second condition of Assumption \ref{assum_basic}, our two samples $\{\mathbf{x}_i\}$ and $\{\mathbf{y}_j\}$ are generated according to $\mathbf{x}_i=\Sigma_1^{1/2} \bm{x}_i$
and $\mathbf{y}_j=\Sigma_2^{1/2} \bm{y}_j,$ where $\bm{x}_i=(x_{is})$ and $\bm{y}_j=(y_{jt})$ satisfy
(\ref{eq_onecondition}) and (\ref{eq_twocondition}). For the i.i.d. entries $x_{is}$ and $y_{jt},$ we consider two different distributions: the standard Gaussian distribution $\mathcal{N}(0,1)$ with vanishing 4th cumulant  and the two-point distribution that $\mathbb{P}(x=\sqrt{2}) = 1/3$ and $\mathbb{P}(x=-\sqrt{2}/2)=2/3$ whose 4th cumulant is $-1.5.$

For the two population covariance matrices $\Sigma_1$ and $\Sigma_2,$ we formulate the null hypothesis $\mathbf{H}_0$ as
\begin{equation}\label{eq_simunull}
\mathbf{H}_0: \Sigma_1=\Sigma_2 \equiv \Sigma^*.
\end{equation}
In the simulations, we will consider three different cases based on (\ref{eq_simunull}) as follows.  	
\begin{enumerate}
\item[(Case I).] We consider model two of \cite{cai2013two}. For $\Sigma^*$ in the null hypothesis (\ref{eq_simunull}), we consider the Toeplitz matrix that $\Sigma^* =(\sigma^{*}_{ij})$, where $\sigma^{*}_{ij}=0.5^{|i-j|}.$ For the alternative (\ref{hypothesis_alternative}), we consider that 
\begin{equation*}
\mathbf{H}_a: \Sigma_1=\Sigma^*, \ \Sigma_2=D^{1/2} \Sigma^* D^{1/2}.
\end{equation*} 
Here $D=\mathrm{diag}(d_{ii})$, where $d_{ii}$'s are generated from $\mathrm{Unif}(0.5,2.5)$. 
\item[(Case II).] We consider case one of \cite{LC}. For $\Sigma^*$ in (\ref{eq_simunull}), we consider $\Sigma^*=I.$ For the alternative, we consider that 
\begin{equation*}
\mathbf{H}_a: \Sigma_1=\Sigma^*, \ \Sigma_2= \Sigma^*+\Delta.
\end{equation*}
Here for some constant $\theta>0, \ \Delta$ is a banded matrix that
\begin{equation*}
\Delta_{ij}=
\begin{cases}
\theta^2 & \text{if} \ i=j, \\
\theta & \text{if} \ |i-j|=1, \\
0 & \text{otherwise}.
\end{cases}
\end{equation*}
In other words, $\Sigma_2$ can be regarded as the covariance matrix of a $p$-dimensional realization of a stationary MA(1) process driven by $\mathcal{N}(0,1)$ random variables with parameter $\theta.$ 
\item[(Case III).] For $\Sigma^*$ in (\ref{eq_simunull}), we set $\Sigma^*=QDQ^\top,$ where $Q$ is some orthogonal matrix and $D=\mathrm{diag}(d_{ii})$ with $d_{ii}$'s being generated from $\mathrm{Unif}(3,6).$ For the alternative, we consider 
\begin{equation}\label{eq_alternativeIII}
\mathbf{H}_a: \Sigma_1=\Sigma^*, \ \Sigma_2=\Sigma^*+\varepsilon I_p, 
\end{equation}
where $\varepsilon>0$ is some constant and $I_p$ is the $p \times p$ identity matrix. 
\end{enumerate}

\subsubsection{Simulation results}		

In this section, we report and discuss the  numerical results based on extensive Monte-Carlo simulations. We conduct the simulations following the settings in Section \ref{sec_simulationsetup}.  For the dimension and sample sizes, we consider $p=6,000$ and various combinations $(n_1, n_2)=(100,100), (100, 150)$, $(100,800), (100,1000).$ We report our results in Tables \ref{caseGaussian} and \ref{caseTwopoint} and Figure \ref{fig_power}. We elaborate our results in more details as follows. 

Tables \ref{caseGaussian} and \ref{caseTwopoint} summarize the results of the empirical size and power of our proposed method in Algorithm \ref{alg:bootstrapping} and the other four methods in the literature \cite{cai2013two,he2018high,LC,srivastava2010testing} for Gaussian samples and two-point samples, respectively.  First, we conclude that across all the simulation settings, our proposed method (i.e., Proposed) is accurate and powerful. It also outperforms all the other methods in terms of both size and power. Second, LC2012 is reasonably accurate for all the simulation settings but lose their power in Case III.
Third, due to the multiple testing procedure, HC2018 is powerful across all the settings but at the expense of being inaccurate. Fourth, SY2010 only works for Case II and is invalid for Cases I and III both in size and power. Finally, when the samples are Gaussian and $n_1$ and $n_2$ are comparably large,  CLX2013 works in Case I and II but loses its power in Case III.  Moreover, if either $n_1$ and $n_2$ are incomparable or the samples follow two-point distribution, CLX2013 will be no more accurate.

Before concluding this section, to show the mildness of the condition (\ref{eq_localalternative}) and the powerfulness of our method,  in Figure \ref{fig_power}, using Case III with the alternative (\ref{eq_alternativeIII}), we report how the simulated power changes with $\varepsilon.$ It can be concluded that our proposed will achieve power one for very small $\varepsilon$ (i.e., weak alternative), while all the other methods either are powerless or require much larger $\varepsilon$ to have  nontrivial power.

		\begin{table}[htp]
			\fontsize{9pt}{7pt}\selectfont
			\centering
			\renewcommand{\arraystretch}{1.7}
					\hspace*{-10pt}
			\begin{tabular}{c|cc|cccc}
				\hline
%				\multicolumn{3}{c|}{}  \\
			%	\cline{4-11}
				\multicolumn{1}{c}{}& & \textbf{Methods/Setting}& $(100,100)$ & $(100,150)$ & (100,800) & (100,1000) \\
				\hline
				\multirow{9}{*}{\rotatebox[origin=c]{90}{\textbf{Empirical Size}}}&\multirow{3}{*}{\textbf{Case I}}& SY2010 & 0 & 0 & 0.007 & 0.017   \\ 
				&&LC2012 & 0.049  & 0.049 & 0.07 & 0.067  \\ 
				&&CLX2013 & 0.038& 0.046 & 0.24 & 0.36  \\
                &&HC2018 & 0.014 & 0.017 & 0.01 & 0.003  \\	
                &&Proposed & 0.045 & 0.047 & 0.051 & 0.048  \\ 	
				 \cline{2-7}
				&\multirow{3}{*}{\textbf{Case II}}& SY2010 & 0.035 & 0.033 & 0.06 & 0.06   \\ 
				&&LC2012 & 0.054 & 0.062& 0.05& 0.05  \\ 
				&&CLX2013 & 0.043& 0.031 & 0.223 & 0.243  \\
                && HC2018 &0.013 & 0.007 & 0.01 & 0 \\	
                && Proposed & 0.048 & 0.049 & 0.052  & 0.05  \\ 	 \cline{2-7}
			&	\multirow{3}{*}{\textbf{Case III}}& SY2010 & 0 & 0 & 0.003 & 0.01    \\ 
				&&LC2012 & 0.048 & 0.049& 0.057 & 0.067  \\ 
				&&CLX2013 & 0.052 & 0.046 & 0.19 & 0.223  \\
                && HC2018 &0.024 & 0.004 & 0.003 & 0.007 \\	
                && Proposed & 0.047 & 0.05 & 0.051 & 0.051  \\ 				
				\hline
				\multirow{9}{*}{\rotatebox[origin=c]{90}{\textbf{Empirical Power}}}&\multirow{3}{*}{\textbf{Case I}}& SY2010 & 0 & 0 & 0.873 & 0.917    \\ 
				&&LC2012 & 1& 1&1 &1  \\ 
				&&CLX2013 & 1& 1& 1 & 1 \\
                && HC2018 &1 &1 & 1 &1  \\	
                && Proposed &1 &1 & 1 & 1 \\ 	 \cline{2-7}
				&\multirow{3}{*}{\textbf{Case II}}& SY2010 & 1 & 1 &  &    \\ 
				&&LC2012 & 1&1 &1 &1  \\ 
				&&CLX2013 & 0.947 & 1 & 1 & 1 \\
                && HC2018 & 1& 1& 1 &1  \\	
                && Proposed & 1&1 & 1 & 1   \\	 \cline{2-7}
				&\multirow{3}{*}{\textbf{Case III}}& SY2010 &  0& 0& 0.013 & 0.023   \\ 
				&&LC2012 & 0.218& 0.286 & 0.45 & 0.463  \\ 
				&&CLX2013 &0.067 & 0.057& 1 & 1 \\
                && HC2018 &1 &1 & 1 & 1 \\	
                && Proposed & 1& 1 & 1 &1 \\ 	 \hline
			\end{tabular}
			\caption{Comparison of simulated type I error and power for Gaussian samples. Here we choose the type I error $\alpha=0.05$ and consider the setups in Section \ref{sec_simulationsetup} for four different combinations of $(n_1,n_2)$ with $p=6,000.$ For Case II, we choose $\theta=0.5$ for the alternative and for Case III, we choose $\varepsilon=1$ for the alternative. In our $\mathtt{R}$ package $\texttt{UHDtst}$, our proposed method can be implemented using the function $\texttt{TwoSampleTest}$, LC2012 can be implemented using the function \texttt{LC2012}, CLX2013 can be implemented using the function \texttt{CLX2013}, SY2010 can be implemented using the function \texttt{SY2010} and HC2018 can be implemented using the function \texttt{HC2018}. We report the results based on 1,000 repetitions. } 		\label{caseGaussian}
		\end{table}

\begin{table}[htp]
			\fontsize{9pt}{7pt}\selectfont
			\centering
			\renewcommand{\arraystretch}{1.7}
					\hspace*{-10pt}
			\begin{tabular}{c|cc|cccc}
				\hline
%				\multicolumn{3}{c|}{}  \\
			%	\cline{4-11}
				\multicolumn{1}{c}{}& & \textbf{Methods/Setting}& $(100,100)$ & $(100,150)$ & (100,800) & (100,1000) \\
				\hline
				\multirow{9}{*}{\rotatebox[origin=c]{90}{\textbf{Empirical Size}}}&\multirow{3}{*}{\textbf{Case I}}& SY2010 & 0 & 0 & 0.01  & 0.01   \\ 
				&&LC2012 & 0.048 & 0.05 & 0.043& 0.046 \\ 
				&&CLX2013 & 0.176& 0.158 & 0.457  & 0.513 \\
                &&HC2018 & 0.01 & 0.014& 0.003  & 0.007  \\	
                &&Proposed & 0.049 & 0.051 & 0.048  & 0.047  \\ 	
				 \cline{2-7}
				&\multirow{3}{*}{\textbf{Case II}}& SY2010 & 0.051&0.033 &0.117  & 0.077   \\ 
				&&LC2012 & 0.046& 0.059 & 0.066 & 0.05 \\ 
				&&CLX2013 & 0.321& 0.306 & 0.95 & 0.997 \\
                && HC2018 & 0.011 & 0.013& 0.013 & 0.004 \\	
                && Proposed & 0.051& 0.052 & 0.048  & 0.048  \\ 	 \cline{2-7}
			&	\multirow{3}{*}{\textbf{Case III}}& SY2010 & 0 & 0 & 0.02 & 0.02    \\ 
				&&LC2012 & 0.044& 0.051&0.047 &0.06  \\ 
				&&CLX2013 &0.037 &0.035 & 0.2 & 0.233 \\
                && HC2018 &0.019 & 0.01& 0.006 & 0.008 \\	
                && Proposed & 0.051& 0.051& 0.047 & 0.048 \\ 				
				\hline
				\multirow{9}{*}{\rotatebox[origin=c]{90}{\textbf{Empirical Power}}}&\multirow{3}{*}{\textbf{Case I}}& SY2010 & 0 & 0 & 0.83  & 0.843   \\ 
				&&LC2012 & 1& 1& 1&1  \\ 
				&&CLX2013 & 1& 1& 1 & 1 \\
                && HC2018 & 1&1 &1  &1  \\	
                && Proposed &1 &1&  1& 1 \\ 	 \cline{2-7}
				&\multirow{3}{*}{\textbf{Case II}}& SY2010 & 1& 1& 1 & 1   \\ 
				&&LC2012 &1 &1 &1 &1  \\ 
				&&CLX2013 & 1&1 & 1 &1  \\
                && HC2018 &1 &1 & 1 & 1 \\	
                && Proposed & 1&1 & 0.996 &1    \\	 \cline{2-7}
				&\multirow{3}{*}{\textbf{Case III}}& SY2010 & 0 & 0 & 0 & 0.01    \\ 
				&&LC2012 &0.237 & 0.302 &0.413 & 0.39  \\ 
				&&CLX2013 & 0.038 & 0.051& 1 & 1 \\
                && HC2018 &1 &1 & 1 & 1 \\	
                && Proposed &1 & 1& 1 &1 \\ 	 \hline
			\end{tabular}
			\caption{Comparison of simulated type I error and power for two-point samples. Here we choose the type I error $\alpha=0.05$ and consider the setups in Section \ref{sec_simulationsetup} for four different combinations of $(n_1,n_2)$ with $p=6,000.$ For Case II, we choose $\theta=0.5$ for the alternative and for Case III, we choose $\varepsilon=1$ for the alternative. In our $\mathtt{R}$ package $\texttt{UHDtst}$, our proposed method can be implemented using the function $\texttt{TwoSampleTest}$, LC2012 can be implemented using the function \texttt{LC2012}, CLX2013 can be implemented using the function \texttt{CLX2013}, SY2010 can be implemented using the function \texttt{SY2010} and HC2018 can be implemented using the function \texttt{HC2018}. We report the results based on 1,000 repetitions. } 		\label{caseTwopoint}
		\end{table}

			\begin{figure}[htp]
		\centering
		\label{power}
		\includegraphics[width=1\textwidth]{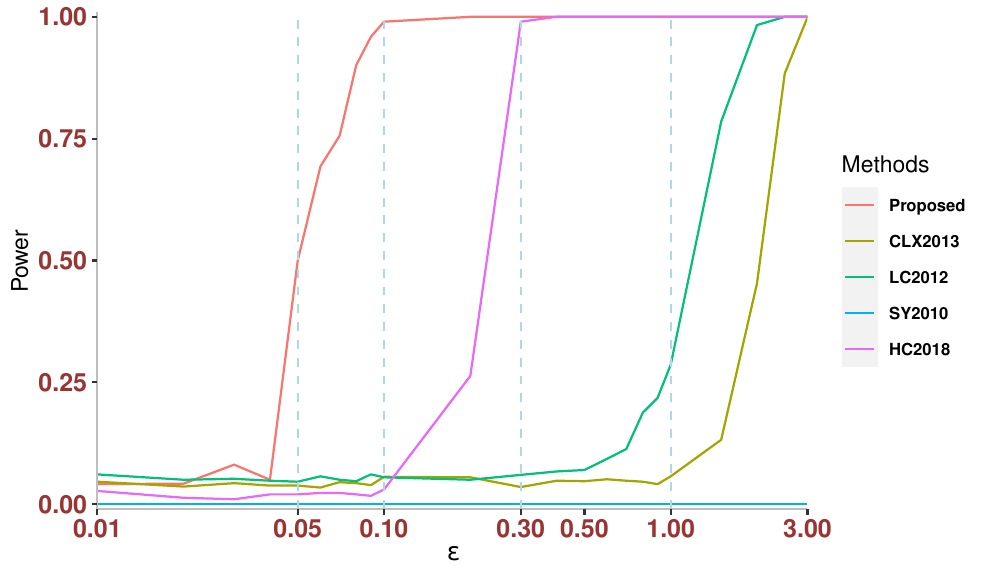} 
		% Adjust the width as needed
	\caption{Comparison of power of different methods under the alternative (\ref{eq_alternativeIII}) of Case III. Here we use the Gaussian samples and choose $n_1=100, n_2=150$ and $p=6,000.$ We report our methods based on $1,000$ repetitions. }	\label{fig_power}
	\end{figure}
%		
		
%		In our region, the only applicable methods are \cite{cai2013two, LC}. But need to add more. Some useful R codes are here \url{https://rdrr.io/cran/HDtest/man/testCov.html}
%		   
%\vspace*{3pt}		
%		
%		 
%		
%		{\color{blue}[need to add some simulations that their methods also work focus on comparison with other methods which work in the ultra-high dimensional setting.]}

		\subsection{Real data analysis}	
In this section, we consider the analysis of two gene expression data sets using our proposed method and compare it with the methods developed in \cite{cai2013two,he2018high,LC,srivastava2010testing}. The first data set is the clinical prostate cancer data set \cite{singh2002gene} \footnote{The data set can be downloaded from \url{https://www.ncbi.nlm.nih.gov/geo/query/acc.cgi?acc=GSE68907}} and the second one is the adult T-cell acute lymphocytic leukemia (ALL) data set \cite{chiaretti2004gene} \footnote{The data set can be loaded from the $\texttt{R}$ package $\texttt{ALL};$ see \url{https://bioconductor.org/packages/release/data/experiment/html/ALL.html}}. We will see from the analysis below that while some of these methods (including ours) work for the first data set, only our proposed method works for the second data set. 
		
		\subsubsection{Prostate cancer data}
The prostate cancer data set \cite{singh2002gene} focuses on the gene expression patterns associated with clinical behaviors of prostate cancer. This study employed microarray expression analysis to discern the global biological variations that might be linked to the common pathological characteristics of prostate cancer.

		The data set categorizes observations into distinct groups. More specifically, it has 12,600 columns  of gene expressions and comprises samples from two groups: a normal group with 50 samples and a tumor group with 52 samples. We point out that this dataset has been also used for analysis in \cite{cai2013two}. However, to avoid some computational issue, they only select the top 5,000 columns (genes) with the largest $t$-values in the sense of group means. In what follows, we will conduct our analysis on both this subsample with $5,000$ genes and all the samples with $12,600$ genes.  
		
	We conduct the two sample covariance tests both within groups and between groups. More concretely, for within group test, we consider the normal group and divide the 50 samples into two subgroups with sample sizes $n_1=30, n_2=20.$ For between group test, we use all 50 samples for normal group and all $52$ samples for tumor group, i.e., $n_1=50, n_2=52.$ The results are summarized in Table \ref{prostate} and we can make the following conclusions.  
	
	First, all of our proposed method, LC2012 and HC2018 will be able to accept the null hypothesis for the within group test and reject the null hypothesis for the between group test for both data sets with different numbers of genes. Second, SY2010 is able to accept the null hypothesis for the between group test but has no power to reject the null hypothesis. Third, for the method CLX2013, even though it works for the data set with 5,000 selected genes as in the work \cite{cai2013two}, it fails for the whole data set with all the genes.

%		For this dataset, we first standardize the whole data by columns and then we divide it into two groups.  In our data analysis, we will compare both 5000 dataset and the full dataset. We compare our methods with \cite{cai2013two}, \cite{LC} and \cite{srivastava2010testing}.

%		\paragraph{Within group comparison}
%		Here we only consider normal group. We random split the 50 samples into two sets with 25 samples. We set $n=15$ and the tuning parameter c to be 0.5 by Algorithm \ref{algo2}. We can see that all four methods accept the null hypothesis and our p value is the largest in the full data. 
		
%				\paragraph{Between groups comparison}
%		We have 50 and 52 samples for each group, and we here set $n=30$. Using Algorithm \ref{algo2}, we set $\eta_0^i =0.5*std(\gamma_j^i )$. We can see from Table \ref{prostate} that proposed methods can distinguish the difference between different groups either under 0.05 or 0.01 significance level. For SY2010, it fails to reject both and for CLX2013, even if it can reject the null in the 5000 selected columns, its performance becomes worse in the full data. This result shows the better performance of propose method.

		\begin{table}[htp]
			\centering
			\label{prostate}
			\renewcommand{\arraystretch}{1.2}
			\begin{tabular}{|p{5cm}|c|c|c|c|}
				\hline
				\multirow{2}{5cm}{\textbf{Methods}} & \multicolumn{2}{c|}{\textbf{Within Group}} & \multicolumn{2}{c|}{\textbf{Between Groups}}\\
				% \hline
				% \textbf{Inactive Modes} & \textbf{Description}\\
				\cline{2-5}
				& 5,000 genes & 12,600 genes & 5,000 genes & 12,600 genes\\
				%\hhline{~--}
				\hline
				SY2010 &  Accept & Accept & Accept & Accept\\ \hline
				LC2012 & Accept & Accept & Reject & Reject \\ \hline
				CLX2013 & Accept & Reject & Reject & Reject \\ \hline
				HC2018 & Accept & Accept & Reject  &  Reject \\ \hline
			    Proposed & Accept & Accept & Reject & Reject \\ \hline
			\end{tabular}
			\caption{Comparison of results for the prostate cancer data. Here 5,000 genes only uses $p=5,000$ genes as in \cite{cai2013two} and 126,000 genes contains all genes. }\label{prostate}
		\end{table}

		%\subsubsection{30-FTR}
		%$n=30$, not random, first 30 samples from both groups, $E=\lambda^{(1)}_2$, $\eta=std(eigvals1)$
		
		\subsubsection{Acute Lymphoblastic Leukemia data}
The second data set \cite{chiaretti2004gene} contains gene expression of adult T-cell acute lymphocytic leukemia (ALL) of patients with different biological indices. This study focuses on the relation between overall gene expressions and molecular biology types and helps to reveal the mechanism between different gene expressions of ALL and their corresponding responses to therapy and survival.
		
	 The data set contains 128 patients and their genes with length of 12,625. There are six types of molecular biology in total. Here we only select the two groups with largest numbers of patients, NEG with size $ 74$ and BCR/ABL with size $37$. We conduct our study on the whole gene sequence that $p=12,625$. 
		
		We conduct the two sample covariance tests both within groups and between groups. More concretely, for within group test, we consider the NEG group and divide the 74 samples into two subgroups with sample sizes $n_1=30,n_2=44.$ For between group test, we use all 74 samples from the NEG group and all $37$ samples from BCR/ABL group, i.e., $n_1=74, n_2=37.$ The results are summarized in Table \ref{all}.

We can see that for this data set, only our proposed method works while SY2010, LC2012 and HC2018 fail to reject the null hypothesis for between group test and CLX2013 rejects the null hypothesis for within group test.

		\begin{table}[htp]
			\centering
			\label{all}
			\renewcommand{\arraystretch}{1.2}
			\begin{tabular}{|c|c|c|}
				\hline
				\textbf{Methods} & \textbf{Within Group} & \textbf{Between Groups}\\
				\hline
				SY2010 &  Accept & Accept \\ \hline
				LC2012 & Accept & Accept \\ \hline
				CLX2013 & Reject & Reject \\ \hline
				HC2018 & Accept & Accept  \\ \hline
				Proposed & Accept & Reject \\ \hline
			\end{tabular}
			\caption{Comparison of results for the ALL data.}		\label{all}
		\end{table}

\section{Discussions}
		
In this paper, we consider the test of  equality of two population covariance matrices (c.f. (\ref{nullhypothesis})) of ultra-high dimensional (c.f. (\ref{eq_dimension})) random vectors. We propose a novel and adaptive test procedure which (i). does not require specific assumption (e.g., comparable or balancing, etc.) on the sizes of two samples; (ii). does not need quantitative or structural assumptions of the population covariance matrices; (iii). does not need the parametric distributions or the detailed knowledge of the moments of the two populations.

Our approach is summarized in Algorithm \ref{alg:bootstrapping}. It consists of three important components. The first one is a data splitting procedure in Algorithm \ref{alg0}. The second one is the construction of a sequence of statistics based on a subset of eigenvalues of some sample covariance matrices. The selection of the subset of the eigenvalues relies on a location parameter and a bandwidth parameter which can be chosen automatically using (\ref{eq_median}) and Algorithm \ref{algo2}. The third one is the calculation of a summary statistic (\ref{eq_decisionrationdefinition}) and a threshold $\delta$ which is automatically generated from a calibration procedure in Algorithm \ref{algo3}.

The proposed methodology is highly inspired and justified by our theoretical development in random matrix theory.  We establish the asymptotic distributions of the statistics used in our method and conduct the power analysis. We justify that our method is powerful under very weak alternatives. We also conduct extensive numerical simulations and show that our method  significantly outperforms the existing ones developed in \cite{cai2013two,he2018high,LC,srivastava2010testing}, both in terms of size and power. Analysis of two real data sets is also carried out to demonstrate the usefulness and superior performance of our proposed methodology. An $\texttt{R}$ package $\texttt{UHDtst}$ is developed for easy implementation of our proposed methodology.

Several further works can be considered following the spirit of the current paper. First, besides the two sample covariance matrix test, people are also interested in high dimensional two sample mean test under various settings, for example see  \cite{10.1214/09-AOS716,chen2019two,10.1214/19-AOS1848}.   It is important to propose an adaptive, accurate and powerful test for the two sample means under the ultra-high dimensional setup (\ref{eq_dimension}). Second, in the current paper, we assume that the eigenvalues of $\Sigma_1$ and $\Sigma_2$ are bounded from above and below away from zero which is a realistic assumption in many applications. However, in other applications where a factor model is more beneficial, people may consider the spiked covariance matrix models where a few larger or divergent eigenvalues are added \cite{fan2022estimating,ke2023estimation,zhang2023factor}. It will be interesting to generalize our results and methods to the spiked model. Finally, since our algorithm involves a multiple data splitting step, it is worth asking whether we can  implement our Algorithm \ref{alg:bootstrapping} in a parallel     or distributed fashion \cite{dobriban2019deterministic,10.1214/20-AOS1984}.

%	connection and comparison of existing methods

%Our methodology has been rigorously tested against some existing methods \cite{cai2013two, LC,he2018high,srivastava2010testing}, using both simulated data based on scenarios from \cite{cai2013two, chen2010tests} and an original setting we develop, as well as real geno-datasets \cite{singh2002gene, chiaretti2004gene}. Overall, our approach consistently outperforms these existing methods \cite{cai2013two, LC,he2018high,srivastava2010testing}. Notably, the method from \cite{srivastava2010testing} struggles in most settings. The performance of \cite{cai2013two}'s method significantly declines when $n_1$ and $n_2$ were not comparable or when the data entries are non-Gaussian, with the size potentially reaching 1. While \cite{chen2010tests}'s method is accurate in most cases, however, its size does not match the precision of our proposed method in our specific setting.  {\color{blue}  Conclusion for \cite{he2018high} here} These results underscore the theoretical robustness and practical effectiveness of our approach in the ultra-high-dimensional regime.

%future works

%In the present paper {\color{red}: what we did, connections with literature/existing methods, some future works}

		\bigskip
		\begin{center}
			{\large\bf SUPPLEMENTARY MATERIAL}
		\end{center}
		
%		\begin{description}
%\item		
In this supplement, we provide the details of the technical proof and the automated procedures for selecting the tuning parameters. 
			
%			\item[Title:] Brief description. (file type)
			
			%\item
%		\end{description}
		\setcounter{figure}{0} 
		\setcounter{table}{0}
		\setcounter{section}{0}  
		\counterwithin{table}{section}
		\counterwithin{figure}{section}
\appendix		
		
		\section{Technical proof}\label{sec_techinicalproof}
In this section, we provide the technical proofs 	for Theorems \ref{thm_onesample} and \ref{thm_poweranalysis}. For Theorem \ref{thm_onesample}, due to similarity, we only focus on $\mathbb{T}_x$. 
We assume that $\Sigma_1$ admits the following spectral decomposition 
		$$
		\Sigma_1=O_1^{\top} \Lambda_1 O_1, \quad \Lambda_1=\operatorname{diag}\left(\sigma_1^{(1)}, \ldots, \sigma_p^{(1)}\right),
		$$
and $O_1^\top$ contains the eigenvectors of $\Sigma_1.$	 It is clear that $\mathcal{Q}_x$ in (\ref{eq_samplecovaraincematrixdefinition}) shares the same eigenvalues with 
		\begin{equation*}
			\mathcal{Q}_1=\Lambda_1^{1/2} O_1 \mathsf{X}\mathsf{X}^\top O_1^{\top} \Lambda_1^{1/2}, \ \ \mathsf{X}=(pn)^{-1/4} X.  
		\end{equation*} 
In what follows, for simplicity of presentation, whenever there is no danger of confusion,  we suppress the subscript and superscript related to $X$ and simply denote
\begin{equation}\label{eq_simplifiednotation}
 m(z)\equiv m_1(z),\quad \varrho \equiv \varrho_1,
\end{equation}		
\begin{equation}\label{eq_othernotations}
	\Sigma\equiv \Sigma_1,\quad O\equiv O_1,\quad \Lambda\equiv \Lambda_1=\operatorname{diag}\left(\sigma_1, \ldots, \sigma_p\right),\quad
	\mathcal{Q}\equiv\mathcal{Q}_1, \quad Q\equiv Q_x.
		\end{equation}

Furthermore, for $z \in \mathbb{C}_+,$ we define the resolvents as follows
\begin{equation}\label{eq_resolventdefinition}
	\begin{aligned}
		G(z)=&( \mathsf{X}^{\top} \Sigma \mathsf{X}-z)^{-1}, \quad
		\mathcal{G}(z)=&(\Lambda^{1/2} O\mathsf{X} \mathsf{X}^\top O^{\top} \Lambda^{1/2}-z)^{-1}.
	\end{aligned}
\end{equation}

		\subsection{Preliminary results}
In this section, we introduce some preliminary results. First, we provide an more explicit expression for the asymptotic densities associated with $\varrho_1$ from Lemma \ref{lem_density}. Recall the conventions in (\ref{eq_emperical}).  

\begin{lemma}\label{lem_densityestimatation}
Suppose the assumptions of Lemma \ref{lem_density} hold. Then we have that when $n$ is sufficiently large
\begin{equation}\label{eq_semicircleapproximation}
\mathrm{d} \varrho_1(x)= \left( \frac{\sqrt{4 \left(\fm_2(\Sigma_1)+\mathrm{O}(\phi^{-1/2}) \right)-(x-\fm_1(\Sigma_1)\phi^{1/2})^2}}{2 \pi \left(\fm_2(\Sigma_1)+\mathrm{O}(\phi^{-1/2})\right)} \right) \mathrm{d} x. 
\end{equation}
\end{lemma}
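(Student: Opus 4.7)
}
The plan is to take the self-consistent equation from Lemma \ref{lem_density} and expand it in powers of $\phi^{-1/2}$, using that in the ultra-high dimensional regime the bulk sits at scale $\phi^{1/2}$ while the Stieltjes transform remains of order one. Writing $s_j=\sigma_j^{(1)}$ and rearranging the summand,
\begin{equation*}
\frac{1}{p}\sum_{j=1}^{p}\frac{\phi}{\phi^{1/2}s_j^{-1}+m}
= \frac{\phi}{p}\sum_{j=1}^{p}\frac{s_j}{\phi^{1/2}+m s_j}
= \phi^{1/2}\cdot\frac{1}{p}\sum_{j=1}^{p}\frac{s_j}{1+m s_j/\phi^{1/2}}.
\end{equation*}
I would first show that on the relevant spectral domain---that is, for $z$ of the form $\phi^{1/2}\mathfrak{m}_1(\Sigma_1)+O(1)$---any solution $m=m_1(z)\in\mathbb{C}_+$ of the equation in Lemma \ref{lem_density} must satisfy $|m|=O(1)$. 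This can be done by a standard bootstrap: using condition 3 of Assumption \ref{assum_basic} ($s_j$ bounded above and below), the equation forces $|m|$ into a compact subset of $\mathbb{C}_+$ once $z$ sits in the putative support. Because the $s_j$ are uniformly bounded, it then follows that $|m s_j/\phi^{1/2}|=O(\phi^{-1/2})$, so the geometric expansion $(1+m s_j/\phi^{1/2})^{-1}=1-m s_j/\phi^{1/2}+O(\phi^{-1})$ is valid termwise.

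Feeding this expansion back and using $\frac{1}{p}\sum_j s_j^k=\mathfrak{m}_k(\Sigma_1)$, I obtain
\begin{equation*}
\frac{1}{m}=-z+\phi^{1/2}\mathfrak{m}_1(\Sigma_1)-m\,\mathfrak{m}_2(\Sigma_1)+O(\phi^{-1/2}),
\end{equation*}
which after multiplying by $m$ becomes the quadratic
\begin{equation*}
\mathfrak{m}_2(\Sigma_1)\,m^2-\bigl(\phi^{1/2}\mathfrak{m}_1(\Sigma_1)-z\bigr)m+1=O(\phi^{-1/2}).
\end{equation*}
Absorbing the $O(\phi^{-1/2})$ remainder into the variance coefficient, this is exactly the defining equation of the Stieltjes transform of a semicircle law centered at $\phi^{1/2}\mathfrak{m}_1(\Sigma_1)$ with variance $\mathfrak{m}_2(\Sigma_1)+O(\phi^{-1/2})$. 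Solving with the branch that maps $\mathbb{C}_+$ into $\mathbb{C}_+$ and applying the Stieltjes inversion formula $\varrho_1(x)=\pi^{-1}\lim_{\eta\downarrow 0}\mathrm{Im}\,m_1(x+\mathrm{i}\eta)$ yields the stated semicircle density.

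The main obstacle will be making the expansion rigorous at the level of the density rather than just formally at the level of the Stieltjes transform. Two issues need care. First, the $O(\phi^{-1/2})$ error in the quadratic must be shown to be uniform in $z$ over a complex neighborhood of $[\gamma_-^1,\gamma_+^1]$ with $\mathrm{Im}\,z$ bounded below by a polynomial in $\phi^{-1}$; this requires the a priori bound $|m|=O(1)$ to hold uniformly, which in turn uses the edge behavior of $\varrho_1$ established in \cite[Lemma 2.5]{DingWang2023} (cf.\ the intervals in (\ref{eq_intervaldefinition}) and the edge locations $\gamma_\pm^1\asymp\phi^{1/2}$). Second, transferring the $O(\phi^{-1/2})$ perturbation of the quadratic to an $O(\phi^{-1/2})$ perturbation of the radius and center of the semicircle requires a stability analysis near the edges, where the square root vanishes. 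Away from the edges this is straightforward, and at the edges the standard argument is to use the square-root Hölder modulus of the semicircle together with an implicit-function-style perturbation of the edge values, as is routine in the random matrix literature.
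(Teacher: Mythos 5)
Your proposal is correct and follows essentially the same route as the paper's proof: rewrite the self-consistent equation from Lemma \ref{lem_density}, use the a priori bound $m_1\asymp 1$ (the paper simply cites Lemma 5.1 of \cite{DingWang2023} rather than running a bootstrap), expand the integrand geometrically to order $\phi^{-1/2}$, arrive at the approximate semicircle quadratic, solve for the $\mathbb{C}_+\to\mathbb{C}_+$ branch, and apply Stieltjes inversion. The uniformity and edge-stability caveats you raise are legitimate but are left implicit in the paper, which treats the $\mathrm{O}(\phi^{-1/2})$ term as absorbed into the coefficient of the quadratic without further comment.
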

\begin{proof}
Using the conventions in (\ref{eq_emperical}) and Lemma \ref{lem_density}, we can rewrite that  
\begin{equation}\label{eq_expansionform}
\frac{1}{m_1}=-z+\int \frac{\phi}{\phi^{1/2} x^{-1}+m_1} \pi_1(\mathrm{d} x). 
\end{equation}
By Lemma 5.1 of \cite{DingWang2023}, we have that $m_1 \asymp 1.$ Therefore, for $x \in \operatorname{supp}(\pi_1)$ and sufficiently large $n,$ we have that $x m_1 \phi^{-1/2}=\mathrm{o}(1).$ This yields that 
\begin{equation*}
\frac{1}{1+\phi^{-1/2}x m_1}=1-\phi^{-1/2} x m_1+\mathrm{O}(\phi^{-1}).  
\end{equation*} 
Insert the above equation into (\ref{eq_expansionform}), we obtain that 
\begin{equation*}
\frac{1}{m_1}=-z+\phi^{1/2} \int x(1-\phi^{-1/2}x m_1) \pi_1(\mathrm{d} x)+\mathrm{O}(\phi^{-1/2}), 
\end{equation*}
which together with the convention (\ref{eq_emperical}) yields that 
\begin{equation*}
\frac{1}{m_1}=-z+\phi^{1/2} \fm_1(\Sigma_1)-\left(\fm_2(\Sigma_1) m_1+\mathrm{O}(\phi^{-1/2}) \right).
\end{equation*}
Again using the fact that $m_1 \asymp 1,$ we can further rewrite the above equation into a quadratic form that 
\begin{equation}\label{eq_stiletjestransformquadratic}
\left(\mathfrak{m}_2(\Sigma_1)+\mathrm{O}(\phi^{-1/2}) \right) m_1^2+(z-\phi^{1/2} \mathfrak{m}_1(\Sigma_1))m_1+1=0.
\end{equation}
By the definition of the Stieltjes transform in (\ref{eq_definitionStieltjestransform}), we  have that $m_1(z)$ maps $\mathbb{C}_+$ to $\mathbb{C}+_.$ Solving the quadratic equation (\ref{eq_stiletjestransformquadratic}) and finding the solution that maps $\mathbb{C}_+$ to $\mathbb{C}_+,$ we find that for $\mathrm{Re} \ z \in [\phi^{1/2} \mathfrak{m}_1(\Sigma_1)-2 \sqrt{\mathfrak{m}_2(\Sigma_1)+\mathrm{O}(\phi^{-1/2})}, \phi^{1/2} \mathfrak{m}_1(\Sigma_1)+2 \sqrt{\mathfrak{m}_2(\Sigma_1)+\mathrm{O}(\phi^{-1/2})}],$ 
\begin{equation}\label{eq_m1form}
m_1(z)=\frac{-(z-\phi^{1/2} \mathfrak{m}_1(\Sigma_1))+\mathrm{i} \sqrt{4 (\mathfrak{m}_2(\Sigma_1)+\mathrm{O}(\phi^{-1/2}))-(z-\phi^{1/2} \mathfrak{m}_1(\Sigma_1))^2 }}{2 (\mathfrak{m}_2(\Sigma_1)+\mathrm{O}(\phi^{-1/2}))}.
\end{equation} 
Recall the inverse formula that (for example see Section B.2 of \cite{MR2567175})
\begin{equation}\label{eq_inverseformula}
\mathrm{d} \varrho_1(x)=\frac{1}{\pi} \lim_{\eta \downarrow 0} m_1(x+\mathrm{i} \eta). 
\end{equation}
Together with (\ref{eq_m1form}), we can complete the proof.  
\end{proof}

Then we summarize and prove the local laws which are the key ingredients for our technical proof. For some small fixed constant $0<\tau<1,$ denote the set of the spectral parameters as
\begin{equation}\label{eq_parameterspectral}
\mathbf{S} \equiv \mathbf{S}(\tau):= \left\{ z=E+\mathrm{i}\eta \in \mathbb{C}: |E-\phi^{1/2} \mathfrak{m}_1| \leq \tau^{-1}, \ n^{-1+\tau} \leq \eta \leq \tau^{-1} \right\}, 
\end{equation}
where we recall $\mathfrak{m}_1=\int x \pi(\mathrm{d} x).$ For $z \in \mathbf{S},$ $m(z)$ in (\ref{eq_simplifiednotation}) and $\Lambda$ in (\ref{eq_othernotations}), we further denote  
\begin{equation}\label{eq_Pi(z)}
\Pi(z) \equiv \Pi(z, \Sigma):=-z^{-1}\left(I_p+\phi^{-1 / 2} m(z) \Lambda\right)^{-1}, 
\end{equation}
and 
\begin{equation}\label{eq_psi(z)}
\Psi(z):=\sqrt{\frac{\operatorname{Im} m(z)}{n \eta}}+\frac{1}{n \eta}.
\end{equation}  

Then the local laws are summarized in the following theorem. We will need the following notion of stochastic domination which is commonly used in modern random matrix theory \cite{erdHos2017dynamical}. 
\begin{definition}[Stochastic domination]
(i) Let
\[A=\left(A^{(n)}(u):n\in\mathbb{N}, u\in U^{(n)}\right),\hskip 10pt B=\left(B^{(n)}(u):n\in\mathbb{N}, u\in U^{(n)}\right),\]
be two families of nonnegative random variables, where $U^{(n)}$ is a possibly $n$-dependent parameter set. We say $A$ is stochastically dominated by $B$, uniformly in $u$, if for any fixed (small) $\epsilon>0$ and (large) $D>0$, 
\[\sup_{u\in U^{(n)}}\mathbb{P}\left(A^{(n)}(u)>n^\epsilon B^{(n)}(u)\right)\le n^{-D},\]
for large enough $n \ge n_0(\epsilon, D)$, and we shall use the notation $A\prec B$. Throughout this paper, the stochastic domination will always be uniform in all parameters that are not explicitly fixed, such as the matrix indices and the spectral parameter $z$.  Throughout the paper, even for negative or complex random variables, we also write $A \prec B$ or $A=\mathrm{O}_\prec(B)$ for $|A| \prec B$.

\noindent (ii) We say an event $\Xi$ holds with high probability if for any constant $D>0$, $\mathbb P(\Xi)\ge 1- n^{-D}$ for large enough $n$.
\end{definition}  

		\begin{theorem}\label{thm_locallaw}
			Suppose Assumption \ref{assum_basic} holds. For all $z \in \mathbf{S}$ uniformly and any deterministic vectors $\mathbf{u}_k \in \mathbb{R}^p, \mathbf{v}_k \in \mathbb{R}^{n}, k=1,2$, we have that:
\begin{enumerate}			
			\item[(1).] For the resolvents of $Q$ and $\mathcal{Q}$, we have that
			$$
			\mathbf{u}_1^* \mathcal{G}(z) \mathbf{u}_2=\mathbf{u}_1^* \Pi(z) \mathbf{u}_2+\mathrm{O}_{\prec}\left(\phi^{-1} \Psi(z)\right),
			$$
			where $\Pi(z)$ is defined in (\ref{eq_Pi(z)}) and $\Psi(z)$ is defined in (\ref{eq_psi(z)}). Moreover, we have that
			$$
			\mathbf{v}_1^* G(z) \mathbf{v}_2=m(z) \mathbf{v}_1^* \mathbf{v}_2+\mathrm{O}_{\prec}(\Psi(z)) .
			$$
			\item[(2).] Denote the normalized resolvents as
			$$
			m_{1 n}(z)=\frac{1}{p} \sum_{i=1}^p\mathcal{G}_{ii}(z), m_{2 n}(z)=\frac{1}{n} \sum_{i=1}^n G_{ii}(z).
			$$
			We have that
			$$
			m_{1 n}(z)=-\frac{1}{z p} \sum_{i=1}^p \frac{1}{1+\sigma_i \phi^{-1 / 2} m(z)}+\mathrm{O}_{\prec}\left((p \eta)^{-1}\right),
			$$
			and
			$$
			m_{2 n}(z)=m(z)+\mathrm{O}_{\prec}\left((n \eta)^{-1}\right) .
			$$
			\item[(3).] Denote $Y=\Lambda^{1/2}OX.$ For $1 \leq i \leq p$ and $1 \leq j \leq n,$ we have that 
			\begin{equation*}
			(GY^*)_{ji}=\mathrm{O}_{\prec}(\phi^{-1/4} \Psi(z)).
			\end{equation*}
%			(3). Recall the conventions for the indices in Definition 5.3. For $G(z)$ defined in (5.10), we have that uniformly in $\mu \in \mathcal{I}_2$ and $i \in \mathcal{I}_1$
%			$$
%			G_{i \mu}(z)=\mathrm{O}_{\prec}\left(\phi^{-1 / 4} \Psi(z)\right) .
%			$$
\end{enumerate}
		\end{theorem}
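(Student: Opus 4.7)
The plan is to establish the local laws for the resolvents $\mathcal{G}(z)$ and $G(z)$ by combining the self-consistent equation for the Stieltjes transform $m(z)$ from Lemma \ref{lem_density} with concentration arguments for quadratic forms. The overall framework follows the standard sample covariance matrix local law machinery (see \cite{DingWang2023}), but must be carried out under the specific ultra-high dimensional scaling $(pn)^{-1/4}$ so that the perturbative parameter is $\phi^{-1/2}$.

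For parts (1) and (2), I would first derive entrywise estimates for $G_{ii}(z)$ via the Schur complement formula, writing
$$G_{ii}(z) = \frac{1}{-z - \phi^{-1/2}\mathsf{x}_i^* \Sigma \mathsf{x}_i + z \phi^{-1/2} \mathsf{x}_i^* \Sigma^{1/2} G^{(i)}(z) \Sigma^{1/2} \mathsf{x}_i},$$
where $\mathsf{x}_i$ is the $i$th column of $\mathsf{X}$ and $G^{(i)}$ is the resolvent with the $i$th column removed. Applying standard large deviation bounds (in the spirit of Hanson--Wright) reduces the quadratic form to its expectation, yielding a self-consistent equation whose stability produces the deterministic approximation $m(z)$ with error $\mathrm{O}_{\prec}(\Psi(z))$. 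The bound on $m_{1n}(z)$ follows from the trace-identity relating $\mathcal{G}$ and $G$, namely $p^{-1}\operatorname{Tr}\mathcal{G}(z) = -z^{-1} + z^{-1}\phi^{1/2}n^{-1}\operatorname{Tr}(I + \phi^{-1/2}m(z)\Lambda)^{-1} + \mathrm{O}_\prec((p\eta)^{-1})$, combined with the averaged estimate on $m_{2n}(z)$. The anisotropic upgrade to arbitrary deterministic vectors $\mathbf{u}_1, \mathbf{u}_2$ and $\mathbf{v}_1,\mathbf{v}_2$ then proceeds by the polarization/fluctuation averaging techniques developed for separable covariance ensembles, where the additional $\phi^{-1}$ improvement in the bound on $\mathbf{u}_1^*\mathcal{G}\mathbf{u}_2 - \mathbf{u}_1^*\Pi\mathbf{u}_2$ comes from the ultra-high dimensional scaling and the fact that $\Lambda^{1/2}O\mathsf{X}\mathsf{X}^\top O^\top\Lambda^{1/2}$ has rank at most $n \ll p$.

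For part (3), the key observation is that $Y = \Lambda^{1/2}OX = (pn)^{1/4}\Lambda^{1/2}O\mathsf{X}$, so
$$(GY^*)_{ji} = (pn)^{1/4}\bigl(G \mathsf{X}^* O^\top \Lambda^{1/2}\bigr)_{ji} = (pn)^{1/4}\sum_k G_{jk}(z)(\mathsf{X}^* O^\top\Lambda^{1/2})_{ki}.$$
Using the resolvent identity together with the fact that $\mathsf{X}^*\Sigma\mathsf{X} = G(z)^{-1} + zI$, the vector $G \cdot \mathsf{X}^*\Sigma^{1/2}$ can be rewritten in a form where one row is decoupled from the rest of $G$. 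Then the entrywise and anisotropic estimates from part (1), applied with appropriately chosen test vectors, produce the improved $\phi^{-1/4}\Psi(z)$ bound; the extra $\phi^{-1/4}$ factor reflects the gain from the rank deficiency of $\mathcal{Q}$ and the scaling of $\mathsf{X}$.

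The main obstacle I expect is tracking the $\phi^{-1/2}$ perturbative expansion precisely throughout the proof, since in the ultra-high dimensional regime $p \gg n$ the usual sample covariance local law must be re-derived with the Stieltjes transform satisfying the modified self-consistent equation of Lemma \ref{lem_density} rather than the Marchenko--Pastur equation. In particular, the stability analysis of this equation on the entire spectral domain $\mathbf{S}$, especially near the edges $\phi^{1/2}\mathfrak{m}_1 \pm 2\sqrt{\mathfrak{m}_2}$ identified in Lemma \ref{lem_densityestimatation}, requires care because the relevant quantities degenerate there. Once the stability is established with sharp constants, however, the anisotropic and averaged statements follow from the now-standard fluctuation averaging framework, and part (3) is an immediate consequence of part (1).
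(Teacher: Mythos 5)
Your strategy—re-deriving the local laws from first principles via Schur complements, concentration of quadratic forms, stability of the self-consistent equation, and fluctuation averaging—is a legitimate route in principle, but it is genuinely different from what the paper does, and far longer. The paper does not re-derive anything: it observes that the theorem was already proved for \emph{diagonal} $\Sigma$ (i.e., $O=I$) in Theorem~5.5 of \cite{DingWang2023}, and removes the diagonal restriction by the two-step scheme of \cite{FYEJP}. First, for Gaussian $X$, diagonality of $\Sigma$ is without loss of generality by rotational invariance, so the diagonal result applies directly; second, a Lindeberg-type Green's-function comparison, as in Sections~5 and~6 of \cite{FYEJP}, transfers the Gaussian statement to general $X$ satisfying Assumption~\ref{assum_basic}. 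That reduction sidesteps the entire self-consistent-equation and fluctuation-averaging machinery—this is why the paper's proof is a single paragraph, whereas your route would effectively need to re-establish \cite{DingWang2023} in full before you could even address the non-diagonal $O$.

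If you do pursue your route, a few points deserve more care than the sketch gives them. Your Schur-complement formula sandwiches $G^{(i)}$ (an $(n-1)\times(n-1)$ matrix) between $\Sigma^{1/2}$ factors (which are $p\times p$), so as written it is dimensionally inconsistent; the quadratic form must involve the $p\times p$ co-resolvent $\mathsf{X}^{(i)} G^{(i)} (\mathsf{X}^{(i)})^{\top}$, with a $z$-factor from the resolvent/co-resolvent identity. The anisotropic bound for $\mathcal{G}$ with the improved $\phi^{-1}\Psi(z)$ error is not a routine polarization upgrade: extracting the extra $\phi^{-1}$ requires carefully tracking the rank-$n \ll p$ structure of $\mathcal{Q}$ through the isotropic argument, which is precisely the delicate part in \cite{DingWang2023}. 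Finally, part~(3) is not an ``immediate consequence'' of part~(1)—$GY^{*}$ is a mixed resolvent–data object that needs its own expansion (via a linearization / bordered-matrix formulation or a Ward-type identity), not a direct application of the vector bounds in (1). None of this is fatal, but it is exactly the hard technical content, and the sketch treats it as routine.
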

		\begin{proof}
The results have been proved in Theorem 5.5 of \cite{DingWang2023} when $\Sigma$ is diagonal in the sense that $O=I$ as in (\ref{eq_othernotations}). For general $O$ and the associated resolvents in (\ref{eq_resolventdefinition}), we can follow the discussions of the proofs of Theorem 3.6 of \cite{FYEJP}. The arguments in \cite{FYEJP} consist of two steps. In the first step, we need to establish the results for Gaussian $X.$ Clearly, when $X$ is Gaussian, it suffices to assume that $\Sigma$ is diagonal so that Theorem 5.5 of \cite{DingWang2023} implies Theorem \ref{thm_locallaw}. In the second step, we need to conduct a comparison argument 	as in Sections 5 and 6 of \cite{FYEJP}. In fact, the forms of the resolvents in (\ref{eq_resolventdefinition}) are in the same fashion as those in equations (2.6) and (2.7) of \cite{FYEJP} so that we can follow lines of the arguments in Sections 5 and 6 of \cite{FYEJP} to complete the proof. We omit further details. 	
		\end{proof}
An important consequence of Theorem \ref{thm_locallaw} is the result concerning the rigidity of the eigenvalues. For $1 \leq i \leq n,$ denote the sequence of classical locations of $\varrho$ as $\{\omega_i\}$ that
\begin{equation}
\int_{\omega_i}^{\infty} \mathrm{d} \varrho(x)=\frac{i}{n}.
\end{equation} 
Denote the nonzero eigenvalues of $Q$ in the decreasing order as $\lambda_1 \geq \lambda_2 \geq \cdots \lambda_n>0.$
\begin{lemma}\label{lem_rigidity} Suppose Assumption \ref{assum_basic} holds. Then we have that 
\begin{equation*}
|\lambda_i-\omega_i| \prec \left( \min \left\{ i, n+1-i \right\}  \right)^{-1/3} n^{-2/3}. 
\end{equation*}
\end{lemma}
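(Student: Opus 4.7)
The plan is to derive the rigidity estimate from the local law in Theorem \ref{thm_locallaw} by the standard Helffer--Sj\"ostrand route, exploiting the precise square-root edge behavior of $\varrho$ provided by Lemma \ref{lem_densityestimatation}. The first step is to record the shape of the limiting density: Lemma \ref{lem_densityestimatation} shows that $\varrho$ is (up to lower-order corrections in $\phi^{-1/2}$) a semicircle law supported on a single compact interval $[\gamma_-,\gamma_+]$ with $\gamma_+-\gamma_-=\mathrm{O}(1)$, so that near the edges we have $\mathrm{d}\varrho(x)\asymp \sqrt{(\gamma_+-x)_+}\,\mathrm{d}x$ and analogously at $\gamma_-$. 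This square-root behavior implies that the classical locations satisfy $\gamma_+-\omega_i \asymp (i/n)^{2/3}$ for small $i$, and symmetrically near the lower edge, while $\omega_i-\omega_{i+1}\asymp n^{-1}$ in the bulk.

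Next I would convert the Stieltjes transform estimate $m_{2n}(z)=m(z)+\mathrm{O}_\prec((n\eta)^{-1})$ from Theorem \ref{thm_locallaw}(2) into an estimate on the eigenvalue counting function $N(E):=\#\{i:\lambda_i\ge E\}$. Writing $N(E)-n\int_E^{\gamma_+}\mathrm{d}\varrho(x)$ as a contour integral of $m_{2n}-m$ against a suitable mollified indicator via the Helffer--Sj\"ostrand formula, and breaking the contour according to the dyadic scales of $\eta$, one obtains
\begin{equation*}
\sup_E \bigl| N(E) - n\textstyle\int_E^{\gamma_+}\mathrm{d}\varrho(x)\bigr|\prec 1
\end{equation*}
for $E$ in the bulk, with the standard refinement near the edge that gives the sharper $\mathrm{O}_\prec(1)$ bound up to distance $n^{-2/3+\epsilon}$ from $\gamma_\pm$. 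This is a routine application of the template established by Erd\H os--Yau--Schlein and implemented in numerous sample-covariance contexts (including the setting of \cite{DingWang2023} on which Theorem \ref{thm_locallaw} is based), so the only input required is the local law already available at scales $\eta\ge n^{-1+\tau}$ in the domain $\mathbf{S}$ defined in (\ref{eq_parameterspectral}).

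The final step inverts the counting function estimate into a pointwise statement on $\lambda_i$. Given $|N(E)-n\int_E^{\gamma_+}\mathrm{d}\varrho|\prec 1$, and combining with the square-root density so that the derivative $\varrho(\omega_i)\asymp n^{-1/3}(\min\{i,n+1-i\})^{1/3}$, the implicit-function step yields $|\lambda_i-\omega_i|\prec (\min\{i,n+1-i\})^{-1/3}n^{-2/3}$, which is exactly the claimed bound. The only genuinely delicate part of this program is the edge refinement: one must show that no eigenvalue escapes beyond $\gamma_++n^{-2/3+\epsilon}$ (and symmetrically at $\gamma_-$), which requires a separate upper-edge argument—typically either a moment-method bound or a careful use of the local law at the optimal scale $\eta=n^{-2/3}(\max\{1,i\})^{-1/3}$ near $\omega_i$. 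This edge argument, rather than the bulk inversion, will be the main obstacle, but it proceeds along the standard lines developed for deformed sample covariance ensembles in \cite{DingWang2023,FYEJP}, which already cover the anisotropic structure of $\mathcal{Q}$ after conjugation by $O$.
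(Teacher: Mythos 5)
Your proposal is correct and follows essentially the same route as the paper, which simply cites the routine Helffer--Sj\"ostrand argument (Chapter 11 of \cite{erdHos2017dynamical} and \cite[Section 4.3]{BEYY}) applied to the local law of Theorem \ref{thm_locallaw}. Your spelled-out chain (local law $\Rightarrow$ counting-function estimate $\Rightarrow$ inversion via the square-root density of Lemma \ref{lem_densityestimatation}, plus the separate upper-edge control) is exactly the standard template the authors invoke.
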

\begin{proof}
The proof follows from a routine application of the Helffer-Sj{\"o}strand functional calculus; see Chapter 11 of \cite{erdHos2017dynamical} or \cite[Section 4.3]{BEYY}. 
\end{proof}		
\begin{remark}\label{rem_rigidity}
Two remarks are in order. First, when the null hypothesis (\ref{nullhypothesis}) holds, using our splitting scheme Algorithm \ref{alg0}, it is easy to see from Lemma \ref{lem_density} that the asymptotic density associated with the split data sets $\mathcal{X}^s, \ \mathcal{Y}^s, \ \mathcal{Z}^s$ are identical, denoted as $\varrho$. Consequently, for $\gamma$ defined in (\ref{eq_median}), we see from Lemma \ref{lem_rigidity} that 
\begin{equation}\label{eq_controlone}
\gamma=\omega+\mathrm{O}_{\mathbb{P}}(n^{-1}), 
\end{equation}
where 
\begin{equation*}
\omega:=
\begin{cases}
\omega_{(n+1)/2}, & \ \text{if} \ n \ \text{is odd} \\
\frac{\omega_{n/2}+\omega_{n/2+1}}{2}, & \ \text{if} \ n \ \text{is even}
\end{cases}.
\end{equation*}
Then if we set $\gamma_{\pm}$ as the endpoints of the support of $\varrho,$ it is clear that  with high probability, 
\begin{equation*}
\gamma_-<\gamma<\gamma_+.
\end{equation*}

Second, according to Lemma \ref{lem_densityestimatation} and the conventions in (\ref{eq_othernotations}), we have that 
\begin{equation*}
\gamma_{\pm}=\mathfrak{m}_1(\Sigma) \phi^{1/2}\pm \mathfrak{m}_2(\Sigma)+\mathrm{O}(\phi^{-1/2}).
\end{equation*}
Moreover, since (\ref{eq_semicircleapproximation}) is of a semicircle form, we have that
\begin{equation}\label{eq_approximate}
\omega=\mathfrak{m}_1(\Sigma) \phi^{1/2}+\mathrm{O}(\phi^{-1/2}). 
\end{equation}
\end{remark}

\subsection{Proof of Theorem \ref{thm_poweranalysis}}	\label{sec_32proofdetails}	
Without loss of generality, in what follows, we assume that $\mathcal{X}$ has more samples than $\mathcal{Y}$ in the sense that $n_1 \geq n_2.$ Then according to the splitting procedure Algorithm \ref{alg0} and Lemma \ref{lem_rigidity} (or (\ref{eq_controlone}) and (\ref{eq_approximate})), we have that
\begin{equation}\label{eq_convergence}
\gamma=\mathfrak{m}_1(\Sigma_1) \phi^{1/2}+\mathrm{O}_{\mathbb{P}}(n^{-1}+\phi^{-1/2}). 
\end{equation}
Since $(\mathcal{X}^s, \mathcal{Y}^s, \mathcal{Z}^s)$ is an $\epsilon$-efficient data splitting, by Definition \ref{defn_inefficient} and Lemma \ref{lem_rigidity}, we find that the assumptions of Theorem \ref{thm_onesample} hold with high probability. Consequently, we have that 
\begin{equation}\label{eq_Talternativedistribution}
\mathbb{T}- \left( \mathsf{M}_x-\mathsf{M}_y \right) \Rightarrow \mathcal{N}(0, 2 \mathsf{v}). 
\end{equation}
Therefore, it suffices to study $\mathsf{M}_x-\mathsf{M}_y$ under $\mathbf{H}_a$ in (\ref{hypothesis_alternative}).  By the definitions in (\ref{mean}), we have that 
\begin{equation}\label{eq_decompositiononeoneone}
\mathsf{M}_x-\mathsf{M}_y= n \int_{\mathbb{R}} \frac{t-\gamma}{\eta_0} \mathcal{K} \left( \frac{t-\gamma}{\eta_0} \right)\left[ \mathrm{d} \varrho_1(t)-\mathrm{d} \varrho_2(t) \right].
\end{equation} 
According to Lemma \ref{lem_densityestimatation}, we see that 
\begin{align*}
\mathrm{d} \varrho_1(t)-\mathrm{d} \varrho_2(t)= \Biggl( & \frac{\sqrt{4  \left(\mathfrak{m}_2(\Sigma_1)  +\mathrm{O}(\phi^{-1/2})\right)-(t-\mathfrak{m}_1(\Sigma_1) \phi^{1/2})^2}}{2 \pi \left( \mathfrak{m}_2(\Sigma_1)+\mathrm{O}(\phi^{-1/2}) \right)} \\
&-\frac{\sqrt{4 \left(\mathfrak{m}_2(\Sigma_2)+\mathrm{O}(\phi^{-1/2}) \right)-(t-\mathfrak{m}_1(\Sigma_2) \phi^{1/2})^2}}{2 \pi \left(\mathfrak{m}_2(\Sigma_2)+\mathrm{O}(\phi^{-1/2}) \right)} \Biggr)\mathrm{d}t. 
\end{align*}
In what follows, for notional simplicity, for $i=1,2,$ we rewrite
$$\mathrm{d}\varrho_i(t) \equiv d_i(t)\mathrm{d}t:=\frac{2\sqrt{\mathfrak{a}_i^2-(t-\mathfrak{b}_i)^2   }   }{\pi\mathfrak{a}_i^2  }\mathrm{d} t,$$ where we used the short-hand notations that  $\mathfrak{a}_i:=2 \sqrt{\mathfrak{m}_2\left(\Sigma_i\right)+\mathrm{O}\left(\phi^{-1 / 2}\right)}, \mathfrak{b}_i:=\mathfrak{m}_1\left(\Sigma_i\right) \phi^{1 / 2}$. 
In view of the definition of $\mathcal{K}(x)$ in (\ref{eq_mathcalK}), we only need to calculate the integral on the interval $[\gamma-\mathrm{O}(\eta_0), \gamma+\mathrm{O}(\eta_0)]$ whose length is just $\mathrm{O}(\eta_0)$. 

Using the above notations, in view of (\ref{eq_decompositiononeoneone}), we denote that
\begin{align*}
	I:=\int_{\gamma-1.05\eta_0}^{\gamma+1.05\eta_0} \frac{t-\gamma}{\eta_0}\mathcal{K}\left(\frac{t-\gamma}{\eta_0}\right) (d_1(t)-d_2(t)) \mathrm{d}t.
\end{align*}
In what follows, we expand $d_i(t), i=1,2,$ around $\gamma$ to the $(2k)$th order, where $k$ is some finite positive integer to be determined in the end.  More specifically, since the splitting is $\epsilon$-efficient, we can rewrite $I$ as  
\begin{equation*}
I=\int_{\gamma-1.05\eta_0}^{\gamma+1.05\eta_0} \frac{t-\gamma}{\eta_0}\mathcal{K}\left(\frac{t-\gamma}{\eta_0}\right) \left(\sum_{j=0}^{2k}\frac{(t-\gamma)^j}{j!}(d_1^{(j)}(\gamma)-d_2^{(j)}(\gamma))+\mathrm{o}\left((t-\gamma)^{2k} \right)\right) \mathrm{d}t.
\end{equation*}
After a straightforward calculation using the definition of $\mathcal{K}(x)$ in (\ref{eq_mathcalK}), we can further obtain that  
\begin{align}\label{eq_maincontrolterm}	
I=\int_{0}^{1.05\eta_0} \frac{1}{\eta_0}\mathcal{K}\left(\frac{s}{\eta_0}\right) \left(\sum_{j=1}^{k}\frac{2(s)^{2j}}{(2j-1)!}(d_1^{(2j-1)}(\gamma)-d_2^{(2j-1)}(\gamma))+\mathrm{o}(s^{2k+1})\right) \mathrm{d}s.  
\end{align}

First, the error term can be controlled as $$\int_{0}^{1.05\eta_0} \frac{1}{\eta_0}\mathcal{K}\left(\frac{s}{\eta_0}\right)\mathrm{o}(s^{2k+1}) \mathrm{d}s=\mathrm{o}(\eta_0^{2k+1}).$$
Second, let $\tilde{s}:=s/\eta_0$ and we observe that for all $1 \leq j \leq k$
 \begin{align*}
	&\frac{2}{(2j-1)!\eta_0}\int_{0}^{1.05\eta_0}\mathcal{K}\left(\frac{s}{\eta_0}\right)s^{2j}\mathrm{d}s\\ 
&	=  	\frac{2}{(2j-1)!\eta_0}\left(\int_{0}^{\eta_0}+\int_{\eta_0}^{1.05\eta_0}\right)\mathcal{K}\left(\frac{s}{\eta_0}\right)s^{2j}\mathrm{d}s\\ 
&	=  	\frac{2}{(2j-1)!\eta_0}\int_{0}^{\eta_0}s^{2j}\mathrm{d}s+\frac{\eta_0^{2j}}{(2j-1)!}\int_{1}^{1.05}\mathcal{K}\left(\tilde{s}\right)\left(\tilde{s}\right)^{2j}\mathrm{d}\left(\tilde{s}\right)\\ 
&	=\left( \frac{2}{(2j-1)!(2j+1)}+\frac{\int_{1}^{1.05}\mathcal{K}\left(\tilde{s}\right)\left(\tilde{s}\right)^{2j}\mathrm{d}\left(\tilde{s}\right)}{(2j-1)!} \right)\eta_0^{2j}\\ 
&	= c(j)\eta_0^{2j},
\end{align*}
where   $c(j)$ is a positive constant only depending on  $j$, since $\int_{1}^{1.05}\mathcal{K}\left(\tilde{s}\right)\left(\tilde{s}\right)^{2j}\mathrm{d}\left(\tilde{s}\right)$ is a scale-free positive constant.

Third, we proceed to understand $d_1^{(j)}(\gamma)-d_2^{(j)}(\gamma).$ Due to similarity, we focus our discussion on $j=1.$ By a straightforward but tedious calculation, we have that  
\begin{align*}& \pi(d_1^{(1)}(\gamma)-d_2^{(1)}(\gamma))\\
	=&-\frac{2(\gamma-\mathfrak{b}_1)}{\mathfrak{a}_1^2\sqrt{\mathfrak{a}_1^2-(\gamma-\mathfrak{b}_1)^2}}+\frac{2(\gamma-\mathfrak{b}_2)}{\mathfrak{a}_2^2\sqrt{\mathfrak{a}_2^2-(\gamma-\mathfrak{b}_2)^2}}\\
	=&-2(\mathfrak{a}_2^2-\mathfrak{a}_1^2)\frac{(\gamma-\mathfrak{b}_1)\left(\mathfrak{a}_2^4+\mathfrak{a}_2^2\mathfrak{a}_1^2+\mathfrak{a}_1^4-(\gamma-\mathfrak{b}_1)^2(\mathfrak{a}_2^2+\mathfrak{a}_1^2)\right)}{\mathfrak{a}_1^2\sqrt{\mathfrak{a}_1^2-(\gamma-\mathfrak{b}_1)^2}\mathfrak{a}_2^2\sqrt{\mathfrak{a}_2^2-(\gamma-\mathfrak{b}_2)^2}\left(\mathfrak{a}_1^2\sqrt{\mathfrak{a}_1^2-(\gamma-\mathfrak{b}_1)^2}+\mathfrak{a}_2^2\sqrt{\mathfrak{a}_2^2-(\gamma-\mathfrak{b}_2)^2}\right)},\\ &-2(\mathfrak{b}_2-\mathfrak{b}_1)\left(\frac{\mathfrak{a}_2^4((\gamma-\mathfrak{b}_1)+(\gamma-\mathfrak{b}_2))(\gamma-\mathfrak{b}_1)}{\mathfrak{a}_1^2\sqrt{\mathfrak{a}_1^2-(\gamma-\mathfrak{b}_1)^2}\mathfrak{a}_2^2\sqrt{\mathfrak{a}_2^2-(\gamma-\mathfrak{b}_2)^2}\left(\mathfrak{a}_1^2\sqrt{\mathfrak{a}_1^2-(\gamma-\mathfrak{b}_1)^2}+\mathfrak{a}_2^2\sqrt{\mathfrak{a}_2^2-(\gamma-\mathfrak{b}_2)^2}\right)}\right.\\ 
	&\left.+\frac{1}{\mathfrak{a}_2^2\sqrt{\mathfrak{a}_2^2-(\gamma-\mathfrak{b}_2)^2}}\right).
\end{align*}
In view of (\ref{eq_convergence}), we can see that for some constant $\mathsf{c}=\mathrm{O}(n^{-1}+\phi^{-1/2}),$ with high probability
\begin{equation*}
d_1^{(1)}(\gamma)-d_2^{(1)}(\gamma) \asymp \mathsf{c}(\mathfrak{a}_2-\mathfrak{a}_1)+(\mathfrak{b}_2-\mathfrak{b}_1), 
\end{equation*}  
where we used the assumption that the splitting is $\epsilon$-efficient. We can repeat the discussions for $j \geq 1$ and conclude that 
\begin{equation*}
d_1^{(j)}(\gamma)-d_2^{(j)}(\gamma) \asymp \left[\mathsf{c}(\mathfrak{a}_2-\mathfrak{a}_1)+(\mathfrak{b}_2-\mathfrak{b}_1) \right](1+\mathrm{o}_{\mathbb{P}}(1)). 
\end{equation*} 

Inserting all the above controls back into (\ref{eq_maincontrolterm}), under the assumption of (\ref{eq_condition}), by choosing $k\ge -2\frac{\log n}{\log \eta_0}\asymp 1$ such that $n\eta_0^{2k+1}=\mathrm{o}(1),$ we find that with high probability 
\begin{equation*}
nI \asymp n\eta_0^2 (\mathsf{c} |\mathfrak{a}_1-\mathfrak{a}_2|+ |\mathfrak{b}_1-\mathfrak{b}_2|). 
\end{equation*}
In view of (\ref{eq_decisionrationdefinition}) and (\ref{eq_Talternativedistribution}), we can complete the proof of (\ref{eq_Tpower}) under the assumption of (\ref{eq_localalternative}).

		\subsection{Proof of Theorem \ref{thm_onesample}}\label{suppl_thmonesampleproof}
In this section, we prove Theorem \ref{thm_onesample}. Similar results have been partially proved in \cite{DingWang2023} for diagonal $\Sigma$ which is quite restricted for two sample tests. Due to similarity, we mainly focus on explaining how to remove the diagonal assumption. For self-completeness, in Section \ref{sec_strategies}, we summarize and modify the proof strategies used in \cite{DingWang2023} to accommodate to our setting which basically follow and generalize those used in \cite{10.1214/20-AIHP1086,yang2020linear}. Then we provide some details of the proof in  Section \ref{sec_31proofdetails}. 	
		
		\subsubsection{Proof of the main theorem}\label{sec_strategies}
For notional simplicity, with the mollifier in (\ref{eq_mathcalK}), we define
\begin{equation}\label{eq_fxdefinition}
f(x) \equiv f(x; \gamma, \eta_0)=\frac{x-\gamma}{\eta_0} \mathcal{K}\left(\frac{x-\gamma}{\eta_0} \right). 
\end{equation} 
For $\mathbb{T}_x$ in (\ref{eq_2sampleindividualstatdef}) and $\mathsf{M}_x$ in (\ref{mean}), we first decompose it as follows 
\begin{equation*}
\mathbb{T}_x:=\left[\Tr f(Q)- \mathbb{E}\Tr f(Q) \right]+\left[\mathbb{E}\Tr f(Q)-\mathsf{M}_x\right]:=\mathcal{T}_1+\mathcal{T}_2. 
\end{equation*}
To prove Theorem \ref{thm_onesample}, it suffices to show that 
\begin{equation}\label{eq_needproofequationone}
\mathcal{T}_1  \Rightarrow \mathcal{N}(0, \mathsf{v}),  
\end{equation}
and 
\begin{equation}\label{eq_needproofequationtwo}
\mathcal{T}_2=\mathrm{o}_{\prec}(1).  
\end{equation}

We first discuss how to prove  (\ref{eq_needproofequationone}). The proof relies on the fact that a Gaussian random variable $z \sim \mathcal{N}(0, \mathsf{v})$ can be completely characterized by its moments which follow the recursive relation that for $l \geq 2$
\begin{equation*}
\mathbb{E}z^{l}=(l-1) \mathsf{v} \mathbb{E}z^{l-2}.
\end{equation*}
Consequently, we aim to prove that 
\begin{equation}\label{eq_T1needstobeproved}
\mathbb{E} \mathcal{T}_1^l=(l-1) \mathsf{v} \mathbb{E} \mathcal{T}_1^{l-2}+\mathrm{o}_{\prec}(1). 
\end{equation}

In order to prove (\ref{eq_T1needstobeproved}), we need to connect $\mathcal{T}_1$ with the resolvents (\ref{eq_resolventdefinition}) so that Theorem \ref{thm_locallaw} can be applied. Such a connection can be built using a complex integral via Helffer-Sj{\" o}strand formula. To be more specific, by Lemma 5.8 and a discussion similar to (7.23) of \cite{DingWang2023}, we have that   
\begin{equation}\label{eq_T1integral}
\mathcal{T}_1=\frac{1}{\pi} \int_{\mathbb{C}} \frac{\partial}{\partial \bar{z}} \widetilde{f}(z) \left[ \Tr \mathcal{G}(z)-\mathbb{E} \Tr \mathcal{G}(z) \right] \mathrm{d}^2 z,
\end{equation}
where $\widetilde{f}(z)$ is the almost-analytic extension of $f.$ For notional simplicity, we define 
\begin{equation*}
\theta_f(z):=\frac{1}{\pi} \frac{\partial \widetilde{f}(z)}{\partial \bar{z}}.
\end{equation*}
In order to prove (\ref{eq_T1needstobeproved}), we need to calculate 
\begin{equation}\label{eq_expectationexpansion}
\mathbb{E} \mathcal{T}_1^l=\int \theta_f(z_1) \cdots \theta_f(z_l) \mathbb{E} \left[ \mathcal{Y}(z_1) \cdots \mathcal{Y}(z_l) \right] \mathrm{d}^2 z_1 \cdots \mathrm{d}^2 z_l,
\end{equation}
where we denote 
\begin{equation}	\label{eq_yzform}
	 \mathcal{Y}(z_i) = \operatorname{Tr} \mathcal{G}(z_i) - \mathbb{E} \operatorname{Tr} \mathcal{G}(z_i), \ z_i \in \mathbb{C}. 
	  \end{equation}
Therefore, it suffices to calculate $\mathbb{E} \left[ \mathcal{Y}(z_1) \cdots \mathcal{Y}(z_l) \right].$ We summarize the results in the following lemma whose discussion of the proof will be provided in Section \ref{sec_31proofdetails}. 
\begin{lemma}\label{lem_momentresults}
Suppose the assumptions of Theorem \ref{thm_onesample} hold.  Fix $l \in \mathbb{N}$ and $z_1, \cdots, z_l \in \mathbf{S}$ in (\ref{eq_parameterspectral}), we have that 
\begin{equation*}
\left(\prod_{i=1}^l \eta_i \right) \mathbb{E} \prod_{i=1}^l \mathcal{Y}(z_i)=
\begin{cases}
\left(\prod_{i=1}^l \eta_i \right) \sum \prod \omega(z_s, z_t)+\mathrm{o}_{\prec}(1),  \ & l \in 2\mathbb{N} \\
\mathrm{o}_{\prec}(1), \ & \text{otherwise}
\end{cases}.
\end{equation*}
Here $\sum \prod$ means summing over all distinct ways of partitioning of indices into pairs  and $\omega(z_s, z_t):=\Delta_1(z_s, z_t)+\Delta_2(z_s, z_t)$ with
\begin{equation*}
\Delta_1(z_s, z_t):=\kappa_4 \phi \frac{\partial^2}{\partial z_s \partial z_t} \left( \frac{1}{p} \sum_{i=1}^p  \frac{1}{(1+\phi^{-1/2} m(z_s) \sigma_i)(1+\phi^{-1/2} m(z_t) \sigma_i)}\right),
\end{equation*}
\begin{equation*}
\Delta_2(z_s, z_t):=2 \left( \frac{m'(z_s) m'(z_t)}{(m(z_s)-m(z_t))^2}-\frac{1}{(z_s-z_t)^2} \right),
\end{equation*}
where $\kappa_4$ is the 4th cumulant of $x_{ij}$ in the sense that  $\kappa_4:=(-\mathrm{i})^4 \frac{\mathrm{d}}{\mathrm{d} t} \log \mathbb{E} e^{\mathrm{i}h x_{ij}}|_{h=0}.$

 Moreover, we have the recursive relation that  
\begin{equation}\label{eq_recursive}
\left(\prod_{i=1}^l \eta_i \right) \mathbb{E} \prod_{i=1}^l \mathcal{Y}(z_i)=\left(\prod_{i=1}^l \eta_i \right) \sum_{s=2}^l \omega(z_1, z_s) \mathbb{E} \prod_{t \notin \{1,s\}} \mathcal{Y}(z_t)+\mathrm{o}_{\prec}(1).
\end{equation}
\end{lemma}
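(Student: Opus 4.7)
The plan is to compute $\mathbb{E}\prod_{i=1}^l \mathcal{Y}(z_i)$ by the standard cumulant-expansion (Stein/integration-by-parts) machinery used in \cite{10.1214/20-AIHP1086,yang2020linear,DingWang2023}, adapted to the ultra-high dimensional normalization $(pn)^{-1/4}$ and, crucially, to a non-diagonal population covariance $\Sigma = O^\top \Lambda O$. The starting point is the resolvent identity
\begin{equation*}
z\,\mathcal{G}(z) \;=\; -I + \Lambda^{1/2} O\, \mathsf{X}\mathsf{X}^\top O^\top \Lambda^{1/2}\, \mathcal{G}(z),
\end{equation*}
which, after centering and tracing, rewrites $\mathcal{Y}(z_1)$ as a sum of the form $\sum_{i,j} \mathsf{X}_{ij} H_{ij}(\mathsf{X};z_1)$ with $H_{ij}$ polynomial in a single resolvent entry. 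The main workhorse is the cumulant expansion
\begin{equation*}
\mathbb{E}\bigl[\mathsf{X}_{ij}\, F(\mathsf{X})\bigr] \;=\; \sum_{k=1}^{N}\frac{\kappa_{k+1}(\mathsf{X}_{ij})}{k!}\,\mathbb{E}\bigl[\partial^{k}_{\mathsf{X}_{ij}} F\bigr] \;+\; \mathcal{R}_N,
\end{equation*}
applied to $F = H_{ij}(\mathsf{X};z_1)\prod_{i=2}^{l}\mathcal{Y}(z_i)$ and truncated at some large but fixed $N$.

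In the expansion, I would classify the surviving terms by where the $\partial_{\mathsf{X}_{ij}}$-derivatives land. Derivatives that stay within the factor $H_{ij}(\mathsf{X};z_1)$ itself, after resummation, reconstruct the self-consistent equation of Lemma \ref{lem_density}; they contribute an overall "response" prefactor which, after one further $\partial_{z_s}$ derivative hitting the remaining factors, produces the $m'(z_s)m'(z_t)$ numerator appearing in $\Delta_2$. Derivatives that migrate into one of the factors $\mathcal{Y}(z_s)$, $s\ge 2$, pair $z_1$ with $z_s$ and, after applying the averaged local law Theorem~\ref{thm_locallaw}(2) and differentiating, give precisely $\Delta_2(z_1,z_s)\,\mathbb{E}\prod_{t\notin\{1,s\}}\mathcal{Y}(z_t)$. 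Odd third cumulants vanish by \eqref{eq_onecondition}; the $\kappa_4$ term at $k=3$ gives the non-universal piece, which after isolating the factor $\phi$ produced by the $(pn)^{-1/4}$ normalization and two $\partial_z$'s, matches exactly
\begin{equation*}
\Delta_1(z_1,z_s) \;=\; \kappa_4\phi\,\frac{\partial^2}{\partial z_1 \partial z_s}\Bigl(\tfrac{1}{p}\sum_i (1+\phi^{-1/2}m(z_1)\sigma_i)^{-1}(1+\phi^{-1/2}m(z_s)\sigma_i)^{-1}\Bigr),
\end{equation*}
once the sums over indices are reorganized through the eigendecomposition of $\Pi(z)$. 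All cumulants of order $\geq 5$ contribute $\mathrm{o}_\prec(\prod_i \eta_i^{-1})$ after the universal bounds in Theorem \ref{thm_locallaw}(1)--(3) and the moment estimates of Assumption \ref{assum_basic} are applied; hence they are absorbed in $\mathrm{o}_\prec(1)$ after multiplication by $\prod_i \eta_i$. The recursive relation \eqref{eq_recursive} is then exactly the statement that $z_1$ must pair with some $z_s$, $s\ge 2$; iterating it gives, for even $l$, the Wick/Isserlis factorization over pair partitions, and for odd $l$ forces a leftover unpaired $\mathcal{Y}(z_i)$ which is itself $\mathrm{O}_\prec(\eta_i^{-1}\Psi(z_i)) = \mathrm{o}_\prec(\eta_i^{-1})$.

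The main obstacle I anticipate is the passage from diagonal $\Sigma$, handled in \cite{DingWang2023}, to general $\Sigma = O^\top\Lambda O$. After the cumulant expansion, the $O$-matrices appear in non-trivial combinatorial sums such as $\sum_{a,b,c,d}(O^\top)_{ai}(O)_{jc}\mathcal{G}_{cd}(\cdots)\mathcal{G}_{b\cdot}$, which are not diagonal in the original basis. The anisotropic local law in Theorem \ref{thm_locallaw}(1), combined with the mixed resolvent-vector bound $(GY^*)_{ji}=\mathrm{O}_\prec(\phi^{-1/4}\Psi(z))$ in (3), is precisely designed to control such sums uniformly in $z \in \mathbf{S}$, reducing them to deterministic averages against $\Pi(z)=-z^{-1}(I+\phi^{-1/2}m(z)\Lambda)^{-1}$ which depend on $\Sigma$ only through its eigenvalues $\{\sigma_i\}$. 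Once this reduction is justified, all subsequent combinatorics are identical to the diagonal case of \cite[Sec.\ 7]{DingWang2023}, and the proof is complete. A minor additional point is verifying that the $\epsilon$-efficient splitting and the condition $\gamma \in (\max\{\gamma_-^1,\gamma_-^2\},\min\{\gamma_+^1,\gamma_+^2\})$ place all relevant $z_i$ inside the good set $\mathbf{S}$ so that the local laws apply with uniform constants; this follows from the rigidity Lemma \ref{lem_rigidity} and is a routine check.
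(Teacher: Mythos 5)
Your proposal matches the paper's proof in all essential respects: cumulant expansion of $\mathbb{E}\prod_i\mathcal{Y}(z_i)$, classification of terms by where the $\partial_{\mathsf{X}_{kj}}$-derivatives land, identification of the $\kappa_4$ piece as $\Delta_1$ and the pairing piece as $\Delta_2$, control of the non-diagonal-$\Sigma$ sums via the anisotropic local law and the $(GY^*)_{ji}=\mathrm{O}_\prec(\phi^{-1/4}\Psi)$ bound of Theorem~\ref{thm_locallaw}, the recursive relation \eqref{eq_recursive}, and Wick factorization; this is exactly how Section~\ref{sec_31proofdetails} proceeds, emphasizing (Remark~\ref{remark_diagonaliseasier}) that the only new complication relative to \cite{DingWang2023} is the double summation introduced by $O\neq I$ in \eqref{eq_elementarydecomposition}. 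One minor inaccuracy: you assert that ``odd third cumulants vanish by \eqref{eq_onecondition},'' but \eqref{eq_onecondition} fixes only the first two moments and bounds the rest, so $\kappa_3$ need not vanish; in the actual argument the $\ell=2$ (i.e.\ $\kappa_3$) term of $\mathsf{L}_i$ is not zero but is shown to be $\mathrm{O}_\prec(\mathcal{E})$ by the same $(pn)^{-(\ell+1)/4}$ power counting that kills cumulants of order $\geq 5$, exactly as in \cite{DingWang2023}.
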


Now we can insert the results of Lemma \ref{lem_momentresults} into (\ref{eq_expectationexpansion}) to prove (\ref{eq_T1needstobeproved}). The proofs are almost identical to that Lemma 7.1 of \cite{DingWang2023}. Due to similarity, we only sketch the main ideas and refer the readers to \cite{DingWang2023} for more details. For $1 \leq i \leq l,$ we set $z_i=x_i+\mathrm{i} y_i.$ The first step is to split the regions of $\mathbb{C}^l$ into two self-disjoint regions that $\mathbb{C}^l=\mathcal{R} \cup \mathcal{R}^c$ where 
\begin{equation*}
\mathcal{R}:=\left\{z_1, z_2, \cdots, z_l \in \mathbb{C}: |y_1|, \cdots, |y_l| \in [\eta_l, 2\eta_u] \right\},
\end{equation*}  
where form some constants $\epsilon_1>\epsilon_2>0,$ $\eta_l:=n^{-\epsilon_1} \eta_0$ and $\eta_u:=n^{-\epsilon_2} \eta_0.$ Following almost exactly the same lines of the proofs between (7.7) and (7.11) of \cite{DingWang2023}, we can conclude that 
\begin{equation}\label{eq_errorcontrol}
\int_{\mathcal{R}^c} \theta_f(z_1) \cdots \theta_f(z_l) \mathbb{E} \left[ \mathcal{Y}(z_1) \cdots \mathcal{Y}(z_l) \right] \mathrm{d}^2 z_1 \cdots \mathrm{d}^2 z_l=\mathrm{o}_{\prec}(1). 
\end{equation}   

The rest of the proof leaves to evaluate the above integral on the region $\mathcal{R}.$ Inserting the results of Lemma \ref{lem_momentresults} especially (\ref{eq_recursive}) into (\ref{eq_expectationexpansion}) and by a discussion similar to (7.13) and (7.14) of \cite{DingWang2023} and together with (\ref{eq_errorcontrol}), we can see that  
\begin{equation*}
\mathbb{E} \mathcal{T}_1^l=(l-1) \mathcal{V} \mathbb{E} \mathcal{T}_1^{l-2}+\mathrm{o}_{\prec}(1),
\end{equation*}
where we define
\begin{equation*}
\mathcal{V}:= \int_{\eta_l \leq |y_1|, |y_s| \leq 2 \eta_u}  \theta_f(z_1) \theta_f(z_s) \omega(z_1, z_s) \ \mathrm{d}^2 z_1 \ \mathrm{d}^2 z_s.
\end{equation*}

Therefore, the rest of the proof leaves to show that $\mathcal{V}=\mathsf{v}+\mathrm{o}(1).$ The discussion relies on utilizing the complex Green's theorem to translate the regional integral as in $\mathcal{V}$ into the contour integral as in $\mathsf{v}$ and then simplify the formulas by analyzing the properties of the Stieltjes transforms.  The proof  is the same as in the arguments between equations (7.15) and (7.21) and those between (8.14) and (8.17) of \cite{DingWang2023} and we omit the details. This completes the proof of (\ref{eq_needproofequationone}).

Then we discuss how to prove (\ref{eq_needproofequationtwo}).  We first introduce some notations. For $m(z)$ in (\ref{eq_simplifiednotation}), we denote $m_p(z)$ via 
\begin{equation*}
m(z)=-\frac{1-\phi}{z}+\phi m_p(z). 
\end{equation*}
Moreover, we denote the asymptotic density associated with $m_p(z)$ as $\varrho_p.$ Using the above formula and inverse formulas that $\mathrm{d}\varrho(x)=\pi^{-1} \lim_{\eta \downarrow 0} \operatorname{Im} m(x+\mathrm{i} \eta), \ \mathrm{d}\varrho_p(x)=\pi^{-1} \lim_{\eta \downarrow 0} \operatorname{Im}_p m(x+\mathrm{i} \eta),$ using the convention (\ref{eq_fxdefinition}), we can rewrite $\mathsf{M}_x$ in (\ref{mean}) as 
\begin{equation*}
\mathsf{M}_x:=p \int  f(x) \mathrm{d} \varrho_p(x). 
\end{equation*}
Consequently, similar to the representation (\ref{eq_T1integral}), we can rewrite $\mathcal{T}_2$ as follows
\begin{equation}\label{eq_T2form}
\mathcal{T}_2=\int_{\mathbb{C}} \theta_f(z) \left[\mathbb{E} \Tr \mathcal{G}(z)-pm_p(z) \right] \mathrm{d}^2 z.  
\end{equation}
Based on the above representation, we can prove the following results where the details will be given in Section \ref{sec_31proofdetails}.  
\begin{lemma}\label{lem_meancontrol}
Suppose the assumptions of Theorem \ref{thm_onesample} holds. We have that 
\begin{equation*}
\mathcal{T}_2=\frac{1}{2 \pi \mathrm{i}} \left( \int_{\mathbb{R}} f(x) b^+(x) \mathrm{d} x-\int_{\mathbb{R}} f(x) b^-(x) \mathrm{d} x \right)+\mathrm{o}_{\prec}(1), 
\end{equation*}
where $b^{\pm}(x)=\lim_{\eta \downarrow 0} b(x\pm \mathrm{i} \eta)$ with 
\begin{equation}\label{eq_definitionb(z)}
b(z):=\frac{m''(z)}{2 m'(z)}-\frac{m'(z)}{m(z)}. 
\end{equation}
\end{lemma}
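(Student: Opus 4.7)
The strategy is twofold: (i) show that the deterministic function $b(z)$ captures the leading-order bias $\mathbb{E}\Tr\mathcal{G}(z) - p\, m_p(z)$ as $n\to\infty$, and then (ii) use a contour-integration argument to convert the two-dimensional integral representation in (\ref{eq_T2form}) into the jump integrals in the statement.

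First, I would establish the refined expansion
\[
\mathbb{E}\Tr\mathcal{G}(z) - p\, m_p(z) = b(z) + \mathrm{o}_\prec(1),
\]
uniformly for $z$ in (a slight enlargement of) the spectral set $\mathbf{S}$ from (\ref{eq_parameterspectral}). The natural starting point is the resolvent identity $z\mathcal{G}(z) = \mathcal{Q}\mathcal{G}(z) - I$; taking the trace and writing $\mathcal{Q}$ as a sum over the columns of $Y := \Lambda^{1/2} O X$, I would apply the general cumulant expansion $\mathbb{E}[X_{ij} F(X)] = \sum_{k\ge 1} \kappa_{k+1}(X_{ij}) \mathbb{E}[\partial^{k}_{X_{ij}} F(X)]/k!$, justified under the moment bound (\ref{eq_onecondition}). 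The leading (second-cumulant) terms reproduce, via Lemma \ref{lem_density}, the self-consistent equation determining $m(z)$ and $m_p(z)$, and hence cancel in the bias. The first subleading contributions, after replacing resolvent entries by the deterministic object $\Pi(z)$ from (\ref{eq_Pi(z)}) using the anisotropic local law of Theorem \ref{thm_locallaw}(1), can be rearranged—using the self-consistent equation and its first and second $z$-derivatives—into the closed form $m''(z)/(2m'(z)) - m'(z)/m(z)$. The fourth-cumulant contributions carry an extra prefactor of order $\phi^{-1} = n/p = \mathrm{o}(1)$ in our ultra-high dimensional regime and are therefore absorbed into $\mathrm{o}_\prec(1)$.

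Second, substituting this expansion into (\ref{eq_T2form}) reduces the proof to showing
\[
\int_{\mathbb{C}} \theta_f(z)\, b(z)\, \mathrm{d}^2 z = \frac{1}{2\pi\mathrm{i}}\left( \int_{\mathbb{R}} f(x)\, b^+(x)\,\mathrm{d}x - \int_{\mathbb{R}} f(x)\, b^-(x)\,\mathrm{d}x \right).
\]
Since $b(z)$ is holomorphic on $\mathbb{C}\setminus\operatorname{supp}(\varrho)$ and the almost-analytic extension $\tilde f$ is compactly supported in a narrow strip around the real axis, this is a standard consequence of Green's theorem: one splits $\mathbb{C}$ into the upper and lower half-planes, applies Stokes' formula to the $\bar\partial$-integral on each half-plane with the contour deformed to lie just above or just below $\mathbb{R}$, and collects the boundary values $b^{\pm}$. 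The quantitative error estimate from Step one is uniform in $\operatorname{Im} z$ (up to factors of $(n\eta)^{-1}$), and integrates against $\theta_f$ to yield $\mathrm{o}_\prec(1)$ by the same analysis used for $\mathcal{T}_1$ in (\ref{eq_errorcontrol}).

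The main obstacle is Step one: establishing the bias bound $\mathrm{O}(1)$ is relatively standard, but pinning down the \emph{exact} coefficient $b(z)$ requires careful bookkeeping through several subleading terms of the cumulant expansion and repeated use of Lemma \ref{lem_density} together with its $z$-derivatives to recognize empirical sums of the form $p^{-1}\sum_i (1+\phi^{-1/2} m(z)\sigma_i)^{-k}$ as closed expressions in $m, m', m''$. A secondary technical point, already emphasized in Section \ref{sec_strategies}, is that $\Sigma$ is not assumed diagonal here, so cross terms of the form $(O^\top \Lambda O)_{ij}$ arise throughout the expansion; these are controlled by the anisotropic version of Theorem \ref{thm_locallaw}(1) rather than the simpler diagonal estimates of \cite{DingWang2023}, and the comparison argument of \cite{FYEJP} is invoked to reduce the computation of $b(z)$ to the Gaussian case where it can be read off directly.
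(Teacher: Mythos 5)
Your proposal follows essentially the same route as the paper: apply the cumulant expansion to $\mathbb{E}\operatorname{Tr}\mathcal{G}(z)$ (via the resolvent identity), identify the deterministic bias $b(z)$ with the help of the local laws and the self-consistent equation, and then use Green's theorem to convert the $\bar\partial$-representation (\ref{eq_T2form}) into the two jump integrals. Two points are slightly mis-described, however. First, the paper does \emph{not} reduce the bias computation to the Gaussian case via the \cite{FYEJP} comparison argument --- that device is invoked only to establish the local law of Theorem \ref{thm_locallaw}; the non-diagonal $\Sigma$ is handled directly inside the cumulant expansion through the algebraic identity $LL^\top=\Lambda$ (so that, e.g., $\sum_k L_{ik}(\mathcal{G}L)_{ik}=\sigma_i\mathcal{G}_{ii}$). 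Second, the paper does not claim a pointwise $\phi^{-1}$ suppression of the $\kappa_4$ term; instead it shows only that the higher-cumulant contribution $\mathsf{S}_{i2}$ becomes $\mathrm{o}_\prec(1)$ \emph{after} integration against $\theta_f$, which is the weaker statement actually needed. Neither of these affects the correctness of the overall strategy, but as written your disposal of the fourth-cumulant term would require a separate justification.
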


Recall $f(x)$ in (\ref{eq_fxdefinition}). Using the change of variable that $\widetilde{x}=(x-\gamma)/\eta_0$ and Lemma \ref{lem_meancontrol}, we have that 
\begin{equation*}
\mathcal{T}_2=\frac{\eta_0}{2 \pi \mathrm{i}} \left( \int_{\mathbb{R}} \widetilde{x} \mathcal{K}(\widetilde{x}) b^+(\gamma+\widetilde{x} \eta_0) \mathrm{d} \widetilde{x}- \int_{\mathbb{R}} \widetilde{x} \mathcal{K}(\widetilde{x}) b^-(\gamma+\widetilde{x} \eta_0) \mathrm{d} \widetilde{x} \right)+\mathrm{o}_{\prec}(1).
\end{equation*} 
By the definition of $\mathcal{K}$ in (\ref{eq_mathcalK}), we only need to consider when $\widetilde{x}$ is bounded. Moreover,  by the arguments in the end of Section 8.2 of \cite{DingWang2023}, when $\gamma$ is in the bulk that $\gamma_-<\gamma<\gamma_+,$ for $w$ in a small neighborhood of $\gamma,$ we have that $b^{\pm}(w)=\mathrm{O}(1).$ 

Since $\eta_0=\mathrm{o}(1),$ the above discussion implies that $\mathcal{T}=\mathrm{o}(1).$  This completes the proof of (\ref{eq_needproofequationtwo}) and hence the proof of Theorem \ref{thm_onesample}.

\subsubsection{Proof details of Lemmas \ref{lem_momentresults} and \ref{lem_meancontrol}}\label{sec_31proofdetails}
In this section, we provide some details of the proof of Lemmas \ref{lem_momentresults} and \ref{lem_meancontrol}. Both lemmas have been proved in \cite{DingWang2023} assuming that $\Sigma$ is diagonal using the device of cumulant expansion (see \cite{LPCLT, yang2020linear,DingWang2023, 10.1214/20-AIHP1086, 10.1214/20-AOS1960}).  In what follows, we focus on explaining how to justify these two lemmas for general $\Sigma.$ Due to similarity, we focus on the details of Lemma \ref{lem_momentresults}. 

\begin{proof}[\bf Proof of Lemma \ref{lem_momentresults}] The proofs are similar to those of Lemma 6.2 of \cite{DingWang2023} which relies on the diagonal assumption of $\Sigma,$ i.e., $O=I$ in (\ref{eq_resolventdefinition}). In what follows, we explain how to remove this diagonal assumption; see Remark \ref{remark_diagonaliseasier} for more discussions on this aspect. For ease of the statements, we focus on the case $l=2.$  The arguments of the generalization to general $l$ are the same as those in (6.14) of \cite{DingWang2023}. For $l_1, l_2 \in \mathbb{N},$ it suffices to work with 
\begin{align*}
\mathbb{E} \left[ \mathcal{Y}(z_1)^{l_1} \mathcal{Y}(z_2)^{l_2} \right]=\sum_{i=1}^p \mathbb{E} \left[ \mathcal{Y}(z_1)^{l_1} \mathcal{Y}(z_2)^{l_2-1} (\mathcal{G}_{ii}(z_2)-\mathbb{E} \mathcal{G}_{ii}(z_2)) \right].   
\end{align*}
Note that for two random variables $a$ and $b,$ we have that $\mathbb{E}[a(1-\mathbb{E})b]=\mathbb{E}[(1-\mathbb{E})a(1-\mathbb{E})b]=\mathbb{E}[b(1-\mathbb{E})a],$ we can further rewrite the above equation as 
\begin{align*}
\mathbb{E} \left[ \mathcal{Y}(z_1)^{l_1} \mathcal{Y}(z_2)^{l_2} \right]=\sum_{i=1}^p \mathbb{E} \left\{ (1-\mathbb{E})\left[ \mathcal{Y}(z_1)^{l_1} \mathcal{Y}(z_2)^{l_2-1} \right] \mathcal{G}_{ii}(z_2) \right\}.
\end{align*}

Next we expand the above equation. Denote $Y=L\mathsf{X}, L=\Lambda^{1/2}O.$ Using the elementary identity that $z_2 \mathcal{G}(z_2)=\mathcal{G}(z_2)YY^\top-I$, by a straightforward calculation, we have that 
\begin{equation*}
z_2\mathbb{E} \left\{ (1-\mathbb{E})\left[ \mathcal{Y}(z_1)^{l_1} \mathcal{Y}(z_2)^{l_2-1} \right] \mathcal{G}_{ii}(z_2) \right\}=\mathbb{E} \left\{ (1-\mathbb{E})\left[ \mathcal{Y}(z_1)^{l_1} \mathcal{Y}(z_2)^{l_2-1} \right] \sum_{j=1}^n \sum_{k=1}^p x_{kj}(\mathcal{G}(z_2)Y)_{ij} L_{ik}  \right\}
\end{equation*}
%\begin{equation*}
%z_2\mathbb{E} \left\{ (1-\mathbb{E})\left[ \mathcal{Y}(z_1)^{l_1} \mathcal{Y}(z_2)^{l_2-1} \right] \mathcal{G}_{ii}(z_2) \right\}=\sum_{k=1}^n \sum_{j=1}^p x_{jk} L_{ij} (\mathsf{X}^\top L^\top \mathcal{G}(z_2))_{ki}.  
%\end{equation*}
Applying the cumulant expansion formula (see Proposition 3.1 of \cite{LPCLT}), we can obtain that for $1 \leq i \leq p,$ 
\begin{equation}\label{eq_elementarydecomposition}
z_2\mathbb{E} \left\{ (1-\mathbb{E})\left[ \mathcal{Y}(z_1)^{l_1} \mathcal{Y}(z_2)^{l_2-1} \right] \mathcal{G}_{ii}(z_2) \right\}=\mathsf{A}_i+\mathsf{B}_i+\mathsf{L}_i,
\end{equation}
where 
\begin{equation*}
\mathsf{A}_i:=\frac{1}{\sqrt{pn}} \sum_{j=1}^n \sum_{k=1}^p L_{ik} \mathbb{E} \left(  (1-\mathbb{E})[\mathcal{Y}(z_1)^{l_1} \mathcal{Y}(z_2)^{l_2-1}]\frac{\partial (\mathcal{G}(z_2)Y)_{ij}}{\partial x_{kj}} \right),
\end{equation*}
\begin{equation*}
\mathsf{B}_i:=\frac{1}{\sqrt{pn}} \sum_{j=1}^n \sum_{k=1}^p L_{ik} \mathbb{E} \left( \frac{\partial (1-\mathbb{E})[\mathcal{Y}(z_1)^{l_1} \mathcal{Y}(z_2)^{l_2-1}]}{\partial x_{kj}}(\mathcal{G}(z_2)Y)_{ij} \right),
\end{equation*}
\begin{equation*}
\mathsf{L}_i:= \sum_{j=1}^n \sum_{k=1}^p L_{ik} \left( \sum_{\ell=2}^4 (pn)^{-(\ell+1)/4} \frac{\kappa_{\ell+1}}{\ell!} \mathbb{E} \frac{\partial^\ell \left( \mathcal{Y}(z_1)^{l_1} \mathcal{Y}(z_2)^{l_2-1} (\mathcal{G}(z_2)Y)_{ij}\right)  }{\partial x_{kj}^\ell} \right)+\mathrm{o}(p^{-1}). 
\end{equation*}
Due to similarity, we focus our discussion on $\mathsf{A}_i.$ 
%Analogous (and easier) arguments have been made in Section 6.1 of \cite{DingWang2023} assuming $O$ is diagonal; see Remark \ref{remark_diagonaliseasier} below. 
By straightforward calculations using chain rule with $\mathcal{G}(z_2)(YY^\top-z_2)=I$, we see that  
\begin{align} \label{eq_decompositionofAi1}
\frac{\partial (\mathcal{G}(z_2)Y)_{ij}}{\partial x_{kj}}=(\mathcal{G}(z_2) L)_{ik}-(\mathcal{G}(z_2)L)_{ik} (Y^\top \mathcal{G}(z_2) Y)_{jj}-(\mathcal{G}(z_2) Y)_{ij} (L^\top \mathcal{G}Y)_{kj}.
\end{align}
Insert the above equation back into $\mathsf{A}_i,$ we have that
\begin{equation}\label{eq_Aidecomposition}
\mathsf{A}_i=\mathsf{A}_{i1}+\mathsf{A}_{i2}+\mathsf{A}_{i3},
\end{equation}
where we denote (recall $LL^\top=\Lambda$)
\begin{align*}
\mathsf{A}_{i1}:& = \frac{1}{\sqrt{pn}} \sum_{j=1}^n \sum_{k=1}^p L_{ik} \mathbb{E} \left(  (1-\mathbb{E})[\mathcal{Y}(z_1)^{l_1} \mathcal{Y}(z_2)^{l_2-1}] (\mathcal{G}(z_2) L)_{ik} \right)\\
&=\frac{1}{\sqrt{pn}} \sum_{j=1}^n  \mathbb{E} (1-\mathbb{E})[\mathcal{Y}(z_1)^{l_1} \mathcal{Y}(z_2)^{l_2-1}] (LL^\top \mathcal{G}(z_2))_{ii} \\
&=\frac{\sigma_i}{\sqrt{pn}} \sum_{j=1}^n \mathbb{E} \left\{ \mathcal{Y}(z_1)^{l_1} \mathcal{Y}(z_2)^{l_2-1}  (1-\mathbb{E})[\mathcal{G}_{ii}(z_2)] \right\}
\end{align*}
\begin{align*}
\mathsf{A}_{i2}:&=-\frac{1}{\sqrt{pn}} \sum_{j=1}^n \sum_{k=1}^p L_{ik} \mathbb{E} \left\{ \mathcal{Y}(z_1)^{l_1} \mathcal{Y}(z_2)^{l_2-1} (1-\mathbb{E})[(\mathcal{G}(z_2)L)_{ik}(Y^\top \mathcal{G}(z_2)Y)_{jj}]  \right\} \\
&=-\frac{\sigma_i}{\sqrt{pn}} \sum_{j=1}^n \mathbb{E} \left\{ \mathcal{Y}(z_1)^{l_1} \mathcal{Y}(z_2)^{l_2-1} (1-\mathbb{E})[(Y^\top \mathcal{G}(z_2) Y)_{jj} \mathcal{G}_{ii}(z_2) ] \right\} 
\end{align*}
\begin{align}\label{eq_decompositionofAi2}
\mathsf{A}_{i3}&:=-\frac{1}{\sqrt{pn}} \sum_{j=1}^n \sum_{k=1}^p L_{ik} \mathbb{E} \left\{ \mathcal{Y}(z_1)^{l_1} \mathcal{Y}(z_2)^{l_2-1} (1-\mathbb{E})[(\mathcal{G}(z_2) Y)_{ij} (L^\top \mathcal{G}Y)_{kj}]  \right\} \nonumber \\
&=-\frac{\sigma_i}{\sqrt{pn}} \sum_{j=1}^n \sum_{j=1}^n \mathbb{E} \left\{ \mathcal{Y}(z_1)^{l_1} \mathcal{Y}(z_2)^{l_2-1} (1-\mathbb{E})[(\mathcal{G}(z_1)Y)_{ij} (\mathcal{G}(z_1)Y)_{ij}] \right\}. 
\end{align}
$\mathsf{A}_{i1}$ has the same form to the left-hand side of (\ref{eq_elementarydecomposition}) so we will keep it till the end.  Moreover, $\mathsf{A}_{it}, 1 \leq t \leq 3$ have exactly the same forms as the counterparts $(\mathsf{A}_i)_t$ as in Section 6.1 of \cite{DingWang2023}. Denote 
\begin{equation*}
\mathcal{E}:=\frac{\mathrm{o}(1)}{|\eta_1|^{l_1} |\eta_2|^{l_2}}.
\end{equation*}
Together with Theorem \ref{thm_locallaw}, we find that the results in Section 6.1 of \cite{DingWang2023} still hold in our setting so that 
\begin{align*}
\mathsf{A}_{i2} &= \sigma_i \phi^{1/2}(-z_2 m_p(z_2)-1) \mathbb{E} \left[ \mathcal{Y}(z_1)^{l_1} \mathcal{Y}(z_2)^{l_2-1} (1-\mathbb{E})[\mathcal{G}_{ii}(z_2)] \right] \\
&+\frac{\sigma_i \phi^{1/2}}{1+\phi^{-1/2} m(z_2) \sigma_i} \mathbb{E} \left[ \mathcal{Y}(z_1)^{l_1} \mathcal{Y}(z_2)^{l_2-1} (1-\mathbb{E}) \left[p^{-1} \operatorname{Tr} \mathcal{G}(z_2) \right] \right]+\mathrm{O}_{\prec}(\mathcal{E}).
\end{align*} 
\begin{equation*}
\mathsf{A}_{i3}=\mathrm{O}_{\prec} \left( \mathcal{E}  \right).
\end{equation*}
Similarly, we can handle $\mathsf{B}_i$ and $\mathsf{L}_i.$ Especially, with analogous manipulation to the arguments between (\ref{eq_decompositionofAi1}) and (\ref{eq_decompositionofAi2}), we have that 
\begin{align*}
-\sum_{i=1}^p & \frac{\mathsf{B}_i}{z_2(1+\sigma_i \phi^{-1/2} m(z_2))}=\sum_{i=1}^p \frac{2\sigma_i}{\sqrt{pn} (1+\sigma_i \phi^{-1/2} m(z_2))} \mathbb{E} \Big\{ l_1 \mathcal{Y}(z_1)^{l_1-1} \mathcal{Y}(z_2)^{l_2-1}  \\
& \times \frac{\partial}{\partial z_1} \left( \frac{z_1/z_2(1+\phi^{-1/2} m(z_2) \sigma_i)^{-1}-(1+\phi^{-1/2} m(z_1) \sigma_i)^{-1}}{z_1-z_2}\right) \\
&+\frac{l_2-1}{2}  \mathcal{Y}(z_1)^{l_1} \mathcal{Y}(z_2)^{l_2-2} \frac{\partial}{\partial z_2} \left( -(z_2(1+\phi^{-1/2}m(z_2) \sigma_i))^{-1}+z_2 \left[ (z_2(1+\phi^{-1/2}m(z_2) \sigma_i))^{-1} \right]'  \right) \Big\} \\
&+\mathrm{O}_{\prec}(\mathcal{E}),
\end{align*} 
\begin{align*}
\sum_{i=1}^p & \frac{\mathsf{L}_i}{z_2(1+\sigma_i \phi^{-1/2} m(z_2))}= p^{-1} \sum_{i=1}^p \frac{-\kappa_4 \sigma_i^2 }{z_2(1+\sigma_i \phi^{-1/2} m(z_2))} \mathbb{E} \Big\{ \frac{m(z_2)}{1+\phi^{-1/2} m(z_2) \sigma_i} l_1 \mathcal{Y}(z_1)^{l_1-1} \mathcal{Y}(z_2)^{l_2-1} \\
& \times \frac{\partial }{\partial z_1} \frac{m(z_1)}{1+\phi^{-1/2} m(z_1) \sigma_i}+\frac{l_2-1}{2} \mathcal{Y}(z_1)^{l_1} \mathcal{Y}(z_2)^{l_2-2} \frac{\partial}{\partial z_2} \left( \frac{m(z_2)}{1+\phi^{-1/2} m(z_2) \sigma_i} \right)^2 \Big\}+\mathrm{O}_{\prec} \left( \mathcal{E}\right).  
\end{align*}

Inserting all the above equations back into (\ref{eq_elementarydecomposition}) and using Lemma 5.2 of \cite{DingWang2023}, with a discussion similar to (6.13) of \cite{DingWang2023}, we have that
		\[ \mathbb{E}[\mathcal{Y}^{l_1}(z_1) \mathcal{Y}^{l_2}(z_2)] = l_1 \mathbb{E}[\mathcal{Y}^{l_1-1}(z_1) \mathcal{Y}^{l_2-1}(z_2)] \mathcal{W}_1(z_1, z_2) + (l_2-1) \mathbb{E}[\mathcal{Y}^{l_1}(z_1) \mathcal{Y}^{l_2-2}(z_2)] \mathcal{W}_2(z_1, z_2) + \mathrm{O}_{\prec}(\mathcal{E}), \]
where $\mathcal{W}_k(z_1, z_2), k=1,2,$ are defined as 
\begin{align*}
\mathcal{W}_1(z_1, z_2):&=\left(1-\frac{1}{z_2 p} \sum_{i=1}^p \frac{\phi^{1/2} \sigma_i}{(1+\phi^{-1/2} m(z_2) \sigma_i)^2} \right)^{-1} \sum_{i=1}^p \frac{1}{z_2(1+\sigma_i \phi^{-1/2} m(z_2))} \Big\{-\frac{2 \sigma_i}{\sqrt{pn}} \\
&\times \frac{\partial}{\partial z_1} \left( \frac{z_1/z_2(1+\phi^{-1/2} m(z_2) \sigma_i)^{-1}-(1+\phi^{-1/2} m(z_1) \sigma_i)^{-1}}{z_1-z_2}\right) \\
&-\frac{\kappa_4 \sigma_i^2}{p} \frac{m(z_2)}{1+\phi^{-1/2} m(z_2) \sigma_i} \times \frac{\partial }{\partial z_1} \frac{m(z_1)}{1+\phi^{-1/2} m(z_1) \sigma_i} \Big \},
\end{align*} 
\begin{align*}
\mathcal{W}_2(z_1, z_2):& =\left(1-\frac{1}{z_2 p} \sum_{i=1}^p \frac{\phi^{1/2} \sigma_i}{(1+\phi^{-1/2} m(z_2) \sigma_i)^2} \right)^{-1}\sum_{i=1}^p \frac{1}{z_2(1+\sigma_i \phi^{-1/2} m(z_2))} \Big\{-\frac{2 \sigma_i}{\sqrt{pn}} \\
&  \times \frac{1}{2} \frac{\partial}{\partial z_2} \left( -(z_2(1+\phi^{-1/2}m(z_2) \sigma_i))^{-1}+z_2 \left[ (z_2(1+\phi^{-1/2}m(z_2) \sigma_i))^{-1} \right]'  \right) \\
& -\frac{\kappa_4 \sigma_i^2 }{p} \times \frac{\partial}{\partial z_2} \left( \frac{m(z_2)}{1+\phi^{-1/2} m(z_2) \sigma_i} \right)^2 \Big\}.  
\end{align*}
Furthermore, together with the results between equations (6.13) and (6.14) of \cite{DingWang2023},  we further find that 
\begin{align*}
\mathcal{W}_k(z_1, z_2):=\omega(z_1, z_2), \ k=1,2.  
\end{align*} 
This yields that 
		\[ \mathbb{E}[\mathcal{Y}^{l_1}(z_1) \mathcal{Y}^{l_2}(z_2)] = l_1 \mathbb{E}[\mathcal{Y}^{l_1-1}(z_1) \mathcal{Y}^{l_2-1}(z_2)] \omega(z_1, z_2) + (l_2-1) \mathbb{E}[\mathcal{Y}^{l_1}(z_1) \mathcal{Y}^{l_2-2}(z_2)] \omega(z_1, z_2) + \mathrm{O}_{\prec}(\mathcal{E}). \]
Then by Wick's probability theorem (for example, see Lemma 5.11 of \cite{DingWang2023}), we can complete the proof. 
\end{proof}

\begin{remark}\label{remark_diagonaliseasier}
We mention that compared to the proof of Lemma 6.2 of \cite{DingWang2023} which only handles the diagonal $\Sigma$, the main difference lies in (\ref{eq_elementarydecomposition}). The counterpart of the diagonal case, i.e., (6.3) of \cite{DingWang2023} only involves a single summation which makes the calculation easier.  
\end{remark}		
		
Next, we prove Lemma \ref{lem_meancontrol}. Similar results have been proved in Section 7.2 of \cite{DingWang2023} assuming that $O=I$ in (\ref{eq_othernotations}). We will focus the explanation on how to remove this assumption. In the fact, the arguments are similar to those in the proof of Lemma \ref{lem_momentresults} and Remark \ref{remark_diagonaliseasier}, we only sketch the idea. 
		
\begin{proof}[\bf Proof of Lemma \ref{lem_meancontrol}] In light of (\ref{eq_T2form}), it suffices to work with $\mathcal{G}_{ii}(z).$ Similar to the discussion of (\ref{eq_elementarydecomposition}), we can decompose that 
\begin{align}\label{eq_originaldefinition}
z\mathbb{E}\left[\mathcal{G}_{ii}(z)\right]&=
 \mathbb{E} \sum_{j=1}^n \sum_{k=1}^p x_{kj}(\mathcal{G}(z)Y)_{ij} L_{ik}-1  \\
&=\mathsf{S}_{i1}+\mathsf{S}_{i2}+\mathrm{o}_{\prec}(p^{-1}), \nonumber
\end{align}
where we denote
\begin{equation}\label{eq_Si1}
\mathsf{S}_{i1}:=\frac{1}{\sqrt{pn}} \sum_{j=1}^n \sum_{k=1}^p L_{ik} \frac{\partial (\mathcal{G}(z) Y)_{ij}}{\partial x_{kj}} -1,
\end{equation}
\begin{equation*}
\mathsf{S}_{i2}:= \sum_{j=1}^n \sum_{k=1}^p L_{ik} \left( \sum_{\ell=2}^4 (pn)^{-(\ell+1)/4} \frac{\kappa_{\ell+1}}{\ell!} \mathbb{E} \frac{\partial^\ell \left((\mathcal{G}(z)Y)_{ij}\right)  }{\partial x_{kj}^\ell} \right).
\end{equation*}

For $\mathsf{S}_{i1},$ with a discussion similar to $\mathsf{A}_i$ in (\ref{eq_Aidecomposition}), using Theorem \ref{thm_locallaw}, we can show that (see the arguments between equations (7.25) and (7.27) of \cite{DingWang2023})
\begin{align*}
\mathsf{S}_{i1}&=\sigma_i(\phi^{-1/2}-\phi^{1/2}) \mathbb{E} \mathcal{G}_{ii}(z)-z\sigma_i \phi^{1/2}m_p(z) \mathbb{E} \left( \mathcal{G}_{ii}(z)+\frac{1}{z(1+\phi^{-1/2} m(z) \sigma_i)} \right) \\
&+\frac{\sigma_i \phi^{1/2}}{p(1+\phi^{-1/2} m(z) \sigma_i)} \mathbb{E} \operatorname{Tr} \mathcal{G}(z)-\frac{\sigma_i^2}{p} \frac{m'(z)}{(1+\phi^{-1/2} \sigma_i m(z))^2}-1+\mathrm{o}_{\prec}(p^{-1}). 
\end{align*}
Insert the above equation back into (\ref{eq_originaldefinition}) and summarize over the index $i$ from $1$ to $p$, with a discussion similar to (7.27) of \cite{DingWang2023}, we can obtain that 
\begin{align*}
\mathbb{E} \operatorname{Tr} \mathcal{G}(z)& =\sum_{i=1}^p \frac{1}{z(1+\sigma_i \phi^{-1/2} m(z))} \Big( \frac{\sigma_i \phi^{1/2}}{p(1+\phi^{-1/2} m(z) \sigma_i)} \mathbb{E}[\operatorname{Tr} \mathcal{G}(z)-p m_p(z) ] \\
&-\frac{\sigma_i^2 m'(z)}{p(1+\phi^{-1/2} \sigma_i m(z))^2}-1+\mathsf{S}_{i2} \Big)+\mathrm{o}_{\prec}(1).  
\end{align*}
Following lines of (7.33) of \cite{DingWang2023} and using Lemma 5.2 therein, we can obtain that 
\begin{align*}
&\left(1-\frac{1}{p} \sum_{i=1}^p \frac{\phi^{1/2} \sigma_i}{z(1+\phi^{-1/2} m(z) \sigma_i)^2} \right) \mathbb{E}[\operatorname{Tr} \mathcal{G}(z)-pm_p(z) ] \\
&=-\frac{1}{p}\sum_{i=1}^p \frac{\sigma_i^2 m'(z)}{z(1+\phi^{-1/2} m(z) \sigma_i)^3}+\sum_{i=1}^p \frac{\mathsf{S}_{i2}}{z(1+\sigma_i \phi^{-1/2} m(z))}+\mathrm{o}_{\prec}(1). 
\end{align*}
Recall $b(z)$ in (\ref{eq_definitionb(z)}). By (7.33) and (7.34) of \cite{DingWang2023}, we can further conclude that 
\begin{equation*}
\mathbb{E}[\operatorname{Tr} \mathcal{G}(z)-pm_p(z)]=b(z)+\sum_{i=1}^p \frac{\mathsf{S}_{i2}}{z(1+\sigma_i \phi^{-1/2} m(z))}+\mathrm{o}_{\prec}(1).
\end{equation*}

Then we can follow exactly the same  arguments between (7.30) and (7.34) of \cite{DingWang2023} to conclude that 
\begin{equation*}
\int_{\mathbb{C}}\sum_{i=1}^p \frac{\mathsf{S}_{i2}}{z(1+\sigma_i \phi^{-1/2} m(z))} \mathrm{d} z=\mathrm{o}_{\prec}(1). 
\end{equation*}
Together with (\ref{eq_T2form}), we can obtain
\begin{equation*}
\mathcal{T}_2=\int_{\mathbb{C}} \theta_f(z) b(z) \mathrm{d}^2 z.  
\end{equation*}
Then by a discussion similar to the arguments below (\ref{eq_recursive}), we can conclude our proof. 
 
%$$\begin{aligned}
%			&z\mathbb{E}\left[\mathcal{G}_{i i}\right]\\
%			= & \mathbb{E}\sum_{j k} X_{j k}\left(T E_{j k} X^{\top} T^{\top} \mathcal{G}\right)_{i i}-1=\mathbb{E}\sum_{j k} X_{j k}\left(T\right)_{i j}\left(X^{\top}T^{\top} \mathcal{G}\right)_{k i}-1 \\
%			= & \sum_{j k} \frac{c_{j k}^{(2)}\left(T\right)_{i j}}{\sqrt{np}} \mathbb{E} \frac{\partial\left(X^{\top} T^{\top} \mathcal{G}\right)_{k i}}{\partial X_{j k}}-1  +\sum_{l\ge 2}\sum_{j k} \frac{c_{j k}^{(l+1)}\left(T\right)_{i j}}{l !(np)^{(l+1)/ 4}} \mathbb{E} \frac{\partial^l\left(X^{\top} T^{\top} \mathcal{G}\right)_{k i}}{\partial X_{j k}^l} \\
%			= & \frac{1}{\sqrt{np}} \sum_{j k}\left(T\right)_{i j} \mathbb{E}\left[\left(T^{\top} \mathcal{G}\right)_{j i}-\left(X^{\top} T^{\top} \mathcal{G} T\right)_{k j}\left(X^{\top}T^{\top} \mathcal{G}\right)_{k i}-\left(X^{\top} T^{\top} \mathcal{G} T X\right)_{k k}\left(T^{\top}\mathcal{G}\right)_{j i}\right]-1\\
%			&+\sum_{l\ge 2}\sum_{j k} \frac{c_{j k}^{(l+1)}\left(T\right)_{i j}}{l !(np)^{(l+1)/ 4}} \mathbb{E} \frac{\partial^l\left(X^{\top} T^{\top} \mathcal{G}\right)_{k i}}{\partial X_{j k}^l} 
%		\end{aligned}$$

\end{proof}

\section{Automated tuning parameter selection}\label{suppl_tuningparameterselection}

In the section, we discuss how to choose the tuning parameters $\{\eta_0^i\}$ in Step three of Algorithm \ref{alg:bootstrapping} and the control threshold $\delta$ in the last step of Algorithm \ref{alg:bootstrapping}.

We start with $\{\eta_0^i\}.$ According to our theoretical results in Corollary \ref{cor_twosample}, $\eta_0$ should satisfy that $n^{-1} \ll \eta_0 \ll 1.$ Moreover, by the construction in (\ref{eq_2sampleindividualstatdef}), $\eta_0$ can be understood as the bandwidth which controls the number of eigenvalues to be used in our test procedure. Based on such a connection with the bandwidth, inspired by bandwidth selection procedure in \cite{levine2006bandwidth}, in Step three of Algorithm \ref{alg:bootstrapping}, for $i \in \mathcal{S}_1,$ we will choose 
\begin{equation}\label{eq_finalchoice}
\eta_0^i=\theta \times \operatorname{std}(\{\gamma_j^i\}). 
\end{equation} 
Here $\operatorname{std}(\{\gamma_j^i\})>0$ is the standard derivation of the sequence $\{\gamma_j^i\}$ and $\theta \gg n^{-1}$ is some (small) constant. Therefore, the choices of the sequence of the tuning parameters reduce to the selection of a global tuning parameter $\theta.$ In Algorithm \ref{algo2}, we provide a smoothing based method to choose $\theta$ so that our Algorithm \ref{alg:bootstrapping} can be stabilized (i.e., the decision ratio in (\ref{eq_decisionrationdefinition}) is  large under the alternatives and will not change significantly for small changes in $\eta_0$).

	\begin{algorithm}[htp]
		\caption{\bf Parameter selection for (\ref{eq_finalchoice})}
		\label{algo2}
		\normalsize
		\begin{flushleft}
		\noindent{\bf Inputs:} Inputs of Algorithm \ref{alg:bootstrapping} and a pre-specified set of $s$ grids of positive values $c_1<c_2<\cdots<c_s.$

			\noindent{\bf Step one:} For $k=1,\cdots,s,$ execute steps one through five of Algorithm \ref{alg:bootstrapping} by setting $\eta_0^i$ in the form of (\ref{eq_finalchoice}) with $\theta=c_k.$ Record all the $s$ decision ratios (c.f. (\ref{eq_decisionrationdefinition})) as $\{\mathsf{DR}_k\}$. 

			\noindent{\bf Step two:} Compute the simple moving average (SMA) of $\{\mathsf{DR}_k\}$ with window size $3$ and denote the sequence as $\{\mathsf{DR}'_k\},$ i.e., for $1 \leq k \leq s-2,$ $$\mathsf{DR}_k^{'} = \frac{\mathsf{DR}_{k}+\mathsf{DR}_{k+1}+\mathsf{DR}_{k+2}}{3}.$$ 
			
			\noindent{\bf Step three:} For all $1\leq t \leq s-3,$ compute the variance of $\{\mathsf{DR}_k\}_{k=1}^{t+1}$  and denote them as $\{\mathsf{v}_t\}_{t=1}^{s-3}.$ 
			
			\noindent{\bf Step four:} For all $3 \leq \ell \leq s-2,$ select $\ell$ so that the following two conditions are satisfied 
			\begin{equation}\label{eq_importanttwocondistions}
			 \mathsf{DR}_\ell>\frac{\max_k \mathsf{DR}_k}{5}, \ \  \mathsf{v}_{\ell-2}>\mathsf{v}_{\ell-1}.
			\end{equation}
			 	
			\noindent{\bf Output:} Choose $\theta=c_{\ell}.$
			
		\end{flushleft}
	\end{algorithm}

Then we explain how to choose the threshold $\delta$ as discussed in Remark \ref{rmk_tuning} using a calibration procedure following \cite{9779233}. The details are provided in Algorithm \ref{algo3}.

\begin{algorithm}[htp]
	\caption{\bf Tuning parameter selection }
	\label{algo3}
	\normalsize
	\begin{flushleft}
		\noindent{\bf Inputs:} $n_1$, $n_2$, $n$, $p$ and $\mathsf{K}$ and type I error $\alpha$ from Algorithm \ref{alg:bootstrapping}.
		
		\noindent{\bf Step one:} For some large value $\mathsf{B}$ (say $\mathsf{B}=1,000$), generate a sequence of $\mathsf{B}$ $p$-dimensional standard multivariate Gaussian samples $\mathcal{X}_i$'s and $\mathcal{Y}_i$'s of sizes $n_1$ and $n_2$.
		
		 \noindent{\bf Step two:} Perform Algorithm \ref{alg:bootstrapping} on $\mathcal{X}_i$ and $\mathcal{Y}_i$ with parameter $n$, $\alpha$ and $\mathsf{K}$. Record the decision ratios as $\mathsf{DR}_i, \ 1 \leq i \leq \mathsf{B}.$
		
		\noindent{\bf Step three:} Compute the $(1-\alpha)$-quantile of $\{\mathsf{DR}_i\}_{i=1}^{\mathsf{B}}$, denoted as $\delta,$ following 
		\begin{equation*}
		\frac{\# \{\mathsf{DR}_i \leq \delta\}}{\mathsf{B}} \geq 1-\alpha.
		\end{equation*}
		
		\noindent{\bf Output:} The threshold $\delta.$
		
	\end{flushleft}
\end{algorithm}

%\begin{remark}
%{\color{blue} Use either CV or KDE based method. The intuition behind is clear. If $c$ is small, nothing are selected so that we always fail to reject. If $c$ is too large that the variance change to $2/\eta_0^2,$ fail to reject either. } 
%\end{remark}
		
%		\section{Additional numerical results}
		
%chicago		
\bibliographystyle{plain}	
\bibliography{lsft}
%\begin{thebibliography}{100}
%
%\bibitem{alain2014regularized}
%Guillaume Alain and Yoshua Bengio.
%\newblock {\em What regularized auto-encoders learn from the data-generating distribution.}
%\newblock {\bf Journal of Machine Learning Research}, 15(1):3563--3593,
%  2014.
%  
%  \end{thebibliography}		

%	
	\end{document}